\newtheorem{theorem}{Theorem}
\newtheorem{lemma}[theorem]{Lemma}
\newtheorem{corollary}[theorem]{Corollary}
\newtheorem{claim}[theorem]{Claim}
\newtheorem{observation}[theorem]{Observation}
\theoremstyle{definition}
\newtheorem{remark}[theorem]{Remark}
\newenvironment{keywords}{\par\medskip\noindent\textbf{Keywords: }\ignorespaces}{\par\medskip}
\newcommand{\w}{\omega}
\newcommand{\domMLsum}{\textsc{dominant-multilayer-aggregated}}
\newcommand{\domunsum}{\textsc{dominant-uncertain-aggregated}}
\newcommand{\HAT}{\textsc{hat}}
\newcommand{\problemtitle}[1]{\gdef\@problemtitle{#1}}% Store problem title
\newcommand{\probleminput}[1]{\gdef\@probleminput{#1}}% Store problem input
\newcommand{\problemquestion}[1]{\gdef\@problemquestion{#1}}% Store problem question
  \par\addvspace{.5\baselineskip}
  \par\addvspace{.5\baselineskip}
\newcommand{\ONEpopunsum}{\textsc{ha-pop-uncertain-aggregated}}
\newcommand{\ONEpopMLsum}{\textsc{ha-pop-multilayer-aggregated}}
\newcommand{\ONETpopMLsum}{\textsc{hat-pop-multilayer-aggregated}}
\newcommand{\ha}{\textsc{ha}}
\newcommand{\quota}{\textbf{q}}
\newcommand{\vote}{\textit{vote}}
\newcommand{\Dvote}{\Delta}
\newcommand{\domrob}{\textsc{dominant-robust}}
\newcommand{\domML}{\textsc{dominant-multilayer}}
\newcommand{\domun}{\textsc{dominant-uncertain}}
\newcommand{\poprob}{\textsc{popular-robust}}
\newcommand{\ONEpoprob}{\textsc{ha-popular-robust}}
\newcommand{\popMLsum}{\textsc{popular-Multilayer-aggregated}}
\newcommand{\popunsum}{\textsc{popular-uncertain-aggregated}}
\newcommand{\popML}{\textsc{popular-multilayer}}
\newcommand{\popf}{\textsc{restricted-(1,1)-forced-pm}}
\newcommand{\ONEpopML}{\textsc{ha-popular-multilayer}}
\newcommand{\ONEpopun}{\textsc{ha-popular-uncertain}}
\newcommand{\popties}{\textsc{popular-matching-with-ties}}
\newcommand{\pop}{\textsc{pm}}
\newcommand{\popun}{\textsc{popular-uncertain}}
\newenvironment{claimproof}{\par\noindent\underline{Proof:}}{\leavevmode\unskip\penalty9999 \hbox{}\nobreak\hfill\quad\hbox{$\blacksquare$}}
\title{Popular Matchings under Preference Variation and an Algorithm for Popular Common Bases with Integral Comparison Margins}
\author{Gergely Csáji}
\date{}
\begin{document}

\maketitle
\begin{abstract}
Preference information in matching markets may be incomplete, criterion-dependent, or noisy. We study popular and dominant matchings under four forms of preference variation: independent uncertainty, multilayer profiles, bounded swap perturbations, and aggregation across profiles. In one-sided markets, we show that all four models admit polynomial-time algorithms for finding a matching that is popular in every relevant realization, or popular with respect to the aggregate comparison in the aggregation model. The aggregate result follows from our main optimization contribution: a pseudo-polynomial extension of the primal--dual level algorithm for popular common bases from partial-order preferences to bounded integral skew-symmetric comparison margins.
We further extend our polynomial-time algorithms for one-sided markets with ties.

In two-sided markets, the existence problem for popular matchings is NP-hard in all four models, whereas dominant matchings remain tractable under uncertainty and bounded swap perturbations. 

\end{abstract}

\begin{keywords}
popular matching, dominant matching, house allocation, preference uncertainty, multilayer preferences, robust matching, preference aggregation, ties, matroids
\end{keywords}

\section{Introduction}

Matching models are used to allocate indivisible resources and to form bilateral relationships in applications such as school choice, residency assignment and labor markets. Their outcomes depend on ordinal preference information that is often incomplete, criterion-dependent, or noisy. A market designer may have to commit to an assignment before the relevant preferences are fully resolved, or may wish to choose one outcome that performs well across several scenarios. This paper studies how the representation of such preference variation affects the existence and computation of collectively acceptable matchings.

Our collective criterion is \emph{popularity}, introduced by G\"ardenfors~\cite{gardenfors1975match}. Given two matchings, each agent votes for the matching that gives that agent the better partner, and a matching is popular if it never loses a pairwise majority comparison. Popularity is defined in both two-sided matching markets and one-sided house-allocation markets. In two-sided markets every stable matching is popular, while popular matchings can be larger than stable ones~\cite{am2018good}. In one-sided markets a popular matching need not exist, but existence under one preference profile is decidable in polynomial time~\cite{abraham2007popular}. We also study \emph{dominant matchings}, which are popular matchings that strictly defeat every larger matching~\cite{kavitha2014size}.

We consider four ways in which preferences may vary. In the \emph{uncertainty model}, each agent has an independently specified set of possible preference lists, and one seeks a matching that is popular for every resulting realization. This compact representation may encode exponentially many complete profiles. In the \emph{multilayer model}, a finite list of complete profiles is given explicitly, and the matching must be popular in every layer. In the \emph{robust model}, each reported list may be changed by a bounded number of adjacent swaps, and the outcome must remain popular under all allowed perturbations. Finally, in the \emph{aggregate model}, votes are summed across the listed profiles before two matchings are compared. The first three criteria are universal and therefore conservative; aggregation allows a sufficiently strong advantage in one scenario to compensate for a loss in another.

The aggregate model leads to the main optimization contribution of the paper. Kavitha et al.~\cite{yu2023arborescence} developed a primal--dual level algorithm for a popular common base in the intersection of a partition matroid and an arbitrary matroid when each partition class has a partial-order preference. Aggregating several profiles produces integral pairwise margins that may have different magnitudes and may be cyclic, so the partial-order algorithm does not apply directly. We formulate the \emph{integral-margin popular common base problem} and extend the dual certificates and level algorithm to bounded integral skew-symmetric comparison margins. The resulting algorithm is pseudo-polynomial in the margin bound. It is polynomial for the aggregate matching instances considered here because the profiles are explicitly listed and the maximum margin is at most their number.

\subsection{Contributions and main results}

Our results identify a sharp distinction between one-sided and two-sided markets and, within one-sided markets, between universal and aggregate representations of preference variation.

\begin{enumerate}[leftmargin=*,label=\textnormal{(\roman*)}]
\item \textbf{Integral-margin popular common bases.}
We give a self-contained extension of the popular-common-base framework of Kavitha et al.~\cite{yu2023arborescence}. For bounded integral skew-symmetric margins, we prove a chain certificate with bounded multiplicities and adapt the level algorithm. The algorithm is polynomial in the numerical margin bound. Aggregate house allocation, including weak preferences, reduces directly to this general problem; hence a sum-popular matching can be found in polynomial time for any explicitly listed number of profiles.

\item \textbf{One-sided markets under universal variation.}
For strict preferences, certainly popular matchings can be computed in polynomial time in both the uncertainty and multilayer models. In the robust model, we give a polynomial-size reduction to the uncertain model, so the same algorithm applies. We extend the tractability results to weak preferences. The weak-preference algorithm combines simultaneous maximum-cardinality conditions in the first-choice graphs with a directed reachability characterization of houses that can serve as pseudo-second choices across all relevant profiles.

\item \textbf{Two-sided popular and dominant matchings.}
For popular matchings, existence is NP-complete in each of the uncertainty, multilayer, robust, and aggregate models, even under the restrictions stated in Table~\ref{tab:Results}. Verification remains polynomial-time solvable. For dominant matchings, uncertainty and robustness are polynomial-time solvable through their connection to stable matchings, whereas multilayer and aggregate existence are NP-complete.

\item \textbf{Robust stability.}
For completeness, we also resolve the independently bounded robust-stability question posed by Chen, Skowron, and Sorge~\cite{chen2021matchings}: a stable matching that remains stable when every agent may perform up to the prescribed number of adjacent swaps can be found in polynomial time.
\end{enumerate}

Table~\ref{tab:Results} summarizes the complexity classification.

\begin{table}[H]
\centering
\small
\setlength{\tabcolsep}{4pt}
\begin{tabularx}{\textwidth}{@{}lXXX@{}}
\toprule
Model & One-sided popular & Two-sided popular & Dominant \\
\midrule
Uncertainty
& P$^\ast$ (Thm.~\ref{thm:onesided-popun-P})
& NP-c for $k\ge2$ (Thm.~\ref{thm:popML})
& P (Thm.~\ref{thm:domun}) \\
Multilayer
& P$^\ast$ (Thm.~\ref{thm:onesided-popun-P})
& NP-c for $k\ge2$ (Thm.~\ref{thm:popML})
& NP-c for $k\ge2$ (Thm.~\ref{thm:domML}) \\
Robust
& P (Cor.~\ref{cor:robust-red}; Thm.~\ref{thm:onesided-popun-P})
& NP-c for $k\ge1$ (Thm.~\ref{thm:robustpop})
& P (Cor.~\ref{cor:robust-red};  Thm.~\ref{thm:domun}) \\
Aggregate
& P$^\ast$ for all $k$ (Thm.~\ref{thm:ONEsidedsum-general})
& NP-c for $k\ge2$ (Thm.~\ref{thm:popMLsum})
& NP-c for $k\ge2$ (Thm.~\ref{thm:popMLsum}) \\
\bottomrule
\end{tabularx}
\caption{Summary of the complexity results. P denotes polynomial-time solvability and NP-c denotes NP-completeness. A star means that the one-sided result also holds with ties. Here $k$ denotes the relevant uncertainty parameter: the number of profiles or possible lists, or the adjacent-swap radius.}
\label{tab:Results}
\end{table}

\subsection{Related work}

Popularity was introduced in a paper of Gardenfors back in 1975 \cite{gardenfors1975match} for the two-sided matching markets. He showed that every stable matching is popular (in fact they are minimum size popular matchings), hence such matchings always exist and can be found efficiently.  
Abraham et al. \cite{abraham2007popular} introduced popularity for the one-sided preference model, also referred to as the House Allocation (\textsc{ha}) model. They provided an algorithm for finding a maximum size popular matching in the model and also a characterization of popular matchings. 

Huang and Kavitha \cite{huang2013popular} have shown that maximum-size popular matchings can be found in polynomial time can be found in two-sided markets too. Furthermore, Cseh and Kavitha \cite{cseh2018popular} have shown that a nonempty subset of maximum size popular matchings, called dominant matchings of an instance $I$ are in correspondence with the stable matchings of an other instance $I'$. 

In the roommates model, where there is only a single class of agents and the underlying graph is not necessarily bipartite, finding a popular matching is NP-hard (Faenza et al. \cite{faenza2019popular}, Gupta et al. \cite{gupta2021popular}). The problem also becomes NP-hard for another generalization, where the market is still two-sided, but ties are allowed in the preference lists of the agents (Cseh \cite{cseh2017onesidedties}).  We therefore study ties only in the one-sided model, where the single-profile problem remains tractable.

Models, where the preferences of agents can vary or may be uncertain have been studied in many papers in recent literature. 
Uncertain preferences have been studied by Aziz et al. (\cite{aziz2019paretoUn,aziz2022stable,aziz2020stable}). In \cite{aziz2022stable} they studied both a Lottery model, where there was a probability distribution for each different agent over some possible preferences and a Joint Probability model, where the probability distribution was over only certain preference profiles. In both frameworks, they studied when it is possible to find a matching that is stable with probability one (that is, stable in all possible preferences).  In \cite{aziz2019paretoUn}, they studied similar questions for one-sided preferences and Pareto optimality as a condition instead of stability. Our work is very similar to this, as we consider many of the same problems for one-sided and two-sided popularity as they considered for Pareto-optimality and stability.

Miyazaki and Okamoto \cite{miyazaki2017jointly} studied the jointly stable matching problem, where the goal is to find stable matchings across multiple instances with the same agents and contracts, but different preferences, which is equivalent to finding a matching that is stable with probability 1, if each possible preference profile has some positive probability. They strengthened the result of Aziz et al. \cite{aziz2022stable} and showed that it is NP-hard even for two preference profiles. 

Multilayer preferences in bipartite matching markets have been studied in Chen et al. \cite{chen2018stableML}. They defined $\alpha$-global and $\alpha$-pair stability. The former means we aim to find a matching that is stable in at least $\alpha$ layers and the latter means that each edge can block in at most $\alpha$ layers. The bipartite graph of the instances in each layer is the same, hence if $\alpha$ is the number of layers, then $\alpha$-global stability becomes equivalent to deciding whether there is a stable matching across multiple preference profiles. 
They proved that deciding whether an $\alpha$-global or $\alpha$-pair stable matching exists is NP-hard even with strict preferences. They also showed that for $\alpha$-global stability, the problem becomes polynomial solvable, if one side of the agents have the same preferences in all layers. Bentert et al. \cite{bentert2023MLapproval} studied some similar and new notions in Multilayer preferences, where in each layer the preferences of the agents are approval based, i.e. their preference list is a single tie over some acceptable partners. 

Chen, Skowron, and Sorge \cite{chen2021matchings} studied a notion of $d$-globally-robust-stability. This means that the agents may change the position of another agent in their preference list, by moving them up or down such that the number of such swaps in the preference profile is at most $d$. They have shown that finding $d$-globally-robust stable matching can be done in polynomial time. They posed as an open question, whether finding a $d$-robust stable matching, where each agent is allowed to make $d$ swaps in their preference lists independently is solvable in polynomial time. We answer this question positively. 

 The foundation for our aggregate algorithm is the popular-common-base framework of Kavitha et al.~\cite{yu2023arborescence}, which generalizes popular arborescences and several matching problems~\cite{kavitha2025populararb,kavitha2022popularassign}. Our contribution is to permit bounded integral skew-symmetric margins rather than unit comparisons induced by partial orders.

Finally, independently of our work, Bullinger et al.~\cite{bullinger2024robust} also studied similar problems for popular matchings with two-sided preferences. They considered the problem of finding a matching that is popular in two different instances, which only differ in the preferences of the agents. Their model coincides with our multilayer model with two layers. They provided an algorithm for the case, where the two preference profiles only differ at a single agent and showed NP-hardness in general, even if only four agents have different preferences. The advantage of our reduction is that it only uses ties on one side.

\subsection{Paper Structure}
We start in Section~\ref{sec:prelim} with the sufficient background and preliminaries of our studied questions.
Section~\ref{sec:onesided} considers the one-sided preference model. Section~\ref{sec:ones-unc-multi} provides the polynomial algorithms for the uncertain and multilayer model. Section~\ref{subsec:one-sided-ties} extends these results to weak preferences. Section~\ref{subsec:integral-popular-common-base} introduces the bounded skew-symmetric comparison generalization of the popular common-base algorithm of Kavitha et al.~\cite{kavitha2025populararb} that we use to solve the aggregated model in Section~\ref{subsec:aggr}.
Then, we consider the two-sided preference model in Section~\ref{sec:twosided}. We show in Section~\ref{subsec:verif} that verification is efficiently solvable. For dominant matchings. Section~\ref{subsec:dom} shows that we can find certainly dominant matchings in the uncertain and robust models efficiently, when they exist, while the existence question is NP-hard for the multilayer and aggregated model. Section~\ref{subsec:pop} shows that for popular matchings, all four models lead to NP-hard questions. We conclude and summarize in Section~\ref{sec:concl}.

\section{Preliminaries}\label{sec:prelim}

In this section we describe the basic notions of two-sided and one-sided matching markets and our studied problems.
\subsection{Two-sided markets}
In \textit{two-sided markets}, we are given a bipartite graph $G=(A,B;E)$, where the vertices represent the agents and the edges represent the possible contracts or acceptability relations. Furthermore, for each agent $u\in A\cup B$, we are given a strict linear order (called a \textit{preference list}) $\succ_u$ over $V(u)$, where $V(u)$ denotes the neighbors of $u$ in $G$, i.e., the acceptable partners of $u$. We call the set of these preferences the \textit{preference profile} of the instance and denote it by $L$, so $L=\{ \succ_u\mid u\in A\cup B\}$.  

We say that $M\subseteq E$ is a \textit{matching}, if $|M(u)|\le 1$ for all $u\in A\cup B$, where $M(u) \subseteq V(u)$ denotes the partner(s) of $u$ in $M$. If $u$ does not have any partners in $M$, then $M(u)=\emptyset$, which is considered strictly worse than any acceptable partner $v\in V(u)$. Depending on the context, we may also refer by $M(u)$ to the edge adjacent to $u$ in $M$.

\textbf{Stability.}
In two-sided matching markets, the most well-known optimality concept is stability. Let $M$ be a matching. We say that an edge $(a,b)$ \textit{blocks} $M$, if $b\succ_a M(a)$ and $a\succ_b M(b)$ both hold. We say that $M$ is \textit{stable}, if no edge blocks $M$. A well-known result is that a stable matching always exists and can be found in linear time~\cite{GS62}. Furthermore, a so-called \textit{$A$-optimal stable matching} also exists. In this $A$-optimal stable matching, each agent $a\in A$ obtains his best possible partner in any stable matching simultaneously.

\textbf{Popularity.} 
To define the popularity of a matching, take two arbitrary matchings $M$ and $N$. Then, we let each agent $u\in A\cup B$ cast a vote $\vote^L_u (M,N)$ as follows. If $M(u)\succ_u N(u)$, then $\vote^L_u(M,N)=+1$, if $M(u)=N(u)$, then $\vote^L_u(M,N)=0$ and if $N(u)\succ_u M(u)$, then $\vote^L_u (M,N)=-1$. We sum up these votes to obtain $\Delta^L (M,N) = \sum_{u\in A\cup B}\vote^L_u(M,N)$. 

We say that a matching $M$ is \textit{popular} (with the preference profile $L$), if for any other matching $N$ it holds that $\Dvote^L (M,N)\ge 0$. In other words, a matching $M$ is popular if it does not lose any head-to-head comparison with any other matching. Otherwise, if $\Dvote^L (M,N)<0$, we say that $N$ \textit{dominates} $M$.

\textbf{Dominant matchings.}
We say that a matching $M$ is \textit{dominant}, if (i) $M$ is popular and (ii) for any other matching $N$ with $|N|>|M|$ it holds that $\Dvote^L (M,N)>0$, or equivalently $\Dvote^L (N,M)<0$. Dominant matchings are a nonempty subset of maximum-size popular matchings and can be found in polynomial time. The motivation is that, between two matchings receiving equally many favorable votes, the one that matches more agents may still be preferable; dominance incorporates this cardinality consideration.

We will denote the problem of deciding whether an instance $I$ of a two-sided matching market admits the popular matching by \pop. 

%\begin{problem}
 % \problemtitle{\pop}
%  \probleminput{An instance $I=(G,L)$ with bipartite graph $G=(A,B;E)$ and preferences $L=\{ \succ_u \mid u\in A\cup B\}$. }
 % \problemquestion{Is there a popular matching $M$ in $I$?}
%\end{problem}

Even though \pop\ always admits a solution that can be found in polynomial time~\cite{gardenfors1975match}, the problem becomes NP-hard if we want to decide whether there is a popular matching that contains a given edge $e$ and covers a given agent $z$. For our main hardness result, we will use the following restricted, but still NP-hard variant of this problem. 

\begin{problem}
  \problemtitle{\popf}
  \probleminput{An instance $I$ of \pop; an edge $e=(x,y)\in E$ with $x\in A,y\in B$, such that $V(x)=\{ a,y \}$, $V(a)=\{ x\}$ and $x,y$ are first in each other's preference list; and a vertex $z\in B \setminus \{ a,y\}$. }
  \problemquestion{Is there a popular matching $M$ in $I$ that contains $(x,y)$ and covers $z$?}
\end{problem}

The hardness of this problem was shown in \cite{forcedpop}. They did not state all these restrictions in their theorem, but their construction does satisfy them. 

\begin{theorem}[Faenza, Powers and Zhang \cite{forcedpop}]
    \popf\ is NP-complete.
\end{theorem}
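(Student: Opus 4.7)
The plan is to treat this as a verification task rather than a from-scratch hardness proof: reuse the NP-hardness reduction of Faenza, Powers and Zhang for the general forced popular matching problem (contain a given edge and cover a given agent) and check, gadget by gadget, that their construction already realises the extra structural demands. First I would recall their reduction and locate in it the distinguished edge $e$; call its endpoints $x \in A$ and $y \in B$. I would inspect the local neighbourhood of $x$ in the constructed graph and the top of its preference list, as well as the top of $y$'s list, and confirm that $y$ is at the top of $x$'s list and vice versa. I would then similarly inspect the partner vertex playing the role of $a$ (a pendant attached to $x$) and verify that its only acceptable partner is $x$. Finally I would check that the forced-covered vertex $z$ is an ordinary agent in $B$, distinct from the pendant $a$ and from $y$.

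If some property is not already satisfied in their construction — for instance, if $x$ has additional neighbours besides $y$ and $a$, or if the pendant vertex $a$ does not appear — I would splice in a small patch. Concretely, I would replace the role of $x$ by a fresh vertex $x'$ with $V(x') = \{a,y\}$, $y \succ_{x'} a$, introduce the pendant $a$ with $V(a) = \{x'\}$, reroute all other edges previously incident to $x$ through a dummy chain whose only popular configuration mimics the original $x$'s role, and place $x'$ at the top of $y$'s list. The correctness of this local modification follows because $a$, having only $x'$ as an option, can only be matched to $x'$ or be unmatched, so $x'$ has exactly two sensible choices: match $y$ (leaving $a$ unmatched) or match $a$ (leaving $y$ free for a different partner). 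Since $y \succ_{x'} a$ and $x'$ is top of $y$, any popular matching that does not take $(x',y)$ would be beaten by swapping in $(x',y)$; hence forcing the edge is essentially automatic, and the coverage condition on $z$ is imported unchanged from the original reduction.

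The main obstacle I anticipate is soundness, i.e.\ ensuring that the modifications neither introduce spurious popular matchings of the enlarged instance nor suppress genuine ones of the original. To handle this I would work with the standard characterisation of popular matchings via labels (each edge carries a label depending on whether it joins two top-choices, etc.) and show that the patch is locally neutral: any popular matching of the modified instance restricts to a popular matching of the original, and conversely any popular matching of the original can be extended uniquely to the modified instance by matching $a$ to $x'$ exactly when $(x',y)$ is not used. Once this bijection is in place, the restricted variant \popf{} inherits NP-hardness from the unrestricted one, and membership in NP is immediate since popularity of a matching can be verified in polynomial time; thus the problem is NP-complete.
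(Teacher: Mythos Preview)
Your primary plan---inspect the Faenza--Powers--Zhang construction and verify, gadget by gadget, that the forced edge $e=(x,y)$ and the forced vertex $z$ already satisfy the listed structural restrictions---is exactly what the paper does: it simply asserts that ``their construction does satisfy them'' and gives no further argument.

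Your fallback patch, however, contains a genuine error. You claim that since $y\succ_{x'}a$ and $x'$ is first for $y$, ``any popular matching that does not take $(x',y)$ would be beaten by swapping in $(x',y)$; hence forcing the edge is essentially automatic.'' This is false. If a popular matching $M$ omits $(x',y)$ then, because $x'$ and $y$ are mutual bests, both are covered, so $(x',a)\in M$ and $(y,b_i)\in M$ for some $b_i$; swapping in $(x',y)$ improves $x'$ and $y$ but leaves both $a$ and $b_i$ unmatched, for a net vote of zero---$M$ is not beaten. More seriously, if the forced edge really were automatic in the patched instance, the constraint ``contain $(x',y)$'' would become vacuous, so your patch would reduce not to \popf{} but merely to ``does some popular matching cover $z$?''---and the hardness you are importing rests precisely on the forced-edge constraint being non-trivial. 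The paper's own later reduction (Theorem~\ref{thm:popML}) exploits exactly this two-way behaviour at $(x,y)$ via the $d_a,d_b$ gadget. So the patch as sketched would not preserve hardness; fortunately, per the paper's assertion, no patch is needed.
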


We proceed by defining the problems related to two-sided markets that we study. For the sake of uniformity, use the parameter $k$ for all four models to measure the level of uncertainty. Note however, that this measure differs in the models, hence $k$ refers to different parameters in them.  

We study four different models. In the first, Multilayer Model, we assume that we are given multiple instances of \pop, $I_1,I_2,\dots,I_k$, with the same underlying bipartite graph, but with different preference profiles $L_1,L_2,\dots, L_k$. Our goal is to decide whether there is a matching that is popular  in all preference profiles. We call such a matching a \textit{certainly popular matching}. This model corresponds to the Multilayer preference model in Chen et al. \cite{chen2018stableML}, defined for stability instead of popularity with layers $L_1,L_2,\dots,L_k$, hence we call it \popML. It also corresponds to finding a certainly stable matching in the joint probability model in Aziz et al. \cite{aziz2022stable}. 

\begin{problem}
  \problemtitle{\popML}
  \probleminput{A number $k$, a bipartite graph $G=(A,B;E)$, and preference profiles $L_i=\{ \succ_u^i\mid u\in A\cup B\}$ for $i\in [k]$.}
  \problemquestion{Is there a matching $M$ that is certainly popular?}
\end{problem}

The problem \domML\ is defined analogously, but instead of a certainly popular matching, we are looking for a certainly dominant matching, that is, a matching which is dominant in all of the preference profiles.

We proceed by defining the Uncertainty Model. Instead of one bipartite graph with multiple preference profiles, now we assume that for each agent $u$, there is a set of possible preference lists $P_u= \{ \succ_u^1,\dots,\succ_u^{k_u}\}$. This can be interpreted as each agent having uncertain preferences: we know some information about them, but not everything. In this model, we aim to find a matching that is popular for every preference profile $L\in \prod_{u\in A\cup B}P_u$, that is, for every choice $\succ_u\in P_u$ of the individual preference of each agent $u$. We call such a matching a \textit{certainly popular matching}. This corresponds to the lottery model in Aziz et al.~\cite{aziz2022stable} for stability. 

\begin{problem}
  \problemtitle{\popun}
  \probleminput{A number $k$ and a bipartite graph $G=(A,B;E)$, with a set of preference lists $P_u$ for each $u\in A\cup B$ with $|P_u|\le k$.}
  \problemquestion{Is there a matching $M$ that is certainly popular?}
\end{problem}

The problem \domun\ is defined analogously, but instead of a certainly popular matching, we are looking for a certainly dominant matching, that is, a matching which is dominant in all possible preference profiles. 

The next model is the Robust Model. Let $k\ge 0$. Here, we have a fixed $\succ_u$ preference list for each $u\in A\cup B$ and we know that agent $u$'s real preference list differs from $\succ_u$ by at most $k$ swaps, where a swap switches the order of two adjacent entries in the preference list. For example, $a_2\succ a_3\succ a_1$ can be obtained from $a_1\succ a_2\succ a_3$ with two consecutive swaps $(a_1,a_2)$ and $(a_1,a_3)$. We say that agent $v$ is \textit{$k$-better ($k$-worse)} than $v'$ for $u$, if $v$ is better (worse) than $v'$ even after $k$ arbitrary swaps in $u$'s preference list, respectively. For two preference lists $\succ_u^1,\succ_u^2$, we let $d^s(\succ_u^1,\succ_u^2)$ be the minimum number of swaps required to reach $\succ_u^2$ from $\succ_u^1$. We remark that this is also known as the Kendall--Tau distance of the two lists. We say that a matching $M$ is $k$-\textit{robust popular}, if $M$ is popular in any preference profile $L=\{ \succ_u^L\mid u\in A\cup B\}$, where $d^s(\succ_u,\succ_u^L)\le k$ for all $u\in A\cup B$.% It is not hard to see that this problem can be seen as a special case of \popun\, where the possible preference lists for agent $u$ are $\succ_u^1,\dots,\succ_u^{|V(u)|}$, where $\succ_u^i$ is the preference list we get by moving the $i$-th entry in $\succ_u$ down with $k$ places in the preference list.

\begin{problem}
  \problemtitle{\poprob}
  \probleminput{A number $k$ and a bipartite graph $G=(A,B;E)$ with preference lists $\succ_u$, $u\in A\cup B$.}
  \problemquestion{Is there a matching $M$ that is $k$-robust popular?}
\end{problem}

We obtain the problem \domrob\ similarly, where we are looking for a $k$-robust dominant matching instead, that is, a matching which is dominant even if each agent may do at most $k$ swaps in his preference list.
 
Finally, we define a new Aggregation Model, where instead of aiming to find matchings that are popular in multiple instances, the goal is to find a matching that is popular after their votes have been aggregated. We define this for both the uncertainty and the multilayer model as follows. For the multilayer model, let $\Delta^{ML}(M,N)=\sum_{i=1}^k\Delta^{L_i}(M,N)$. We say that a matching $M$ is \textit{sum-popular} if $\Delta^{ML}(M,N)\ge 0$ for every matching $N$. We say that $M$ is \textit{sum-dominant} if it is (i) sum-popular and (ii) for every matching $N$ with $|N|>|M|$, it holds that $\Delta^{ML}(M,N)>0$.
Hence, we obtain the following problems. 
\begin{problem}
  \problemtitle{\popMLsum}
  \probleminput{A number $k$, a bipartite graph $G=(A,B;E)$, and preference profiles $L_i=\{ \succ_u^i\mid u\in A\cup B\} $, $i=1,\dots,k$.}
  \problemquestion{Is there a matching $M$ that is sum-popular?}
\end{problem}
Also, \domMLsum\ is the variant where we are looking for a sum-dominant matching instead.

In the uncertainty model, we first sum each agent's votes over that agent's possible preference lists and then sum over the agents. Hence, we set $\Delta^{un}(M,N) = \sum_{u\in A\cup B}\sum_{\succ_u \in P_u}\vote^{\succ_u}_u(M,N)$. For fairness, we assume here that $|P_u|=k$ for all $u\in A\cup B$, so each agent casts the same number of votes; repeated preference lists are allowed in $P_u$. Sum-popular and sum-dominant matchings are defined analogously using $\Delta^{un}$ instead of $\Delta^{ML}$.

\begin{problem}
  \problemtitle{\popunsum}
  \probleminput{A number $k$, a bipartite graph $G=(A,B;E)$, and a set of preference lists $P_u$ for each $u\in A\cup B$ with $|P_u|=k$.}
  \problemquestion{Is there a matching $M$ that is sum-popular?}
\end{problem}
\domunsum\ can be defined analogously.

Index the lists in every $P_u$ arbitrarily and let $L_i$ contain the $i$-th list of every agent. Then $\Delta^{un}(M,N)=\Delta^{ML}(M,N)$. Conversely, taking $P_u=\{\succ_u^i:i\in[k]\}$ from the profiles $L_1,\ldots,L_k$ gives the same equality. Thus \popMLsum\ and \popunsum, as well as \domMLsum\ and \domunsum, are equivalent problems.

\subsection{One-sided markets}

\textit{One-sided markets}, or (capacitated)-\textit{House Allocation (\ha) markets} as often called in the literature are similar to two-sided markets. Here too, we are given a bipartite graph $G=(A,B;E)$. However, now only the agents $a\in A$ have preference lists $\succ_a$ over their acceptable houses $V(a)\subseteq B$, so $L=\{ \succ_a\mid a\in A\}$. We also allow the agents in $B$, who we call houses from now on to have a positive integer capacity $\quota [b]$. 

In this model, we say that $M\subseteq E$ is a \textit{matching}, if it holds that $|M(a)|\le 1$ for any agents $a\in A$ and $|M(b)|\le \quota [b]$ for any house $b\in B$, where $M(a)$ and $M(b)$ denote the house(s) allocated to $a$ and the agent(s) allocated to $b$ in $M$ respectively. 

Popularity for the \ha\ model is defined analogously, with the only difference being that only the agents in $A$ participate in the voting procedure, so $\Delta^L (M,N)$ is defined as $\sum_{a\in A}\vote^L_a(M,N)$. We say that a matching $M$ is \textit{popular}, if $\Delta^L (M,N)\ge 0$ for any matching $N$. 

In the \ha\ model, popular matchings may fail to exist, but it is possible to find a maximum size popular matching in polynomial time, if one exists~\cite{abraham2007popular}. 

We proceed by defining the analogous variants of \popML, \popun, \poprob, \popunsum\ and \popMLsum. 

We start with \popML. Here similarly, we are given a bipartite graph $G=(A,B;E)$ representing the market and a set of preference profiles $L_1=\{ \succ_a^1\mid a\in A\}, \dots, L_k = \{ \succ_a^k\mid a\in A\}$. The question is whether there exists a matching $M$ that is popular in all of these profiles (or layers). Again, we call such a matching a \textit{certainly popular} matching.

\begin{problem}
  \problemtitle{\ONEpopML}
  \probleminput{A number $k$, a bipartite graph $G=(A,B;E)$, capacities $\quota [b]$ for each house $b\in B$ and one-sided preference profiles $L_i=\{ \succ_a^i\mid a\in A \}$, for $i\in [k]$. }
  \problemquestion{Is there a matching $M$ that is certainly popular?}
\end{problem}

For the uncertainty model, there is a set of possible preference lists $P_a=\{ \succ_a^i\mid i=1,\dots,k_a\}$ for each agent $a$, and the question is whether there exists a matching that is popular for any possible choices of $\succ_a\in P_a$, for $a\in A$. 

\begin{problem}
  \problemtitle{\ONEpopun}
  \probleminput{ A number $k$, a bipartite graph $G=(A,B;E)$, capacities $\quota [b]$ for each house $b\in B$ and a set of preference lists $P_a$ for each $a\in A$ with $|P_a|\le k$.}
  \problemquestion{Is there a matching $M$ that is certainly popular?}
\end{problem}

The notion of $k$-robustness can also be defined similarly. Suppose we are given $\succ_a$ preference lists for each $a\in A$. A matching $M$ is said to be $k$-\textit{robust popular}, if it is popular in any preference profile $L=\{\succ_a^L\mid a\in A\}$, where $d^s(\succ_a,\succ_a^L)\le k$ for all $a\in A$.

\begin{problem}
  \problemtitle{\ONEpoprob}
  \probleminput{A number $k$, a bipartite graph $G=(A,B;E)$ with preference lists $\succ_a$ for $a\in A$ and capacities $\quota [b]$ for $b\in B$.}
  \problemquestion{Is there a matching $M$ that is $k$-robust popular?}
\end{problem}

By defining $\Delta^{ML}$ and $\Delta^{un}$ analogously as in two-sided markets, we define the problems \ONEpopMLsum\ and \ONEpopunsum. For the sake of fairness, we again assume here that $|P_a|=k$ for all $a\in A$, so each agent casts the same number of votes and hence we allow an agent to have the same preference lists multiple times in $P_a$. 

\begin{problem}
  \problemtitle{\ONEpopMLsum}
  \probleminput{A number $k$, a bipartite graph $G=(A,B;E)$, capacities $\quota[b]$ for each $b\in B$, and preference profiles $L_i=\{ \succ_a^i\mid a\in A\} $, $i\in [k]$.}
  \problemquestion{Is there a matching $M$ that is sum-popular?}
\end{problem}

%In the uncertainty model, we sum the votes for each agents first, amongst their possible preference lists and then sum the votes of the individual agents. Hence, we set $\Delta^{un}(M,N) = \sum_{u\in A\cup B}\sum_{\succ_u \in P_u}\vote_{\succ_u}(M,N,\succ)$. 

\begin{problem}
  \problemtitle{\ONEpopunsum}
  \probleminput{A number $k$, a bipartite graph $G=(A,B;E)$, capacities $\quota[b]$ for each $b\in B$, and a set of preference lists $P_a$ for each $a\in A$ with $|P_a|=k$.}
  \problemquestion{Is there a matching $M$ that is sum-popular?}
\end{problem}

Index the lists in every $P_a$ arbitrarily and let $L_i$ contain the $i$-th list of every applicant. Then $\Delta^{un}(M,N)=\Delta^{ML}(M,N)$. Conversely, taking $P_a=\{\succ_a^i:i\in[k]\}$ from the profiles $L_1,\ldots,L_k$ gives the same equality. Thus \ONEpopMLsum\ and \ONEpopunsum\ are equivalent problems too.

\paragraph{Weak preferences.}
We also study the extension in which an applicant's preference list is a weak order, denoted by $\succeq_a$, and may therefore contain ties. An applicant votes only when one assignment is strictly preferred to the other and abstains when the two houses are tied. We denote this house-allocation-with-ties model by \HAT, and use the prefix \textsc{hat} for its uncertainty, multilayer, robust and aggregated variants.

\subsection{Matroids and common bases}

A \emph{matroid} is a pair $\mathcal M=(E,\mathcal I)$ consisting of a finite ground set $E$ and a nonempty family $\mathcal I\subseteq 2^E$ of \emph{independent sets} satisfying two axioms: every subset of an independent set is independent, and whenever $X,Y\in\mathcal I$ with $|X|<|Y|$, some $e\in Y\setminus X$ satisfies $X+e\in\mathcal I$. A maximal independent set is a \emph{base}; all bases have the same cardinality, called the rank of $\mathcal M$. More generally, the rank function is
\[
    \rho(S)=\max\{|I|:I\subseteq S,\ I\in\mathcal I\}
    \qquad(S\subseteq E).
\]
For $S\subseteq E$, its span, or closure, is
\[
    \operatorname{span}_{\mathcal M}(S)
      =\{e\in E:\rho(S+e)=\rho(S)\}.
\]
A set $S$ is a \emph{flat} if $\operatorname{span}_{\mathcal M}(S)=S$.

Given a partition $E=\mathbin{\dot\bigcup}_{a\in A}E_a$, the associated \emph{partition matroid} consists of the sets $I\subseteq E$ satisfying $|I\cap E_a|\le1$ for every $a\in A$. A set that is independent in two matroids on the same ground set is a \emph{common independent set}; a set that is a base of both is a \emph{common base}. The weighted matroid-intersection problem asks for a maximum-weight common independent set and is solvable in polynomial time when the matroids are given by independence oracles. Section~\ref{subsec:integral-popular-common-base} uses these notions to state and solve the integral-margin popular common base problem.

\subsection{Reduction of the Robust Model to the Uncertainty Model}
\label{subsec:robust-to-uncertainty}

We show that the problem of deciding whether a $k$-robust popular
matching, or a $k$-robust dominant matching, exists reduces in polynomial
time to the corresponding problem in the uncertainty model. The
reduction applies to both one-sided and two-sided preferences.

Let $I$ be an instance of a robust popular- or dominant-matching problem,
and let $\succ_u$ be the original preference list of agent $u$. We assume
that the distance between two preference lists is the minimum number of
adjacent swaps required to transform one into the other.

For every acceptable partner $v\in V(u)$, let $\succ_u^v$ be the
preference list obtained from $\succ_u$ by moving $v$ downward by
\[
    \min\{k,|\{x\in V(u):v\succ_u x\}|\}
\]
positions, while preserving the relative order of all other agents.
Thus, $\succ_u^v$ is obtained from $\succ_u$ using at most $k$ adjacent
swaps. We define
\[
    P_u=\{\succ_u\}\cup\{\succ_u^v:v\in V(u)\}.
\]
The uncertainty instance $I'$ has the same underlying graph as $I$ and
assigns the set $P_u$ of possible preference lists to each agent $u$.

For example, if
\[
    \succ_u\;=\;v_1\succ v_2\succ v_3\succ v_4
    \qquad\text{and}\qquad k=2,
\]
then
\[
    \succ_u^{v_1}\;=\;v_2\succ v_3\succ v_1\succ v_4.
\]

We prove that a matching $M$ is $k$-robust popular in $I$ if and only if
it is certainly popular in $I'$. Fix two matchings $M$ and $N$ and an
agent $u$. The value of
\(
    \vote_{\succ'_u}(M,N)
\)
depends only on the relative order of $M(u)$ and $N(u)$ in $\succ'_u$.
If $u$ is matched in $M$, a worst preference list for $M$, among all
lists at distance at most $k$ from $\succ_u$, can therefore be obtained
by moving $M(u)$ downward by as many as $k$ positions. Moving any other
agent, or changing the relative order of agents other than $M(u)$ and
$N(u)$, cannot decrease $u$'s vote further. Consequently,
\begin{equation}
\label{eq:robust-local-reduction}
    \min_{\substack{\succ'_u\\
          d(\succ_u,\succ'_u)\le k}}
       \vote_{\succ'_u}(M,N)
    =
    \min_{\widehat{\succ}_u\in P_u}
       \vote_{\widehat{\succ}_u}(M,N).
\end{equation}
Indeed, every list in $P_u$ lies in the radius-$k$ neighborhood of
$\succ_u$, and this set contains the list $\succ_u^{M(u)}$, which attains
the minimum on the left-hand side. If $u$ is unmatched in $M$, its vote
cannot be made worse by changing the order of its acceptable partners,
and the original list $\succ_u\in P_u$ attains the minimum.

Since agents' preference perturbations are independent, summing
\eqref{eq:robust-local-reduction} over all agents gives
\[
    \min_{\substack{\succ'_u:
          d(\succ_u,\succ'_u)\le k\\
          \text{for every agent }u}}
       \vote_{\succ'}(M,N)
   =
    \min_{\substack{\widehat{\succ}_u\in P_u\\
          \text{for every agent }u}}
       \vote_{\widehat{\succ}}(M,N).
\]
It follows that $M$ defeats or ties every matching $N$ under every
radius-$k$ perturbation of the original preferences if and only if $M$
does so under every realization of the uncertainty instance $I'$.
Therefore,
\[
    M\text{ is $k$-robust popular in }I
    \quad\Longleftrightarrow\quad
    M\text{ is certainly popular in }I'.
\]

The same equivalence holds for dominant matchings. In addition to being
popular, dominance requires $M$ to defeat every larger matching. Whether
a matching is larger than $M$ depends only on cardinality and is
unaffected by the preference realization. Hence
\[
    M\text{ is $k$-robust dominant in }I
    \quad\Longleftrightarrow\quad
    M\text{ is certainly dominant in }I'.
\]

The construction is polynomial: for every agent $u$, it creates at most
$|V(u)|+1$ preference lists, each of polynomial length.

\begin{corollary}
\label{cor:robust-red}
A polynomial-time algorithm for \popun, \domun, or \ONEpopun\ yields a
polynomial-time algorithm for \poprob, \domrob, or \ONEpoprob,
respectively.
\end{corollary}

\begin{remark}
The same reduction applies to stability. A matching remains stable under
every collection of preference lists at distance at most $k$ from the
original profile if and only if it is certainly stable in the uncertainty
instance constructed above. Since certainly stable matchings can be found
in polynomial time~\cite{aziz2022stable}, this gives a polynomial-time
algorithm for finding a $k$-robust stable matching and answers the open
question posed by Chen et al.~\cite{chen2021matchings}.
\end{remark}

\section{One-sided preferences}\label{sec:onesided}

In this section we discuss our results for the \ha\ model. Section~\ref{sec:ones-unc-multi} provides the polynomial algorithms for the uncertain and multilayer model. Section~\ref{subsec:one-sided-ties} extends these results to weak preferences. Section~\ref{subsec:integral-popular-common-base} introduces the bounded skew-symmetric comparison generalization of the popular common-base algorithm of Kavitha et al.~\cite{kavitha2025populararb} that we use to solve the aggregated model in Section~\ref{subsec:aggr}.

\subsection{Strict preferences: uncertainty and multilayer models}\label{sec:ones-unc-multi}

First, we describe a characterization of popular matchings in \ha\ instances by Manlove and Sng \cite{manlove2006popular}. Let $(G,\quota, L)$ be an instance of the capacitated house allocation problem with preference profile $L$. We can add a "last-resort" house $\hat{b}$ to $B$ with capacity $|A|$ that is appended to the end of each applicant's preference list. Then, the set of popular matchings in this new instance are in one-to-one correspondence with the original ones: we just say that each unmatched agent $a$ is now at an imaginary worst house, the "last-resort". Hence, from now on we may assume that each instance of \ONEpopML, \ONEpopun\ and \ONEpopMLsum\ contains such a house. 

Very importantly though, in \ONEpopML, \ONEpopun\ and \ONEpopMLsum, the new inputs are such that this house must be always last for every agent in any possible preference list or preference profile in the input. This is without loss of generality, as this house did not exist in the original input.

Next, for each applicant $a\in A$ define the \textit{first house} $f_L(a)$ to be the house that is the most preferred by $a$. For a house $b$, we call an agent $a$ an \emph{admirer of $b$} in $L$, if $b=f_L(a)$. Denote the set of admirers of $b$ by $AD_L(b)$. 
Next, we define the \textit{pseudo-second house} $s_L(a)$ to be either $f_L(a)$, whenever $b=f_L(a)$ has at most as much admirers as its capacity $\quota [b]$, or otherwise as the house most preferred by $a$ among the houses that have less admirers than their capacity. Note that because of the last resort house, $s_L(a)$ exists. 

Make a graph $G_L=(A,B,E_L)$ by adding the edges $(a,f_L(a))$ and $(a,s_L(a))$ for each applicant $a\in A$. In the case when $f_L(a)=s_L(a)$ we only add one of the parallel edges. 
\begin{theorem}[Manlove and Sng\cite{manlove2006popular}]
\label{thm:popperf}
    A matching $M$ is popular if and only if the followng conditions hold:
    \begin{itemize}
        \item Every edge of $M$ is from $E_L$,
        \item $M$ is $A$-perfect, so it matches every $a\in A$ and
        \item every house $b\in B$ that can be saturated with admirers of $b$ only is saturated with admirers of $b$ only.
        
    \end{itemize}
\end{theorem}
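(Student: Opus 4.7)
The plan is to establish both directions of the equivalence by standard exchange arguments on matchings.

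For the \emph{sufficiency} direction, I would fix an arbitrary matching $N$ and construct an injection $\phi$ from the up-voters $U = \{a \in A : N(a) \succ_a M(a)\}$ into the down-voters $D = \{a \in A : M(a) \succ_a N(a)\}$, which immediately yields $\Delta^L(M,N) \geq 0$. For every $a \in U$, condition (i) forces $M(a) \in \{f_L(a), s_L(a)\}$, and since some house beats $M(a)$ we must have $M(a) = s_L(a) \ne f_L(a)$; the definition of $s_L(a)$ then implies that every house $a$ prefers to $s_L(a)$ — in particular $N(a)$ — satisfies $|AD_L(N(a))| \geq \quota[N(a)]$, and by condition (iii) such a saturable house is filled entirely by admirers in $M$. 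Grouping up-voters by target house, write $U_b = \{a \in U : N(a) = b\}$ for every saturable $b$. Since $|N(b)| \le \quota[b] = |M(b)|$ and no member of $U_b$ lies in $M(b)$ (otherwise $M(a) = N(a)$ and $a$ would vote $0$), I get $|M(b) \setminus N(b)| \ge |U_b|$; any $a^* \in M(b) \setminus N(b)$ is an admirer with $M(a^*) = f_L(a^*) = b$ and $N(a^*) \ne b$, hence $a^* \in D$. Picking an arbitrary injection $U_b \hookrightarrow M(b) \setminus N(b)$ for each saturable $b$ assembles into a global injection $\phi : U \to D$, because $\phi(a)$ is always matched in $M$ to $N(a)$, making the images for different $b$'s pairwise disjoint.

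For the \emph{necessity} direction I would argue by contradiction and exhibit a dominating matching whenever one of the three conditions is violated. An unmatched agent $a$ is immediately absorbed by $\hat b$, producing a matching in which only $a$'s vote changes — from $0$ to $+1$. If $M$ contains an edge $(a, b)$ with $b \notin \{f_L(a), s_L(a)\}$, the defining property of $s_L(a)$ guarantees a vacancy either at $s_L(a)$ or, via a short chain of bumps through agents currently mismatched relative to $E_L$, at some intermediate under-subscribed house; re-routing $a$ (and the chain) yields one strict improvement with no strict losers. If a saturable house $b$ contains a non-admirer $a'' \in M(b)$, then at least one admirer $a'$ of $b$ satisfies $M(a') \ne b$, and bumping $a'$ into $b$ while relocating $a''$ into the slot vacated at $M(a')$ (or into $\hat b$) produces two strict gains against at most one strict loss.

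The main obstacle will be controlling the swap chains in the necessity direction so that the net change in votes is strictly negative for $M$: one has to argue that the displaced agents along a chain each end up at a house at least as good as their current one (using that their $M$-edges are in $E_L$, so a bump toward $f_L$ or $s_L$ is acceptable), and that the overall book-keeping never produces more losers than gainers. Once this local structure of $G_L$ is in place, sufficiency follows cleanly from the pigeonhole count above applied saturable-house by saturable-house.
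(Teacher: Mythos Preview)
The paper does not actually prove this theorem: it is stated as a citation of Manlove and Sng and used as a black box, so there is no ``paper's own proof'' to compare against. I will therefore just comment on the correctness of your sketch.

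Your sufficiency direction is sound. The injection $\phi:U\to D$ built house-by-house works exactly as you describe: every up-voter $a$ has $M(a)=s_L(a)\ne f_L(a)$, so $N(a)$ is an over-demanded house that, by condition~(iii), is filled in $M$ entirely with its admirers; the counting $|M(b)\setminus N(b)|\ge |N(b)\setminus M(b)|\ge |U_b|$ then gives enough displaced admirers (each a down-voter) to absorb $U_b$, and the images are disjoint across houses since $\phi(a)\in M(N(a))$.

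The necessity direction, however, has a concrete error in your treatment of condition~(iii). You propose to bump the missing admirer $a'$ into $b$ and to relocate the evicted non-admirer $a''$ ``into the slot vacated at $M(a')$ (or into $\hat b$)''. Neither destination guarantees a second strict gain: $M(a')$ need not even be acceptable to $a''$, and $\hat b$ is certainly worse for $a''$ than $b$, so you get one gain and one loss, which does not dominate $M$. The fix is to send $a''$ to $f_L(a'')$ (which exists and differs from $b$ precisely because $a''$ is a non-admirer of $b$); then $a'$ and $a''$ both strictly improve, and at most one agent is evicted from $f_L(a'')$, yielding the $2$-to-$1$ count you wanted. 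Your handling of condition~(i) has a similar looseness: the claim that re-routing $a$ toward $s_L(a)$ ``yields one strict improvement with no strict losers'' is not justified, since $s_L(a)$ may be full in $M$ and the displaced occupant may well be worse off; here too you need to exhibit a second improver (e.g.\ send the displaced agent to her own first house) rather than assert there are no losers. Once these relocation targets are chosen correctly, the exchange arguments go through.
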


Now we consider an instance $I$ of \ONEpopML\ or \ONEpopun.

For a house $b$, let $CA(b)$ denote the set of \textit{certain admirers} of $b$, that is the agents $a\in A$ that consider $b$ first in any possible preference profile/preference list. Let $PA(b)$ denote the set of \textit{possible admirers} of $b$, that is the agents $a\in A$ that consider $b$ best in some possible preference profile/preference list. Finally, let $H_a(b)$ be the set of houses that are better than $b$ in some possible preference list of $a$. All these sets can be computed in polynomial time in both the \ONEpopML\ and the \ONEpopun\ models.

Using theorem \ref{thm:popperf} we get the following result. 

\begin{theorem}
\label{thm:pop-conditionML}
    Let $G=(A,B;E)$ be a bipartite graph, with $\quota [b]$ capacities for each $b\in B$. Let $\mathcal{L}= \{ L_1,L_2,\dots,L_k \}$ be a set of preference profiles for $A$. Then, a matching $M$ is popular in any $L\in \mathcal{L}$ if and only if 
    \begin{itemize}
        \item $M$ is an $A$-perfect matching in the graph $\hat{G}=(A,B,\hat{E})$, where $\hat{E}=\cap_{i=1}^nE_{L_i}$ and
        \item for any $L\in \mathcal{L}$, any house $b$ for which $|AD_L(b)|\ge \quota [b]$ is saturated by agents from $AD_L(b)$.
    \end{itemize}
\end{theorem}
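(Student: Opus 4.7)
The plan is to derive the theorem directly from the Manlove--Sng characterization (Theorem~\ref{thm:popperf}) applied simultaneously to every profile $L\in\mathcal{L}$. Since being popular in all of $\mathcal{L}$ is exactly the conjunction of ``$M$ is popular in $L$'' for each $L$, I only need to match up the three Manlove--Sng conditions, profile-by-profile, with the two conditions in the statement.

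For the forward direction, assume $M$ is popular in every $L\in\mathcal{L}$. Applying Theorem~\ref{thm:popperf} in each $L$ separately, I get that every edge of $M$ lies in $E_L$, so $M\subseteq\bigcap_{i=1}^k E_{L_i}=\hat{E}$; that $M$ is $A$-perfect (an intrinsic property of $M$, independent of $L$), hence $A$-perfect in $\hat{G}$; and that in every $L$, each house $b$ with $|AD_L(b)|\ge\quota[b]$ is saturated only by agents from $AD_L(b)$, which is precisely condition~(ii).

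For the backward direction, suppose conditions~(i) and~(ii) hold and fix an arbitrary $L\in\mathcal{L}$. From (i) we have $M\subseteq\hat{E}\subseteq E_L$ and $M$ is $A$-perfect; from (ii) the third Manlove--Sng condition is met in $L$. Hence Theorem~\ref{thm:popperf} certifies that $M$ is popular in $L$, and since $L$ was arbitrary, $M$ is popular in every profile of $\mathcal{L}$.

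The proof is essentially a bookkeeping argument, so there is no real obstacle; the only thing to double-check is that the ``$A$-perfect with edges in $E_L$ for all $L$'' requirement is faithfully captured by requiring $A$-perfectness in the graph whose edge set is the intersection $\hat{E}$. This is immediate: a matching that saturates $A$ and lies in every $E_{L_i}$ lies in $\hat{E}$, and conversely any $A$-perfect matching of $\hat{G}$ automatically lies in each $E_{L_i}$. No additional machinery beyond Theorem~\ref{thm:popperf} is needed.
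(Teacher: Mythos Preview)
Your proposal is correct and takes essentially the same approach as the paper: the paper simply states that the result follows from Theorem~\ref{thm:popperf}, and your argument is precisely the profile-by-profile application of that characterization, with the bookkeeping spelled out. There is nothing to add.
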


For the uncertainty model, we provide a slightly different characterization that we are going to utilize later.

\begin{theorem}
\label{thm:pop-conditionun}
    Let $G=(A,B;E)$ be a bipartite graph, with $\quota [b]$ capacities for each $b\in B$. Let $P_{a_1},\dots,P_{a_n}$ be sets of preference lists for the agents $a_i\in A$. Then, a matching $M$ is popular in any $L\in \times_{i=1}^nP_{a_i}$ if and only if 
    \begin{itemize}
        \item $M$ is an $A$-perfect matching in the graph $\hat{G}=(A,B,\hat{E})$, where $\hat{E}=\cap_{i=1}^nE_{L_i}$,
        \item any house $b$ that has at least $\quota [b]$ possible admirers is filled with possible admirers,
        \item any house $b$ that has at least $\quota [b]$ certain admirers is filled with certain admirers,
        \item for each agent $a$ and each possible pseudo-second house $b$ of $a$, if $(a,b)\in M$, then each $b'\in H_a(b)$ is filled with certain admirers.
%        \item there is no house $b$, with $|CA(b)|<\quota [b] <|PA(b)|$.
    \end{itemize}
\end{theorem}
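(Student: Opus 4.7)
The approach is to argue both directions by a profile-by-profile reduction to the Manlove--Sng characterization (Theorem \ref{thm:popperf}).

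For \emph{sufficiency}, fix an arbitrary $L \in \times_{i=1}^{n} P_{a_i}$ and verify each clause of Manlove--Sng for $M$ in $L$. The first two clauses follow immediately from condition~1, since $\hat E \subseteq E_L$. For the saturation clause, suppose $|AD_L(b)| \geq \quota[b]$. Then $|PA(b)| \geq |AD_L(b)| \geq \quota[b]$, so condition~2 yields $|M(b)| = \quota[b]$ and $M(b) \subseteq PA(b)$. To upgrade this to $M(b) \subseteq AD_L(b)$, note that for any $a \in M(b)$ the edge $(a,b)$ lies in $E_L$, so $b \in \{f_L(a), s_L(a)\}$; the option $b = s_L(a) \neq f_L(a)$ is ruled out by the definition of pseudo-second when $|AD_L(b)| \geq \quota[b]$, so $b = f_L(a)$ and $a \in AD_L(b)$.

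For \emph{necessity}, suppose $M$ is popular in every profile. Condition~1 is Manlove--Sng applied to every $L$. For condition~2, given $b$ with $|PA(b)| \geq \quota[b]$, pick any $\quota[b]$ possible admirers of $b$ and build a profile $L^\ast$ in which each of them has $b$ as their top house; Manlove--Sng's saturation clause in $L^\ast$ gives $|M(b)| = \quota[b]$ and $M(b) \subseteq AD_{L^\ast}(b) \subseteq PA(b)$. Condition~3 is analogous: because $CA(b) \subseteq AD_L(b)$ for every $L$, once $|CA(b)| \geq \quota[b]$ the saturation clause holds in every profile, yielding $|M(b)| = \quota[b]$ and $M(b) \subseteq \bigcap_{L} AD_L(b) = CA(b)$.

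Condition~4 is the delicate case. Fix $(a,b) \in M$ with $b = s_{L^\ast}(a)$ for some $L^\ast$, and a house $b' \in H_a(b)$; choose $\sigma \in P_a$ with $b' \succ_\sigma b$. For any profile $L^{\ast\ast}$ in which $a$'s list is $\sigma$, popularity of $M$ in $L^{\ast\ast}$ forces $(a,b) \in E_{L^{\ast\ast}}$, and $b' \succ_\sigma b$ rules out $b = f_{L^{\ast\ast}}(a)$, so $b = s_{L^{\ast\ast}}(a)$; by the definition of pseudo-second this forces $|AD_{L^{\ast\ast}}(b')| \geq \quota[b']$. Manlove--Sng's saturation clause in $L^{\ast\ast}$ then gives $|M(b')| = \quota[b']$ and $M(b') \subseteq AD_{L^{\ast\ast}}(b')$. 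Finally, if some $a^\ast \in M(b')$ were not a certain admirer of $b'$, I could pick $\succ^\ast \in P_{a^\ast}$ in which $b'$ is not top and plug it into $L^{\ast\ast}$ (noting $a^\ast \neq a$ since $b \neq b'$); this contradicts $a^\ast \in M(b') \subseteq AD_{L^{\ast\ast}}(b')$. Hence $M(b') \subseteq CA(b')$, proving condition~4.

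The main obstacle is condition~4: once $a$'s list is pinned to $\sigma$, the remaining agents' preferences can be chosen independently to probe the certain admirer status of each $a^\ast \in M(b')$, and this independence combined with Manlove--Sng in the built profile does the work.
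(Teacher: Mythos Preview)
Your proof is correct, and in fact it is cleaner than the paper's on both sides of the biconditional.

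For sufficiency, the paper argues popularity directly by vote-counting: it observes that any agent wishing to improve must displace a certain admirer at her unique top house, so improvers are at most as numerous as those getting worse. You instead verify the three Manlove--Sng clauses profile-by-profile, and your saturation argument (that $|AD_L(b)|\ge \quota[b]$ together with $(a,b)\in E_L$ forces $b=f_L(a)$, since a genuine pseudo-second house always has fewer admirers than capacity) uses only conditions~1 and~2. This is a genuine simplification: it shows conditions~3 and~4 are redundant in the characterization, being consequences of~1 and~2. The paper's proof does not make this visible.

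For the necessity of conditions~3 and~4, the paper builds explicit dominating matchings (moving $a$ to a better house, a displaced agent to her top, and unmatching a third agent), whereas you stay entirely within the Manlove--Sng framework: you pick profiles exploiting the product structure of $\times_i P_{a_i}$ and read off $M(b')\subseteq AD_L(b')$ from the saturation clause, then intersect over the free coordinates to land in $CA(b')$. Both arguments ultimately rest on the same independence of the agents' preference choices, but yours avoids the case analysis of whether the target house is underfilled and is arguably more transparent. The paper's approach has the minor advantage of being self-contained (one sees the witness matching), but yours better explains \emph{why} the characterization holds.
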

\begin{proof}
Let $M$ be a matching that is popular in any possible preference profile. By Theorem \ref{thm:popperf}, the first point clearly must hold. Also, if $b$ that has at least $\quota [b]$ possible admirers, then there is a possible preference profile, where all of those are admirers, so by Theorem \ref{thm:popperf} again, $b$ must be filled with such agents. 

For the third point, assume that $b$ can be filled with certain admirers, but there is an $a\notin CA(b)$ matched to $b$ (by the previous point, we know $b$ must be filled). Consider a preference profile $L$, where $b$ is not the best house of $a$. Then, we can create a matching $N$ from $M$ by assigning $a$ to his best house $b'$, and a certain admirer $a'$ of $b$ not at $b$ in $M$ (by our assumptions there must be such an agent) to $b$. This may require us to reject an agent $a''$ from $b'$ for $N$ to be a matching, but in any case, both $a$ and $a'$ prefers $N$ to $M$ in $L$ and at most one agent is worse off, so $N$ dominates $M$ in $L$, contradiction. 

For the fourth point, suppose that there is an agent $a$ with a possible pseudo-second house $b$, such that $(a,b)\in M$, but there is a house $b'\in H_a(b)$ that is not filled with certain admirers. Clearly, $b'\ne b$. Then, there is a preference list of $a$, say $\succ_a^1$, where $b'\succ_a^1b$. Also, as $b'$ is not filled with applicants from $CA(b')$, there is an agent $a'\ne a$ at $b$ who is not a certain admirer of $b'$ (or $b'$ is not filled, in which case $a$ can just switch to a better house, contradiction). This implies that there is a preference list of $a'$, say $\succ_{a'}^2$, where $b'$ is not his first choice, so there is a house $b''\ne b'$ better than $b'$ for $a'$. Consider a preference profile $L$ with $\succ_a^1,\succ_{a'}^2\in L$. Then, create a matching $N$ from $M$, by assigning $a$ to $b'$, $a'$ to $b''$ and deleting one edge of $N$ adjacent to $b''$ and some agent $a''\ne a'$ if needed (so if $b''$ was filled in $M$). Then in $L$, both $a$ and $a'$ improve from $M$ in $N$ and at most one agent (namely $a''$) obtains a worse situation, hence $N$ dominates $M$ in $L$, contradiction.

In the other direction, suppose that $M$ satisfies the four points. % and that there is no house $b$, with $|CA(b)|<\quota [b] <|PA(b)|$. 
In particular, every agent $a\in A$ is matched in $M$. Then, suppose that an agent $a\in A$ is matched to a house $b$ that is not best in some possible preference list of $a$. As $M$ satisfies all points, we get that then $a$ must have degree two in $\hat{E}$ and $b$ is either a first or a pseudo-second house in any preference list of $a$. Suppose that house $b'\in H_a(b)$ is better than $b$ for $a$ in $\succ_a^i$. By the last point, we get that each such house $b'$ is filled with certain admirers. Hence, any agent $a$ can only improve in any possible preference profile by taking the place of an other agent $a'$ who is at his first choice with respect to any of his preference lists and therefore certainly gets worse off. Hence, for any possible preference profile, the number of improving agents can be at most the number of agents getting worse, which implies that $M$ is popular in all of these profiles.
\end{proof}

Using Theorems \ref{thm:pop-conditionML} and \ref{thm:pop-conditionun} we show that \ONEpopML\ and \ONEpopun\ are polynomial time solvable. 

\begin{theorem}
\label{thm:onesided-popun-P}
    \ONEpopML\ and \ONEpopun\ are solvable in  polynomial time. 
\end{theorem}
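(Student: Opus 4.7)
The plan is to leverage the explicit characterizations given in Theorems~\ref{thm:pop-conditionML} and~\ref{thm:pop-conditionun} to reduce each problem to a bipartite degree-constrained subgraph problem with lower and upper bounds on the houses, which is solvable in polynomial time by standard feasible-flow techniques.

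For \ONEpopML, I first compute the graph $\hat G=(A,B,\hat E)$ with $\hat E=\bigcap_{i=1}^k E_{L_i}$ together with the admirer sets $AD_{L_i}(b)$ for every $b\in B$ and $i\in[k]$; both are polynomial-size objects that can be built from the input directly. For each house $b$, define $T(b)=\bigcap_{L_i\colon |AD_{L_i}(b)|\ge \quota[b]} AD_{L_i}(b)$ (taking $T(b)$ to be the full neighborhood of $b$ in $\hat G$ when no layer triggers the condition), and mark $b$ as \emph{must-saturate} when at least one layer yields $|AD_{L_i}(b)|\ge \quota[b]$. Theorem~\ref{thm:pop-conditionML} then says that certainly popular matchings are exactly the $A$-perfect subgraphs of $\{(a,b)\in \hat E : a\in T(b)\}$ respecting the capacity $\quota[b]$ on each house and saturating every must-saturate house. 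This is a polynomially solvable feasible-flow problem: orient edges from $A$ to $B$, add a source to $A$ with capacity $1$, and a sink from $B$ with lower bound $\quota[b]$ on must-saturate houses and upper bound $\quota[b]$ on every house; a feasible circulation of value $|A|$ exists iff a certainly popular matching exists.

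For \ONEpopun, I proceed analogously but with additional bookkeeping for condition (iv) of Theorem~\ref{thm:pop-conditionun}. The critical observation is that the condition ``each $b'\in H_a(b)$ is filled with certain admirers'' collapses to a static per-edge filter: a house $b'$ can be filled with certain admirers only if $|CA(b')|\ge \quota[b']$, and in that case condition (iii) already forces it to be so filled; if instead $|CA(b')|< \quota[b']$, it is impossible. Therefore, I remove from $\hat E$ every edge $(a,b)$ such that $b$ is a possible pseudo-second house of $a$ and some $b'\in H_a(b)$ has $|CA(b')|< \quota[b']$. On the remaining graph, I require each house $b$ with $|PA(b)|\ge \quota[b]$ to be saturated and its partners drawn from $PA(b)$, and, whenever $|CA(b)|\ge \quota[b]$, to further draw its partners from $CA(b)$. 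Finding an $A$-perfect feasible assignment here is again a polynomial-time feasible-flow problem, exactly as above.

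The main obstacle is verifying that condition (iv) really does reduce to the static edge test; this is precisely the dichotomy noted above (either forced automatically by condition (iii) or structurally infeasible). Once this is in hand, both models become standard bipartite degree-constrained subgraph existence problems, solvable in polynomial time; in each direction the equivalence between a feasible flow and a certainly popular matching follows directly from the corresponding characterization theorem, so no further combinatorial machinery is needed.
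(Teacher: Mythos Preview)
Your approach is essentially the paper's: prune edges that violate the admirer/saturation constraints dictated by the characterization theorems, then test for an $A$-perfect assignment subject to saturation of the forced houses. Casting the final step as a feasible-flow problem rather than a matching-with-lower-bounds problem is cosmetic, and your reduction of condition~(iv) to the static test $|CA(b')|\ge\quota[b']$ is precisely what the paper does.

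There is, however, one genuine omission in your treatment of \ONEpopun. You write ``I proceed analogously,'' taking for granted that $\hat G=(A,B,\hat E)$ with $\hat E=\bigcap_{L\in\mathcal L} E_L$ can be built directly. In \ONEpopML\ this is fine because $|\mathcal L|=k$ is part of the input, but in \ONEpopun\ the profile set $\mathcal L=\times_{a\in A}P_a$ has exponential size, so you cannot form the intersection layer by layer. Whether an edge $(a,b)$ lies in $\hat E$ depends on whether $b$ is the pseudo-second house of $a$ in \emph{every} profile, and the pseudo-second house depends on how many admirers the higher-ranked houses have, which in turn depends on the choices of \emph{all other} agents. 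The paper supplies the missing argument: for each $\succ_a^i\in P_a$ in which $b$ is not first, one checks that every house $b'$ above $b$ satisfies $|CA(b')|\ge\quota[b']$ (or $\quota[b']-1$ when $b'$ is first in $\succ_a^i$), so that $b'$ is oversubscribed in every profile containing $\succ_a^i$. Without this step you have not established that the graph on which you run your flow computation is computable in polynomial time, and the proof is incomplete.
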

\begin{proof}
Let $I$ be an instance of either \ONEpopML\ or \ONEpopun.
Let $\mathcal{L}$ denote the set of possible preference profiles, which in the case of \ONEpopun\ is $\mathcal{L}=\times_{a\in A} P_a$, which can have exponentially many profiles in the input size, otherwise only polynomially many.

Let us start with the case of \ONEpopML. 
By Theorem \ref{thm:pop-conditionML}, it is enough to decide whether $\hat{G}=(A,B,\hat{E})$,  $\hat{E}=\cap_{L\in \mathcal{L}}E_{L}$ admits an $A$-perfect matching or not, satisfying that for any $L\in \mathcal{L}$, any house $b$ for which $|AD_L(b)|\ge \quota [b]$ is saturated by agents from $AD_L(b)$.
We can compute $AD_L(b)$ for each $(b,L)\in B\times \mathcal{L}$ in polynomial time, and we can also compute the graph $\hat{G}$ in polynomial time. Now, for each edge $(a,b)\in \hat{E}$, we check whether there is a preference profile $L$, such that $|AD_L(b)|\ge \quota [b]$, but $a\notin AD_L(b)$, and if yes, then we delete the edge $(a,b)$. If there remains an $A$-perfect matching $M$ in the graph, then if clearly satisfies the first condition and also the second one, as otherwise if $(a,b)\in M$ but $a$ is not an admirer of $b$ in $L$ and $|AD_L(b)|\ge \quota [b]$, then the edge $(a,b)$ has been deleted, contradiction. For the other direction, if there is no $A$-perfect matching remaining, then there can be no matching that is popular in every $L\in \mathcal{L}$, because such a matching would be $A$-perfect and none of its edges would have been deleted.

Consider the case of \ONEpopun. We start by computing the sets $CA(b), PA(b)$ for each house $b\in B$ and the sets $H_a(b)$ for each edge $(a,b)\in E$.

First, we show that we can decide if an edge $e$ is included in $\hat{E}$ or not, and hence we can construct the graph $\hat{G}$ in polynomial time. 

By theorem \ref{thm:popperf}, an edge $e=(a,b)$ is included in $\hat{G}$ if and only if $b\in \{ s_L(a),f_L(a) \}$ for each possible preference profile $L\in \mathcal{L}$. If $b$ is first in all of $a$'s possible preference lists, then $(a,b)\in \hat{E}$ clearly. If not, then $(a,b)\in \hat{E}$, if and only if for each such preference list $\succ_a^i$, each house $b'$ that is better than $b$ for $a$ must have at least $\quota [b']$ many admirers in any possible preference profile containing $\succ_a^i$, so $|CA(b')|\ge \quota [b']$, if $b'$ is not the first in $\succ_a^i$ and $|CA(b')|\ge \quota [b']-1$ if $b'$ is the first in $\succ_a^i$. As the sets $CA(b)$ can be computed in polynomial time for all $b\in B$, we can check for a given $a\in A$ and $b\in B$, whether the conditions for $(a,b)\in \hat{E}$ hold for each $\succ_a^i\in P_a$ preference list of $a$ and therefore we can construct $\hat{G}$ in  polynomial time. 

Then, by point two of Theorem \ref{thm:pop-conditionun}, we know that the edges $(a,b)$ such that $|PA(b)|\ge \quota [b]$, but $a\notin PA(b)$ cannot be included in any matching that is certainly popular, so we can safely delete them from $\hat{G}$. Similarly, we can delete the edges $(a,b)$, such that $|CA(b)|\ge \quota [b]$, but $a\notin CA(b)$ by point three of Theorem \ref{thm:pop-conditionun}. 

Finally, for each agent $a\in A$, and each remaining $(a,b)$ edge, if $b$ is a pseudo-second house of $a$ in some preference profile, then we check whether for each $b'\in H_a(b)$, $|CA(b')|\ge \quota [b']$, which can be done in  polynomial time. If not, then $b'$ cannot be filled with certain admirers and hence $a$ cannot be at $b$ in a certainly popular matching by Theorem \ref{thm:pop-conditionun} point four, so we can delete $(a,b)$ from the graph. Otherwise, for each such $b'$, we already deleted each edge $(a',b')$ if $a'\notin CA(b')$, so whenever a matching in the remaining graph fills $b'$, it must fill it with certain admirers. 

Therefore, we conclude that there is a certainly popular matching if and only if the remaining graph has a matching that is (i) $A$-perfect and (ii) fills every $b\in B$ such that $|PA(b)|\ge \quota [b]$ by Theorem \ref{thm:pop-conditionun}. Such a matching, if exists, can be found efficiently with for example the Hungarian-method. 
\end{proof}

A nice corollary of our above proof is the following, which also implies that the verification whether a matching $M$ is certainly popular can be done in polynomial-time for \ONEpopML\ and \ONEpopun.

\begin{theorem}
\label{thm:ONEsidedcheck}
    Deciding if an instance $I$ with $G=(A,B;E)$ of \ONEpopML\ or \ONEpopun\ admits a certainly popular matching can be reduced in polynomial time to deciding if a bipartite graph $\Tilde{G}=(A,B,\Tilde{E})$ admits an $A$-perfect matching that saturates a given set of vertices in $B$. Also, this graph $\Tilde{G}$ satisfies that each $a\in A$ has degree at most 2. 
\end{theorem}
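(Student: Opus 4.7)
The plan is to read off the reduction directly from the construction in the proof of Theorem~\ref{thm:onesided-popun-P}. I would start from the graph $\hat{G}=(A,B,\hat{E})$ defined there, perform exactly the edge deletions prescribed by that proof (for whichever of \ONEpopML\ or \ONEpopun\ is at hand), retain the capacities $\quota[b]$ from $I$, and let $\Tilde{G}$ be the resulting bipartite graph. The distinguished set $S \subseteq B$ is taken to be the set of houses that a certainly popular matching is forced to fill to capacity by Theorems~\ref{thm:pop-conditionML} and~\ref{thm:pop-conditionun}: for \ONEpopML\ it is $\{b \in B : |AD_L(b)| \ge \quota[b] \text{ for some } L \in \mathcal{L}\}$, and for \ONEpopun\ it is $\{b \in B : |PA(b)| \ge \quota[b]\}$. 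Both $S$ and $\Tilde{E}$ are computable in polynomial time, as already argued in the proof of Theorem~\ref{thm:onesided-popun-P}.

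The degree bound is the cleanest step. By Theorem~\ref{thm:popperf}, for any preference profile $L$, each $a \in A$ has at most two neighbors in $E_L$, namely $f_L(a)$ and $s_L(a)$. Since $\hat{E} = \bigcap_{L \in \mathcal{L}} E_L$ and $\Tilde{E} \subseteq \hat{E}$, every agent $a \in A$ has at most two neighbors in $\Tilde{G}$.

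Correctness---the equivalence between the existence of a certainly popular matching in $I$ and the existence of an $A$-perfect matching in $\Tilde{G}$ saturating $S$---would be established by invoking the characterizations of Theorems~\ref{thm:pop-conditionML} and~\ref{thm:pop-conditionun}. Each condition in these characterizations is encoded either by the removal of an offending edge from $\Tilde{E}$ or by demanding saturation of the corresponding house. In the uncertainty case this is essentially already verified at the end of the proof of Theorem~\ref{thm:onesided-popun-P}; the multilayer case proceeds analogously, with the admirer-saturation requirement of Theorem~\ref{thm:pop-conditionML} handled via the $S$-saturation constraint rather than being inferred from $A$-perfection alone.

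The main obstacle is less conceptual than bookkeeping: I would need to check that the combination of $A$-perfection and $S$-saturation in $\Tilde{G}$ captures every one of the four conditions in Theorem~\ref{thm:pop-conditionun} (and the two conditions in Theorem~\ref{thm:pop-conditionML}), with no missing requirement. The subtlest case is condition (iv) of Theorem~\ref{thm:pop-conditionun}, which forces every $b' \in H_a(b)$ to be filled with certain admirers whenever $a$ is matched to a pseudo-second house $b$; here the deletion rule retains $(a,b)$ only when $|CA(b')| \ge \quota[b']$, hence $b' \in S$, and $S$-saturation combined with the removal of non-certain-admirer edges incident to $b'$ then forces $b'$ to be filled with certain admirers as required.
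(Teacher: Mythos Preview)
Your proposal is correct and follows exactly the approach the paper intends: Theorem~\ref{thm:ONEsidedcheck} is stated there merely as a corollary of the proof of Theorem~\ref{thm:onesided-popun-P}, and you have faithfully extracted the graph $\Tilde{G}$, the saturation set $S$, and the degree bound from that construction. Your explicit inclusion of the $S$-saturation constraint in the multilayer case (rather than relying on $A$-perfection alone) and your verification that condition~(iv) of Theorem~\ref{thm:pop-conditionun} is enforced via the chain $|CA(b')|\ge\quota[b']\Rightarrow b'\in S$ are both careful and correct.
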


\subsection{Preferences with ties}\label{subsec:one-sided-ties}

We now extend the uncertainty and multilayer results from strict rankings to weak rankings. With ties, both the set of first houses and the set of pseudo-second houses may contain several objects, so the degree-two reduction used above is no longer available. The replacement combines simultaneous optimization in the first-choice graphs with a reachability test for usable pseudo-second houses.

First, we describe a characterization of popular matchings in \HAT\ instances. Let $(G,L)$ be an instance of \HAT.
Without loss of generality, we can assume that all houses have capacity 1. If a house $b$ has capacity $\quota [b]\le |A|$, we can just create $\quota [b]$ many copies $b^1,\dots,b^{\quota[b]}$ and replace $b$ in the preference lists of the agents with a tie $b^1\sim b^2\sim \cdots \sim b^{\quota[b]}$.

Also without loss of generality, we can add a "last-resort" post $\hat{b}_a$ for each agent $a$ that is ranked strictly worst for $a$, and unacceptable for everyone else and corresponds to being unmatched. Then, the set of popular matchings in this new instance is in a one-to-one correspondence with the popular matchings in the original.

Again, the new inputs are such that this house is always set as strictly last for every agent in any possible preference list or preference profile.

Next, for each applicant $a\in A$ define the set of \textit{first houses} $f_L(a)$ to be the houses that are most preferred by $a$. For a house $b$, we call an agent $a$ an \emph{admirer of $b$} in $L$, if $b\in f_L(a)$. Denote the set of admirers of $b$ by $AD_L(b)$. We denote the set of edges between houses and their admirers in the preference profile $L$ by $E_L^f$, and let the corresponding subgraph be $G_L^f=(A,B,E_L^f)$.

Next, we define the set of \textit{pseudo-second houses} $s_L(a)$ to be the set of most preferred houses among the ones that some maximum size matching in $G_L^f=(A,B,E_L^f)$ avoids. For example, if $\succeq_a= b_1\succ b_2\sim b_3\sim b_4\succ b_5$, where $b_1$ and $b_2$ are covered by all maximum size matchings in $G_L^f$, but $b_3,b_4$ and $b_5$ are not, then $s_L(a)=\{ b_3,b_4\}$.
Note that because of the last resort house, $s_L(a)$ is nonempty. 

Make a graph $G_L=(A,B,E_L)$ by adding the edges $(a,b)$ for each applicant $a\in A$ and each $b\in f_L(a)\cup s_L(a)$.

\begin{theorem}[Abraham et al.~\cite{abraham2007popular}]\label{thm:abraham}
    A matching $M$ is popular if and only if
    \begin{itemize}
        \item $M\cap E_L^f$ is a maximum size matching in $G_L^f=(A,B,E_L^f)$ and
        \item for all agents $a\in A$, $M(a)\in f_L(a)\cup s_L(a)$ (in particular, $M(a)\ne \emptyset$).
     \end{itemize}
\end{theorem}

Next, we develop a polynomial algorithm to decide if a certainly popular matching exists.

\begin{lemma}\label{lem:weight-opt}
Let $G=(A,B,E)$ and
let $\mathcal{L}=\{L_1,\dots,L_k\}$ be a set of layers. Define $\w(e)=\lvert\{i\in[k]:e\in E_{L_i}^f\}\rvert$, the number of layers in which $e$ is a first-choice edge. Let $n_i$ be the size of a maximum matching in $G_{L_i}^f$ for $i\in[k]$.

Then, a matching $M$ in $G$ satisfies $\lvert M\cap E_{L_i}^f\rvert=n_i$ for all $i\in[k]$ if and only if $\w(M)=\sum_{i\in[k]}n_i$. Furthermore, for every matching $M$ in $G$, $\w(M)\le \sum_{i\in[k]}n_i$.
\end{lemma}
\begin{proof}
    Let $M$ be an arbitrary matching in $G$. By definition, the total weight of the matching is $\w(M) = \sum_{e \in M} \w(e)$. 

Since $\w(e)$ counts the number of layers $i \in [k]$ where $e$ is a first-choice edge, we can rewrite the sum using an indicator function $\mathbb{I}_{e \in E_{L_i}^f}$. By changing the order of summation (double counting), we evaluate the sum over both the edges and the layers:
\begin{align*}
\w(M) &= \sum_{e \in M} \sum_{i=1}^k \mathbb{I}_{e \in E_{L_i}^f}= \sum_{i=1}^k \sum_{e \in M} \mathbb{I}_{e \in E_{L_i}^f} = \sum_{i=1}^k \vert M \cap E_{L_i}^f \vert.
\end{align*}

Because $M \cap E_{L_i}^f$ constitutes a valid matching within the layer $E_{L_i}^f$, its size is naturally bounded by the size of the maximum matching in that specific layer. Thus, $\vert M \cap E_{L_i}^f \vert \le n_i$ for all $i \in [k]$. 

Summing these individual bounds across all $k$ layers directly yields the second statement of the lemma:
\[
\w(M) = \sum_{i=1}^k \vert M \cap E_{L_i}^f \vert \le \sum_{i=1}^k n_i.
\]

Finally, because the condition $\vert M \cap E_{L_i}^f \vert \le n_i$ is strictly enforced for every layer, the global equality $\w(M) = \sum_{i=1}^k n_i$ can only be satisfied if and only if there is no deficit in any layer---meaning the local equality $\vert M \cap E_{L_i}^f \vert = n_i$ holds for all $i \in [k]$.
\end{proof}

\begin{remark}\label{rem:uncertain-weight}
    In the uncertainty model with $|P_{a_i}|=k$ for every agent (which we can assume by duplicating lists), the $k^n$ possible profiles need not be enumerated. For $e=(a,b)$, its weight in Lemma~\ref{lem:weight-opt} is $\w(e)=k^{n-1}\sum_{i\in[k]}\mathbb{I}_{b\in f_{\succeq_a^i}(a)}$. Here $f_{\succeq_a^i}(a)$ is the set of first houses for $a$ in $\succeq_a^i\in P_a$. 
\end{remark}

\begin{lemma}\label{lem:HA-maxcomputation}
    We can decide in polynomial time, whether a maximum weight matching $M$ in $G$ satisfies $\w (M)=\sum_{L_i\in \mathcal{L}}n_i$ or not.
\end{lemma}
\begin{proof}
    For the multilayer model, where the number of layers is linear in the input size, this is trivial, as we can independently compute the size of the maximum matching $n_i$ in $G_{L_i}^f$ for each layer $L_i \in \mathcal{L}$ and directly verify if $\w(M) = \sum_{L_i \in \mathcal{L}} n_i$.

    We show we can also decide this efficiently for the uncertainty model. By Lemma~\ref{lem:weight-opt}, $\w(M) = \sum_{L_i \in \mathcal{L}} n_i$ if and only if for all possible preference profiles $L \in \mathcal{L}$, the matching $M_L = M \cap E_L^f$ is of maximum size in $G_L^f$. This is equivalent to stating that there is no profile $L \in \mathcal{L}$ that contains an augmenting path with respect to $M_L$ in $G_L^f$. Because $G_L^f$ is bipartite, any such augmenting path must alternate between edges in $E_L^f \setminus M_L$ and edges in $M_L$, starting at an unmatched house in $M_L$ and ending at an unmatched agent in $M_L$.

    We can search for the existence of such a path across all possible profiles by constructing a directed reachability graph. Define the set of potential starting houses $S \subseteq B$: a house $b \in S$ if either $b$ is unmatched in $M$, or $b$ is matched in $M$ to some agent $a = M(b)$ and there exists a possible preference list $\succeq_a^i \in P_a$ where $b \notin f_{\succeq_a^i}(a)$. 

    Construct a directed graph $D = (B, \vec{E})$, where a directed edge $(b, b') \in \vec{E}$ exists if and only if there is an agent $a \in A$ such that $M(a) = b'$ and there exists a preference list $\succeq_a^i \in P_a$ satisfying $\{b, b'\} \subseteq f_{\succeq_a^i}(a)$. This represents an alternating step $b \to a \to b'$, where $b$ is a first choice of $a$, and $b'$ is both the $M$-partner of $a$ and a first choice of $a$.

    Define a set of terminal houses $T \subseteq B$: a house $b \in T$ if there exists an agent $u \in A$ and a preference list $\succeq_u^i \in P_u$ such that $b \in f_{\succeq_u^i}(u)$, and either $u$ is unmatched in $M$, or $M(u) \notin f_{\succeq_u^i}(u)$.

    We claim that there exists a profile $L \in \mathcal{L}$ where $M_L$ is not maximal if and only if there is a directed path in $D$ from some house in $S$ to some house in $T$.

    For the forward direction, suppose such a profile $L$ exists. Then there is an augmenting path $b_0, a_1, b_1, \dots, a_k$ in $G_L^f$. Since $b_0$ is unmatched in $M_L$, $b_0 \in S$. For each step $b_{i-1} \to a_i \to b_i$, $a_i$ is matched to $b_i$ in $M_L$, so $\{b_{i-1}, b_i\} \subseteq f_L(a_i)$, implying $(b_{i-1}, b_i) \in \vec{E}$. Finally, $a_k$ is unmatched in $M_L$, meaning $M(a_k) \notin f_L(a_k)$ (or $M(a_k) = \emptyset$), while $b_{k-1} \in f_L(a_k)$. Thus, $b_{k-1} \in T$, and a path from $S$ to $T$ exists in $D$.

    Conversely, suppose there is a path from $S$ to $T$ in $D$. Choose a \emph{shortest} such path $b_0 \to b_1 \to \dots \to b_{k-1}$, ensuring the houses $b_i$ are distinct. Let $a_i = M(b_i)$ for $1 \le i \le k-1$. Since $M$ is a matching, the intermediate agents $a_1, \dots, a_{k-1}$ are distinct. Let $u_k$ be the agent satisfying the terminal condition for $b_{k-1}$. 
    
    Crucially, we must have $u_k \notin \{a_1, \dots, a_{k-1}\}$. If $u_k = a_j$ for some internal index $j$, then by the terminal condition definition, there exists a list $\succeq_{a_j}^i \in P_{a_j}$ where $M(a_j) = b_j \notin f_{\succeq_{a_j}}^i(a_j)$. However, this immediately implies $b_j \in S$. The subpath $b_j \to b_{j+1} \dots \to b_{k-1}$ would then be a strictly shorter path from $S$ to $T$, contradicting our assumption. 
    
    If $b_0$ is matched, let $a_0 = M(b_0)$. It is permissible for $u_k = a_0$; the terminal preference list for $a_0$ naturally rejects $b_0$, perfectly satisfying the condition for $b_0 \in S$. Because the active agents determining the path steps ($a_1, \dots, a_{k-1}$) and the terminal agent ($u_k$) are mutually distinct, we can independently select their required preference lists from their respective sets $P_a$ (and pick arbitrary lists for all other agents) to form a valid profile $L \in \mathcal{L}$. In this profile $L$, the sequence $b_0, a_1, b_1, \dots, u_k$ forms an exact augmenting path for $M_L$.

    Since the sets $S$, $T$, and the edges of graph $D$ can be constructed in polynomial time by examining the bounded number of possible preference lists for each agent, and reachability can be solved in linear time relative to the size of $D$ via Breadth-First Search, the entire decision process takes polynomial time.
\end{proof}

\begin{lemma}\label{lem:usable-pseudo-second}
    Let $M$ be a matching in $G$ satisfying $\w(M) = \sum_{L \in \mathcal{L}} n_L$. Let $a \in A$ and fix a preference list $\succeq_a^i \in P_a$. Define the restricted domain $\mathcal{L}_i(a) = \{ \succeq_a^i \} \times \prod_{a' \neq a} P_{a'}$. 
    
    Define the set of starting houses $S^{(i)} \subseteq B$: a house $b_0 \in S^{(i)}$ if either $b_0$ is unmatched in $M$, or $b_0$ is matched to some agent $a' = M(b_0)$ and there exists a possible preference list $\succeq_{a'}^{\ell} \in P_{a'}$ valid in $\mathcal{L}_i(a)$ where $b_0 \notin f_{\succeq_{a'}^{\ell}}(a')$. 
    Construct the directed graph $D^{(i)} = (B, \vec{E}^{(i)})$, where a directed edge $(b, b') \in \vec{E}^{(i)}$ exists if and only if there is an agent $a'$ such that $M(a') = b'$ and there exists a preference list $\succeq_{a'}^{\ell} \in P_{a'}$ valid in $\mathcal{L}_i(a)$ satisfying $\{b, b'\} \subseteq f_{\succeq_{a'}^{\ell}}(a')$. 
    
    Let $R^{(i)} \subseteq B$ be the set of houses reachable from $S^{(i)}$ in $D^{(i)}$. Let $TC^*$ be the most preferred tie-class in $\succeq_a^i \setminus f_{\succeq_a^i}(a)$ that intersects $R^{(i)}$. 
    
    If $M^*$ is a certainly popular matching and $M^*(a) = b \notin f_{\succeq_a^i}(a)$, then:
    \begin{enumerate}
        \item $b \in TC^*$, and
        \item the maximum $\w$-weight of a matching in $G \setminus \{b\}$ over the restricted domain $\mathcal{L}_i(a)$ equals $\sum_{L \in \mathcal{L}_i(a)} n_L$.
    \end{enumerate}
    Furthermore, any house $b$ satisfying these two conditions belongs to $\bigcap_{L \in \mathcal{L}_i(a)} s_L(a)$, and these conditions can be evaluated in polynomial time.
\end{lemma}
\begin{proof}
    Because $\w(M) = \sum_{L \in \mathcal{L}} n_L$, by Lemma~\ref{lem:weight-opt}, the restriction $M_L = M \cap E_L^f$ is a maximum size matching in $G_L^f$ for every profile $L \in \mathcal{L}$, which strictly includes all profiles in $\mathcal{L}_i(a)$. 
    
    We first prove that $b' \in R^{(i)}$ if and only if $b'$ is non-essential (avoided by a maximum size matching in $G_L^f$) in at least one profile $L \in \mathcal{L}_i(a)$. 
    
    $(\Rightarrow)$ Suppose $b' \in R^{(i)}$. Let $b_0 \to b_1 \to \dots \to b_k = b'$ be a shortest path in $D^{(i)}$ from $S^{(i)}$ to $b'$. Let $a_j = M(b_j)$ for $1 \le j \le k$. Because it is a shortest path, the houses $b_0, \dots, b_k$ are distinct, and since $M$ is a matching, the intermediate agents $a_1, \dots, a_k$ are also distinct. By the construction of $\vec{E}^{(i)}$, for each directed edge $(b_{j-1}, b_j)$, there exists a valid preference list $\succeq_{a_j}^{\ell_j} \in P_{a_j}$ such that $\{b_{j-1}, b_j\} \subseteq f_{\succeq_{a_j}^{\ell_j}}(a_j)$. Crucially, because the agents $a_1, \dots, a_k$ are mutually distinct (and agent $a$'s list is fixed to $\succeq_a^i$), we can independently assign list $\succeq_{a_j}^{\ell_j}$ to agent $a_j$ (while assigning arbitrary valid lists to the remaining agents) to construct a specific profile $L \in \mathcal{L}_i(a)$. In this profile $L$, the sequence $b_0, a_1, b_1, \dots, a_k, b_k$ forms an exact alternating path with respect to the maximum matching $M_L$, starting from a house $b_0$ that is either completely unmatched or rejected as a first choice ($b_0 \notin f_{\succeq_{a_0}^{\ell_0}}(a_0)$ where $a_0 = M(b_0)$). As $M\cap E_L^f$ is maximum size, this implies that $b'$ is avoided by some maximum size matching in $G_L^f$, making it non-essential in $L$.

    $(\Leftarrow)$ Conversely, suppose there is a profile $L \in \mathcal{L}_i(a)$ where $b'$ is non-essential. Then, in the first-choice graph $G_L^f$, there exists a (possibly single vertex) augmenting or alternating path with respect to the maximum size matching $M_L$ starting from a house $b_0$ that is unmatched in $M_L$. This $b_0$ is either strictly unmatched in $M$, or matched in $M$ but rejected as a first choice in $L$; in either case, $b_0 \in S^{(i)}$. Every step $b_{j-1} \to a_j \to b_j$ along this path implies that $a_j$ is matched to $b_j$ in $M$, and $a_j$ considers both $b_{j-1}$ and $b_j$ as first choices in $L$. This directly satisfies the definition for the directed edge $(b_{j-1}, b_j) \in \vec{E}^{(i)}$ generated by the specific list $\succeq_{a_j}^{\ell_j}$ assigned to $a_j$ in $L$. Consequently, the alternating path in $G_L^f$ maps to a valid directed path from $S^{(i)}$ to $b'$ in $D^{(i)}$, proving $b' \in R^{(i)}$.

    Now, assume $M^*$ is a certainly popular matching and $M^*(a) = b \notin f_{\succeq_a^i}(a)$. By the characterization theorem (Theorem~\ref{thm:abraham}), for every $L \in \mathcal{L}_i(a)$, $M^*(a) \in f_L(a) \cup s_L(a)$. Since $b \notin f_L(a)$, it must be that $b \in s_L(a)$ for all $L \in \mathcal{L}_i(a)$. This means $b$ is non-essential in \emph{every} $L \in \mathcal{L}_i(a)$. 
    
    \begin{itemize}
        \item \textbf{Evaluating Condition 1:} Since $b$ is non-essential in all $L \in \mathcal{L}_i(a)$, it must be that $b \in R^{(i)}$. We must show $b \in TC^*$. Suppose for contradiction that $b \notin TC^*$. Because $TC^*$ is defined as the most preferred tie-class intersecting $R^{(i)}$, it is impossible for $b \succ_a^i TC^*$. If $b \prec_a^i TC^*$, then since $TC^*$ intersects $R^{(i)}$, there exists some house $b^* \in TC^*$ that is non-essential in some specific profile $L^* \in \mathcal{L}_i(a)$. By the definition of pseudo-second houses, the choices $s_{L^*}(a)$ must be at least as preferred as $b^*$. Thus, $s_{L^*}(a) \succeq_a^i b^* \succ_a^i b$, which strictly contradicts the fact established above that $b \in s_{L^*}(a)$. Therefore, $b \in TC^*$.
        
        \item \textbf{Evaluating Condition 2:} Because $M^*$ is certainly popular, for every $L \in \mathcal{L}_i(a)$, the edges $M^* \cap E_L^f$ form a maximum size matching in $G_L^f$ of size $n_L$. Since $b \notin f_{\succeq_a^i}(a)$, the edge $(a,b)$ is not a first-choice edge in any $L \in \mathcal{L}_i(a)$ (i.e., $(a,b) \notin E_L^f$). Removing the edge $(a,b)$ leaves the matching $M^* \setminus \{(a,b)\}$ in the subgraph $G \setminus \{b\}$. Crucially, this remaining matching still contains exactly $n_L$ first-choice edges for every $L \in \mathcal{L}_i(a)$. By Lemma~\ref{lem:weight-opt}, the maximum $\w$-weight of a matching on $G \setminus \{b\}$ evaluated over $\mathcal{L}_i(a)$ must reach the upper bound of exactly $\sum_{L \in \mathcal{L}_i(a)} n_L$. 
    \end{itemize}
    
    Furthermore, if a house $b$ satisfies both conditions, then the fact that the $\w$-weight on $G \setminus \{b\}$ equals $\sum n_L$ guarantees that there is a common matching avoiding $b$ that maximizes first-choice edges across all layers in $\mathcal{L}_i(a)$. Thus, $b$ is non-essential in \emph{every} $L \in \mathcal{L}_i(a)$. Since $b \in TC^*$, and no house $b' \succ_a^i TC^*$ belongs to $R^{(i)}$ (meaning any such $b'$ is essential in every layer), $b$ is guaranteed to be among the most preferred non-essential houses in every layer. Thus, $b \in \bigcap_{L \in \mathcal{L}_i(a)} s_L(a)$.
    
    Constructing $D^{(i)}$ and $S^{(i)}$ via $M$, computing reachability to find $R^{(i)}$ and $TC^*$, and checking the maximum $\w$-weight on $G \setminus \{b\}$ all execute in polynomial time.
\end{proof}

\paragraph{Algorithm \ref{alg:pop-ties}: Finding a Certainly Popular Matching}
\label{alg:pop-ties}
\begin{enumerate}
    \item Define $\w (e) = \vert \{ i\in [k]: e\in E_{L_i}^f\} \vert$ for the Multilayer model, or $\w(e)$ as in Remark~\ref{rem:uncertain-weight} for the Uncertainty model. Compute a maximum $\w$-weight matching $M$ in $G$.
    \item Verify if $\w(M) = \sum_{L_i\in \mathcal{L}}n_i$ using Lemma~\ref{lem:HA-maxcomputation}. If not, STOP and output ``no solution exists''.
    \item Compute the set of valid assignments $x(a)$ for each agent $a \in A$:
    \begin{itemize}
        \item \textbf{Multilayer model:} For each agent $a$, compute $f_L(a)$ and $s_L(a)$ for all layers $L \in \mathcal{L}$. Set $x(a) = \bigcap_{L \in \mathcal{L}} (f_L(a) \cup s_L(a))$.
        \item \textbf{Uncertainty model:} For each possible list $\succeq_a^i \in P_a$, define the restricted domain $\mathcal{L}_i(a) = \{ \succeq_a^i \} \times \prod_{a' \neq a} P_{a'}$. Initialize $x_i(a) = f_{\succeq_a^i}(a)$. 
        
        To find the valid pseudo-second houses, compute the reachable set $R^{(i)}$ in the graph $D^{(i)}$ over $\mathcal{L}_i(a)$, as established in Lemma~\ref{lem:usable-pseudo-second}. Identify $TC^*$, the most preferred tie-class in $\succeq_a^i \setminus f_{\succeq_a^i}(a)$ that intersects $R^{(i)}$. If no such tie-class exists, $x_i(a) = f_{\succeq_a^i}(a)$.
        
        Otherwise, for each candidate house $b \in TC^*$, temporarily remove $b$ from $G$ and compute the maximum $\w$-weight matching on $G \setminus \{b\}$ over the restricted domain $\mathcal{L}_i(a)$. If this maximum weight achieves the upper bound $\sum_{L \in \mathcal{L}_i(a)} n_L$ (verified via Lemma~\ref{lem:HA-maxcomputation}), add $b$ to $x_i(a)$.
        
        Finally, set $x(a) = \bigcap_{\succeq_a^i \in P_a} x_i(a)$.
    \end{itemize}
    \item Construct a restricted subgraph $G' = (A, B, E')$ where $E' = \{ (a,b) \in E \mid b \in x(a) \}$. Find an $A$-perfect matching $M'$ in $G'$ that maximizes the $\w$-weight. If $M'$ is $A$-perfect and $\w(M') = \sum_{L \in \mathcal{L}} n_L$, output $M'$. Otherwise, STOP and output ``no solution exists''. 
\end{enumerate}

\begin{theorem}\label{thm:alg-correctness}
    Algorithm~\ref{alg:pop-ties} finds a certainly popular matching in polynomial time, if one exists.
\end{theorem}
\begin{proof}
    By the characterization of Theorem~\ref{thm:abraham}, a matching $M^*$ is certainly popular if and only if for every possible profile $L \in \mathcal{L}$: (i) $M^* \cap E_L^f$ is a maximum size matching in $G_L^f$, and (ii) $M^*(a) \in f_L(a) \cup s_L(a)$ for all $a \in A$.

    Condition (i) holds simultaneously for all $L \in \mathcal{L}$ if and only if $\w(M^*) = \sum_{L \in \mathcal{L}} n_L$, by Lemma~\ref{lem:weight-opt}. Step 2 of the algorithm correctly uses Lemma~\ref{lem:HA-maxcomputation} to determine if any such matching can exist in the underlying graph. 

    Condition (ii) requires that for each agent $a \in A$, the assigned house $M^*(a)$ belongs to a valid set $x(a) \subseteq \bigcap_{L \in \mathcal{L}} (f_L(a) \cup s_L(a))$. 
    For the Multilayer model, the number of layers is polynomially bounded, allowing direct calculation of $x(a)$ as the exact intersection.

    For the Uncertainty model, the algorithm computes $x(a) = \bigcap_{\succeq_a^i \in P_a} x_i(a)$. The set $x_i(a)$ is initialized with $f_{\succeq_a^i}(a)$ and is then augmented with exactly the houses in $TC^*$ that preserve the maximum $\w$-weight on $G \setminus \{b\}$. By Lemma~\ref{lem:usable-pseudo-second}, this procedure exhaustively identifies every non-first house $b$ that could possibly be assigned to $a$ in a certainly popular matching. If a house drops the maximum $\w$-weight in $\mathcal{L}_i(a)$, it implies that any matching utilizing $b$ for $a$ strictly deprives the remaining agents of forming max-size matchings across the domain, disqualifying it globally. Else, it is non-essential in every profile $\mathcal{L}_i(a)$ and by $b\in TC^*$ it is most preferred among the non-essential houses for any profile in $\mathcal{L}_i(a)$. Thus, $x_i(a)$ contains exactly the first choices and the subset of pseudo-second choices that remain feasible for a certainly popular matching.

    In Step 4, finding an $A$-perfect matching $M'$ in $G'$ that achieves the global maximum weight $\w(M') = \sum_{L \in \mathcal{L}} n_L$ strictly satisfies both the size constraint (Condition i) and the choice constraint (Condition ii) simultaneously across all layers. Because $x(a)$ correctly retains all usable pseudo-second houses, this restriction does not falsely eliminate any certainly popular matching. If no such matching $M'$ exists in $G'$, it is mathematically impossible to satisfy both conditions concurrently, and the algorithm correctly terminates.

    Since $D^{(i)}$ reachability and maximum weight verifications are executed a polynomial number of times (bounded by $\vert A \vert$, $\vert B \vert$, and $\vert P_a \vert$), the entire algorithm runs in polynomial time.
\end{proof}

\subsection{Popular common bases with integral comparison margins}
\label{subsec:integral-popular-common-base}

Kavitha et al.~\cite{yu2023arborescence} solve the popular common base problem with one-sided preferences in the intersection of a partition matroid and an arbitrary matroid when each partition class is equipped with a partial-order preference. To create the sufficient tools to solve the \ONETpopMLsum, we first extend their dual-certificate characterization and level algorithm to bounded integral comparison margins. The margins need not be induced by a partial order and may therefore contain cyclic comparisons.

While for our purposes, an extension in the bipartite matching model would be enough, we solve the matroidal case for the sake of generality.

Let
\[
    E=\mathbin{\dot\bigcup}_{a\in A}E_a,
\]
let $\mathcal P$ be the partition matroid with classes $(E_a)_{a\in A}$, and let
$\mathcal M=(E,\mathcal I)$ be a matroid of rank
\[
    r=|A|.
\]
For every $a\in A$, let
\[
    w_a:E_a\times E_a\longrightarrow\{-W,\ldots,W\}
\]
be integral and skew-symmetric, that is,
\[
    w_a(e,f)=-w_a(f,e)
    \qquad(e,f\in E_a).
\]
In particular, $w_a(e,e)=0$. If $F$ is a common base of $\mathcal P$ and $\mathcal M$, then $F(a)$ denotes the unique element of $F\cap E_a$. We call $F$ \emph{$w$-popular} if
\[
    \sum_{a\in A}w_a\bigl(F(a),F'(a)\bigr)\ge 0
\]
for every common base $F'$ of $\mathcal P$ and $\mathcal M$.

\begin{theorem}
\label{thm:integral-popular-common-base}
The existence of a $w$-popular common base can be decided using
    $O\bigl(r(r+1)W\bigr)$
calls to weighted matroid intersection and polynomial additional time in the instance size. If a $w$-popular common base exists, one can be found with the same running time. 
\end{theorem}

We first derive the dual certificates used by the algorithm. For a fixed common base $F$, define
\[
    \operatorname{wt}_F(e)=w_a\bigl(e,F(a)\bigr)
    \qquad(e\in E_a).
\]
For every common base $F'$,
\[\operatorname{wt}_F(F')
=\sum_{a\in A}w_a\bigl(F'(a),F(a)\bigr)\\
    =-\sum_{a\in A}w_a\bigl(F(a),F'(a)\bigr).
\]

Moreover, $\operatorname{wt}_F(F)=0$. Consequently, $F$ is $w$-popular if and only if the maximum weight of a common base under $\operatorname{wt}_F$ is zero.

Let $\rho$ be the rank function of $\mathcal M$. The common-base linear program and its dual are
\begin{align}\tag{$\mathrm P_F$} \max & \displaystyle\sum_{e\in E}\operatorname{wt}_F(e)x_e\\     &x(E_a)=1 \qquad(a\in A),\\     &x(S)\le \rho(S)\qquad(S\subseteq E),\\     &x_e\ge 0\qquad(e\in E), \end{align}
and
\begin{align}\tag{$\mathrm D_F$}
\min&\displaystyle\sum_{S\subseteq E}\rho(S)y_S
       +\sum_{a\in A}\alpha_a\\[1.5mm]
&\displaystyle\sum_{\substack{S\subseteq E\\e\in S}}y_S+\alpha_a
       \ge \operatorname{wt}_F(e)
       \qquad(a\in A,\ e\in E_a),\\
    &y_S\ge 0\qquad(S\subseteq E),\\
    &\alpha_a\text{ free}\qquad(a\in A).
\end{align}

A \emph{multichain} is an indexed sequence
\[
    \mathcal C=(C_1\subseteq C_2\subseteq\cdots\subseteq C_p)
\]
in which consecutive sets are allowed to be equal. When $C_p=E$, define
\[
    \operatorname{lev}_{\mathcal C}(e)
       =\min\{i:e\in C_i\}.
\]
For such a multichain, let
\begin{equation}
\label{eq:integral-admissible}
    E_w(\mathcal C)
    =\mathbin{\dot\bigcup}_{a\in A}
    \left\{
        e\in E_a:
        \operatorname{lev}_{\mathcal C}(e)
        -\operatorname{lev}_{\mathcal C}(f)
        \ge w_a(f,e)
        \text{ for every }f\in E_a
    \right\}.
\end{equation}

\begin{lemma}[Integral-margin dual certificates]
\label{lem:integral-certificate}
A common base $F$ is $w$-popular if and only if there is a multichain of flats
\[
    \mathcal D=(D_1\subseteq\cdots\subseteq D_q=E)
\]
such that
\begin{equation}
\label{eq:integral-certificate}
    F\subseteq E_w(\mathcal D)
    \qquad\text{and}\qquad
    \operatorname{span}_{\mathcal M}(F\cap D_i)=D_i
    \quad(1\le i\le q).
\end{equation}
Moreover, whenever $F$ is $w$-popular and $W\ge 1$, it has such a certificate with
\[
    q\le (r+1)W.
\]
\end{lemma}

\begin{proof}
Suppose first that $F$ is $w$-popular. Then the optimal values of
$\mathrm P_F$ and $\mathrm D_F$ are zero.

Among all optimal dual solutions, choose one minimizing
\[
    \sum_{S\subseteq E}y_S|S|\,|E\setminus S|.
\]
If two incomparable sets $S,T$ have positive multipliers, let
$\delta=\min\{y_S,y_T\}$ and replace
\[
\begin{array}{lll}
    y_S&\leftarrow y_S-\delta,&
    y_T\leftarrow y_T-\delta,\\
    y_{S\cap T}&\leftarrow y_{S\cap T}+\delta,&
    y_{S\cup T}\leftarrow y_{S\cup T}+\delta.
\end{array}
\]
For every element $e$, the total coefficient
$\sum_{R\ni e}y_R$ is unchanged. Submodularity of $\rho$ shows that the
dual objective does not increase. Since the original solution is optimal,
the modified solution is also optimal, and standard calculation shows
\[|S|\,|E\setminus S|+|T|\,|E\setminus T|
    -|S\cap T|\,|E\setminus(S\cap T)|
    -|S\cup T|\,|E\setminus(S\cup T)|
=2|S\setminus T|\,|T\setminus S|>0,
\]
contradicting the choice of the solution. Thus the positive $y$-variables
are supported on a chain.

Restricting $\mathrm D_F$ to this chain gives a constraint matrix whose
columns are incidence vectors of a chain and of the pairwise disjoint
partition classes $E_a$. This is the incidence matrix of the union of two
laminar families and is therefore totally unimodular. Since the right-hand sides are
integral, the restricted dual has an integral optimum. We may therefore
assume that all $y_S$ and $\alpha_a$ are integral.

Complementary slackness with the characteristic vector of $F$ gives
\begin{equation}
\label{eq:dual-cs-rank}
    |F\cap S|=\rho(S)
    \qquad\text{whenever }y_S>0
\end{equation}
and
\begin{equation}
\label{eq:dual-cs-element}
    \sum_{S\ni F(a)}y_S+\alpha_a=0
    \qquad(a\in A).
\end{equation}
For every support set $S$, replace $S$ by
$\operatorname{span}_{\mathcal M}(S)$. Its rank is unchanged and every
dual constraint can only become easier to satisfy. Furthermore,
\eqref{eq:dual-cs-rank} implies that $F\cap S$ is a basis of $S$.
Since $F$ is independent, no element of $F\setminus S$ lies in
$\operatorname{span}_{\mathcal M}(S)$. Hence
\[
    F\cap\operatorname{span}_{\mathcal M}(S)=F\cap S
\]
and
\[
    \operatorname{span}_{\mathcal M}
       \bigl(F\cap\operatorname{span}_{\mathcal M}(S)\bigr)
       =\operatorname{span}_{\mathcal M}(S).
\]
Closures preserve inclusion, so after merging equal sets the support is
still a chain, now consisting of flats. A positive multiplier on the empty
set may be deleted. We may also include $E$ in the support: increasing
$y_E$ by one and decreasing every $\alpha_a$ by one leaves every dual
constraint unchanged and changes the objective by
\[
    \rho(E)-|A|=r-r=0.
\]

We next bound the positive multipliers. Suppose that $y_S>W$ for a support
flat $S$. Decrease $y_S$ by one and increase $\alpha_a$ by one for each
$a$ satisfying $F(a)\in S$. The objective changes by
\[
    -\rho(S)+|F\cap S|=0.
\]
The constraint of an element $e\in E_a$ changes only when exactly one of
$e$ and $F(a)$ lies in $S$. If $F(a)\in S$ and $e\notin S$, its left-hand
side increases. Consider therefore $e\in S$ and $F(a)\notin S$. Write
\[
    L(x)=\sum_{R\ni x}y_R+\alpha_a
    \qquad(x\in E_a).
\]
Because the support is a chain, no support set can contain $F(a)$ without
also containing $e$: such a set would be incomparable with $S$. Hence
\[
    L(e)-L(F(a))
      =\sum_R y_R
       \bigl(\mathbf 1[e\in R]-\mathbf 1[F(a)\in R]\bigr)
      \ge y_S>W.
\]
By \eqref{eq:dual-cs-element}, $L(F(a))=0$, and therefore
$L(e)\ge W+1$. After decreasing $y_S$, the new left-hand side is at least
$W$, which is at least $\operatorname{wt}_F(e)$. Thus feasibility is
preserved. Repeating the operation gives an optimal certificate in which
\[
    1\le y_S\le W
\]
for every support flat.

A strict chain of flats contains at most $r+1$ sets: if $S\subsetneq T$
are flats, then $\rho(S)<\rho(T)$, and the possible ranks are
$0,1,\ldots,r$. Replace every support flat $S$ by $y_S$ consecutive
copies. This produces a multichain
\[
    \mathcal D=(D_1\subseteq\cdots\subseteq D_q=E)
\]
with
\[
    q\le (r+1)W.
\]
For every $x\in E$,
\[
    \sum_{S\ni x}y_S
      =|\{i:x\in D_i\}|
      =q-\operatorname{lev}_{\mathcal D}(x)+1.
\]
Using \eqref{eq:dual-cs-element}, the dual constraint for
$f\in E_a$ becomes
\[
    \operatorname{lev}_{\mathcal D}(F(a))
       -\operatorname{lev}_{\mathcal D}(f)
    \ge w_a\bigl(f,F(a)\bigr).
\]
This is precisely the condition $F(a)\in E_w(\mathcal D)$. The span
condition in \eqref{eq:integral-certificate} follows from
\eqref{eq:dual-cs-rank} and the fact that every $D_i$ is a flat.

Conversely, suppose that a common base $F$ and a multichain $\mathcal D$
satisfy \eqref{eq:integral-certificate}. For every distinct flat $S$
appearing in $\mathcal D$, let $y_S$ be its multiplicity, and set all
other $y$-variables to zero. Define
\[
    \alpha_a=-|\{i:F(a)\in D_i\}|.
\]
For $f\in E_a$, the corresponding dual left-hand side is
\[
    |\{i:f\in D_i\}|-|\{i:F(a)\in D_i\}|
    =
    \operatorname{lev}_{\mathcal D}(F(a))
      -\operatorname{lev}_{\mathcal D}(f)\\
    \ge w_a\bigl(f,F(a)\bigr)
     =\operatorname{wt}_F(f),
\]
where the inequality follows from $F(a)\in E_w(\mathcal D)$. Thus the
dual solution is feasible. Its objective value is
\[
\sum_{i=1}^{q}\rho(D_i)
       -\sum_{a\in A}|\{i:F(a)\in D_i\}|
    =
    \sum_{i=1}^{q}\bigl(\rho(D_i)-|F\cap D_i|\bigr)\\
    =0.
\]
The primal optimum is therefore at most zero. Since $F$ itself has weight
zero, the optimum is zero and $F$ is $w$-popular.
\end{proof}

\subsubsection{The level algorithm}
Set
\[
    L=(r+1)W.
\]
For $W\ge1$, the algorithm maintains a multichain of flats
\[
    \mathcal C=(C_1\subseteq\cdots\subseteq C_p=E).
\]
It starts with $p=1$ and $C_1=E$ and repeats the following steps.

\begin{enumerate}
\item Compute the admissible set $E_w(\mathcal C)$.

\item Among all common independent sets
\(
    I\subseteq E_w(\mathcal C)
\)
of $\mathcal P$ and $\mathcal M$, find one that lexicographically
maximizes
\[
    \bigl(|I\cap C_1|,\ldots,|I\cap C_p|\bigr).
\]

\item If
\[
    |I\cap C_i|=\rho(C_i)
    \qquad(1\le i\le p),
\]
return $I$.

\item Otherwise, let $j$ be the smallest index satisfying
\[
    |I\cap C_j|<\rho(C_j)
\]
and replace
\[
    C_j\leftarrow
       \operatorname{span}_{\mathcal M}(I\cap C_j).
\]
If $j=p$, append a new last set $C_{p+1}=E$ and increase $p$ by one.

\item If $p>L$, report that no $w$-popular common base exists.
\end{enumerate}

The lexicographic optimization in Step~2 can be handled with weighted matroid-intersection, e.g., using weights $$ \lambda_{\mathcal C}(e)
       =\sum_{\substack{1\le i\le p\\e\in C_i}}(r+1)^{p-i}.$$ 

The update preserves the multichain property. For every $i<j$, the
minimality of $j$ gives
\[
    |I\cap C_i|=\rho(C_i).
\]
As $C_i$ is a flat,
\[
    \operatorname{span}_{\mathcal M}(I\cap C_i)=C_i.
\]
Consequently, when $j>1$,
\[
    C_{j-1}
      \subseteq
      \operatorname{span}_{\mathcal M}(I\cap C_j).
\]
The updated set is contained in the old $C_j$, because the latter is a
flat, and hence remains contained in $C_{j+1}$. If $j=p$, its rank
strictly decreases, so the updated $C_p$ is a proper subset of $E$ and a
new copy of $E$ can be appended.

\begin{lemma}[Level transfer]
\label{lem:integral-level-transfer}
Let
\(
    \mathcal C=(C_1\subseteq\cdots\subseteq C_p=E)
    \quad\text{and}\quad
    \mathcal D=(D_1\subseteq\cdots\subseteq D_q=E)
\)
satisfy
\begin{equation}
\label{eq:integral-invariant}
    p\le q
    \qquad\text{and}\qquad
    D_i\subseteq C_i
    \quad(1\le i\le p).
\end{equation}
Then, the following statements hold:
\begin{enumerate}
\item for every $x\in E$, $ \operatorname{lev}_{\mathcal D}(x)
       \ge \operatorname{lev}_{\mathcal C}(x);$

\item if $f\in E_w(\mathcal D)$ and
$ \operatorname{lev}_{\mathcal C}(f)
      =\operatorname{lev}_{\mathcal D}(f),$ 
then $f\in E_w(\mathcal C)$;

\item if $e,f\in E_a$, $e\in E_w(\mathcal C)$,
$f\in E_w(\mathcal D)$, and
\(\operatorname{lev}_{\mathcal C}(f)
      =\operatorname{lev}_{\mathcal D}(f),
\)
then
\( \operatorname{lev}_{\mathcal C}(e)
      =\operatorname{lev}_{\mathcal D}(e).
\)
\end{enumerate}
\end{lemma}

\begin{proof}
The first statement follows immediately from $D_i\subseteq C_i$.

For the second, let
\[
    t=\operatorname{lev}_{\mathcal C}(f)
      =\operatorname{lev}_{\mathcal D}(f).
\]
Assume $f\in E_a$. For every $g\in E_a$,
\[
\operatorname{lev}_{\mathcal C}(f)
       -\operatorname{lev}_{\mathcal C}(g)
    \ge
    \operatorname{lev}_{\mathcal D}(f)
       -\operatorname{lev}_{\mathcal D}(g)\\
    \ge w_a(g,f).
\]
Hence $f\in E_w(\mathcal C)$.

For the third, write
\[
    x=\operatorname{lev}_{\mathcal C}(e),\qquad
    y=\operatorname{lev}_{\mathcal D}(e),\qquad
    t=\operatorname{lev}_{\mathcal C}(f)
      =\operatorname{lev}_{\mathcal D}(f).
\]
Admissibility gives
\[
    x-t\ge w_a(f,e)
\quad \text{and}\quad
    t-y\ge w_a(e,f)=-w_a(f,e).
\]
Thus $x\ge y$. The first statement gives $y\ge x$, so $x=y$.
\end{proof}

We use the following standard strong exchange property of matroids (see \cite{brualdi1969comments}): if
$X,Y\in\mathcal I$, $e\in X\setminus Y$, and $Y+e\notin\mathcal I$, then
there is an $f\in Y\setminus X$ such that
\[
    X-e+f\in\mathcal I
    \qquad\text{and}\qquad
    Y+e-f\in\mathcal I.
\]

\begin{lemma}[Update invariant]
\label{lem:integral-update}
Let $F$ be a $w$-popular common base with a certificate
\(
    \mathcal D=(D_1\subseteq\cdots\subseteq D_q=E)
\)
satisfying Lemma~\ref{lem:integral-certificate}. During the level
algorithm, the current multichain
\(
    \mathcal C=(C_1\subseteq\cdots\subseteq C_p=E)
\)
always satisfies \eqref{eq:integral-invariant}.
\end{lemma}

\begin{proof}
Initially $\mathcal C=(E)$, so the assertion holds. Let 
\[C_0=D_0:=\emptyset\]

Assume that \eqref{eq:integral-invariant} holds before an iteration. Let
$I$ be the lexicographically maximum common independent set found by the
algorithm, let $j\ge 1$ be the first deficient index, and put
\[
    C'_j=\operatorname{span}_{\mathcal M}(I\cap C_j).
\]
We prove
$$ D_j\subseteq C'_j.$$

Suppose not. Since
\(
    \operatorname{span}_{\mathcal M}(F\cap D_j)=D_j,
\)
there is an element
\(
    f_1\in F\cap D_j
       \setminus\operatorname{span}_{\mathcal M}(I\cap C_j).
\)
For every $i<j$,
\(  C_i=\operatorname{span}_{\mathcal M}(I\cap C_i).
\)
Hence, 
\(
    D_{j-1}\subseteq C_{j-1}
       \subseteq\operatorname{span}_{\mathcal M}(I\cap C_j),
\)
so $f_1\notin D_{j-1}$ and $f_1\notin C_{j-1}$. Therefore
\[
    \operatorname{lev}_{\mathcal C}(f_1)
      =\operatorname{lev}_{\mathcal D}(f_1)=j.
\]
Since $f_1\in F\subseteq E_w(\mathcal D)$,
Lemma~\ref{lem:integral-level-transfer} gives
$f_1\in E_w(\mathcal C)$.

The set $
    J_1=(I\cap C_j)+f_1$
is independent in $\mathcal M$ and lexicographically better than $I$.
Therefore it is not independent in the partition matroid. Hence there is
an element $e_1\in I\cap C_j$ in the same partition class as $f_1$.
Since $f_1$ is the unique element of $F$ from that class, $e_1\notin F$.
Since
\(
    e_1\in I\subseteq E_w(\mathcal C)
\) Lemma~\ref{lem:integral-level-transfer}(3) gives
\(
    \operatorname{lev}_{\mathcal C}(e_1)
       =\operatorname{lev}_{\mathcal D}(e_1).
\)
Moreover, $e_1\in C_j$, and hence
\[
    \operatorname{lev}_{\mathcal C}(e_1)
       =\operatorname{lev}_{\mathcal D}(e_1)\le j.
\]

We now inductively construct distinct elements
$f_1,f_2,\ldots$ of $F$. Suppose that $J_t$ is an
$\mathcal M$-independent subset of $E_w(\mathcal C)$, lexicographically
better than $I$, and has a unique partition conflict between elements
$e_t\notin F$ and $f_t\in F$ in the same class, where
\[
    \operatorname{lev}_{\mathcal C}(e_t)
       =\operatorname{lev}_{\mathcal D}(e_t)\le j.
\]
We also maintain the invariant
\begin{equation}
\label{eq:J-prefix-counts}
    |J_t\cap C_i|=|I\cap C_i|
    \qquad(1\le i<j).
\end{equation}
This invariant holds for $J_1$: indeed, $f_1$ has level $j$ in
$\mathcal C$, and therefore $
    J_1\cap C_i=I\cap C_i
    \; (1\le i<j).$

Apply the strong exchange property in $\mathcal M$ to $J_t$, $F$, and $e_t$. Since
$F$ is a base, there is
\(
    f_{t+1}\in F\setminus J_t
\)
such that
\[
    J_{t+1}=J_t-e_t+f_{t+1}\in\mathcal I
\quad\text{and}\quad
    F+e_t-f_{t+1}\in\mathcal I.
\]

We claim that
\[
    \operatorname{lev}_{\mathcal C}(e_t)
    \le \operatorname{lev}_{\mathcal C}(f_{t+1})
    \le \operatorname{lev}_{\mathcal D}(f_{t+1})
    \le \operatorname{lev}_{\mathcal D}(e_t).
\]
The middle inequality follows from
Lemma~\ref{lem:integral-level-transfer}(1).

For the last inequality, let
$\ell=\operatorname{lev}_{\mathcal D}(f_{t+1}).$
If $\ell=1$, then
\(
    \operatorname{lev}_{\mathcal D}(e_t)\ge 1=\ell
\)
is immediate. Suppose therefore that $\ell\ge2$. By the definition of
the level,
\(
    f_{t+1}\notin D_{\ell-1},
\)
and hence
\[
    F\cap D_{\ell-1}\subseteq F-f_{t+1}.
\]
Using the certificate property and monotonicity of span, we obtain
\[
    D_{\ell-1}
       =\operatorname{span}_{\mathcal M}(F\cap D_{\ell-1})
       \subseteq
       \operatorname{span}_{\mathcal M}(F-f_{t+1}).
\]
On the other hand, the independence of
$F+e_t-f_{t+1}$ implies
$   e_t\notin
       \operatorname{span}_{\mathcal M}(F-f_{t+1}).
$

Consequently $e_t\notin D_{\ell-1}$, and therefore
\[
    \operatorname{lev}_{\mathcal D}(e_t)\ge\ell
       =\operatorname{lev}_{\mathcal D}(f_{t+1}).
\]

For the first inequality, let
$h=\operatorname{lev}_{\mathcal C}(e_t).$

If $h=1$, the inequality is immediate. Suppose that $h\ge2$ and, for a
contradiction, that
$\operatorname{lev}_{\mathcal C}(f_{t+1})<h.$
    
Then $f_{t+1}\in C_{h-1}$. Since $h\le j$, we have $h-1<j$, so the
minimality of $j$ and \eqref{eq:J-prefix-counts} give
\[
    |J_t\cap C_{h-1}|
       =|I\cap C_{h-1}|
       =\rho(C_{h-1}).
\]
Because $J_t$ is independent, $J_t\cap C_{h-1}$ is therefore a basis
of $C_{h-1}$. Hence
\[
    C_{h-1}
      =\operatorname{span}_{\mathcal M}(J_t\cap C_{h-1}).
\]
Moreover, $e_t\notin C_{h-1}$ by the definition of $h$, and thus
\[
    J_t\cap C_{h-1}\subseteq J_t-e_t.
\]
It follows that
\[
    C_{h-1}
      =\operatorname{span}_{\mathcal M}(J_t\cap C_{h-1})
      \subseteq\operatorname{span}_{\mathcal M}(J_t-e_t).
\]
Since $f_{t+1}\in C_{h-1}$, this contradicts the independence of
$J_t-e_t+f_{t+1}$. Therefore
\[
    \operatorname{lev}_{\mathcal C}(e_t)
       \le\operatorname{lev}_{\mathcal C}(f_{t+1}).
\]

Because
$\operatorname{lev}_{\mathcal C}(e_t)
      =\operatorname{lev}_{\mathcal D}(e_t),$
all inequalities are equalities. Thus
\[
    \operatorname{lev}_{\mathcal C}(f_{t+1})
      =\operatorname{lev}_{\mathcal D}(f_{t+1}),
\]
and Lemma~\ref{lem:integral-level-transfer}(2) gives
$f_{t+1}\in E_w(\mathcal C)$.

Replacing $e_t$ by $f_{t+1}$ preserves the number of elements in every
$C_i$, so $J_{t+1}\subseteq C_j$ remains lexicographically better than $I$. It cannot
be partition-independent, by the choice of $I$. Hence $f_{t+1}$ creates
a new partition conflict with some $e_{t+1}\notin F$ in the same class.
Again,
\[
    \operatorname{lev}_{\mathcal C}(e_{t+1})
       =\operatorname{lev}_{\mathcal D}(e_{t+1})\le j.
\]
Moreover, $f_{t+1}\notin J_t$, so it is distinct from
$f_1,\ldots,f_t$ (earlier $f_i$'s are never removed, as only elements $e_i\notin F$ are removed). This produces arbitrarily many distinct elements of
the finite base $F$, a contradiction. Therefore
\[
    D_j\subseteq C'_j.
\]

Thus the update preserves $D_i\subseteq C_i$ for all current indices. If
$j=p$, then the updated $C_p$ is a proper subset of $E=D_q$. Since
$D_p\subseteq C_p$, we have $p<q$, so after appending a new copy of $E$
the inequality $p+1\le q$ remains valid.
\end{proof}

\begin{proof}[Proof of Theorem~\ref{thm:integral-popular-common-base}]
If $W=0$, all margins vanish, and ordinary matroid intersection suffices.
Assume $W\ge1$.

If the algorithm returns $I$, then $C_p=E$ and
\[
    |I|=|I\cap E|=\rho(E)=r.
\]
Thus $I$ is a common base. Moreover,
\[
    \operatorname{span}_{\mathcal M}(I\cap C_i)=C_i
\]
for every $i$, and $I\subseteq E_w(\mathcal C)$. By
Lemma~\ref{lem:integral-certificate}, $I$ is $w$-popular.

Conversely, suppose that a $w$-popular base $F$ exists. By
Lemma~\ref{lem:integral-certificate}, it has a certificate of length
\[
    q\le (r+1)W.
\]
Lemma~\ref{lem:integral-update} gives $p\le q$ throughout the algorithm.
Hence the algorithm cannot create more than $(r+1)W$ levels while a
$w$-popular base exists. Reporting nonexistence when this bound is
exceeded is therefore correct.

To bound the number of iterations, set
\[
    L=(r+1)W
\]
and define $C_i=E$ for undefined indices $p<i\le L$. Consider
\[
    \Phi(\mathcal C)=\sum_{i=1}^{L}\rho(C_i).
\]
Initially, $\Phi=Lr$. In every iteration, the deficient set $C_j$ is
replaced by
$\operatorname{span}_{\mathcal M}(I\cap C_j),$
whose rank is
\[
    |I\cap C_j|<\rho(C_j).
\]
Thus $\Phi$ decreases by at least one. Appending a copy of $E$ does not
increase $\Phi$, because undefined indices were already counted as $E$.
Hence the number of iterations, and therefore the number of weighted
matroid-intersection calls, is at most
\[
    Lr=r(r+1)W.
\]

All remaining operations, including computing levels, admissible
elements, ranks, and spans, take polynomial time. This proves the theorem.
\end{proof}

\subsection{Aggregated preferences}\label{subsec:aggr}

For two strict profiles, the sign of each nonzero aggregate comparison can be represented by a partial order. This shortcut fails from three profiles onward: the orders $b_1\succ b_2\succ b_3$, $b_2\succ b_3\succ b_1$, and $b_3\succ b_1\succ b_2$ induce the majority cycle $b_1\succ b_2\succ b_3\succ b_1$. The integral-margin problem above was designed precisely to retain these cyclic comparisons and their different magnitudes.

\begin{theorem}\label{thm:ONEsidedsum-general}
The problems \ONEpopMLsum\ and \ONEpopunsum\ can be solved in polynomial time. The same result holds when the input preference lists are weak orders. It can also be verified in polynomial time whether a given matching is sum-popular.
\end{theorem}
\begin{proof}
We already observed in the preliminaries that the multilayer and uncertainty aggregation formulations are equivalent, so consider the multilayer formulation. For every applicant $a$ and two incident edges $e,f$, define
\[
    w_a(e,f)=\sum_{i=1}^k \vote_a^{L_i}(e,f).
\]
For the uncertainty formulation, the sum is instead over the $k$ lists in $P_a$. The function $w_a$ is integral and skew-symmetric, and $|w_a(e,f)|\le k$; a tie contributes zero.

Append a private last-resort house to every applicant, ranked last in every profile. Let $r=|A|$, let $E=\mathbin{\dot\bigcup}_{a\in A}\delta(a)$ be the resulting edge set, and let $\mathcal M$ be the rank-$r$ truncation of the partition matroid that enforces the house capacities. An $A$-complete matching is precisely a common base of $\mathcal M$ and the partition matroid with classes $\delta(a)$, $a\in A$. For any two such matchings $M,N$,
\[
    \sum_{a\in A}w_a(M(a),N(a))=\Delta^{ML}(M,N).
\]
Thus $M$ is sum-popular if and only if it is a $w$-popular common base in the sense of Theorem~\ref{thm:integral-popular-common-base}. Here $W\le k$, and the $k$ profiles are explicitly listed, so that theorem gives a polynomial-time algorithm.

For verification, extend the given matching $M$ with its private last-resort edges and assign every edge $e=(a,b)$ the weight
\[
    c_M(a,b)=\sum_{i=1}^k\vote_a^{L_i}(M(a),b).
\]
The weight of every $A$-complete matching $N$ is exactly $\Delta^{ML}(M,N)$. Hence $M$ is sum-popular if and only if the minimum such weight is nonnegative, which can be checked by a minimum-weight capacitated matching algorithm. The same argument applies to weak orders because ties contribute zero.
\end{proof}

\section{Two-sided preferences}\label{sec:twosided}
In this section, we investigate the problems for two-sided preferences. First, in Subsection~\ref{subsec:verif} we show that verifying if a given matching is certainly popular or certainly dominant can be done in polynomial time. Then, we study the existence questions, first for dominant matchings in Subsection~\ref{subsec:dom}, and then for popular matchings for Subsection~\ref{subsec:pop}.
We give a thorough complexity picture for these problems. Our algorithms provide efficient methods of computing desirable, certainly popular solutions in such uncertain or dynamic conditions. On the other hand, our hardness results highlight the boundaries of tractability and show the cases where we need heuristics, integer programming or other machinery to tackle these problems. 
\subsection{Verification is solvable}\label{subsec:verif}
In this subsection, we show that deciding if a matching $M$ is certainly popular; sum-popular; or robust popular can be done in polynomial time. 

\begin{theorem}
\label{thm:2sideverif}
   Verifying if a matching $M$ is certainly popular (resp. dominant), sum-popular (resp. dominant) or robust-popular (resp. dominant) can be done in polynomial-time in all of \popML, \domML, \popun, \domun, \popunsum, \domunsum, \poprob\ and \domrob. 
   
\end{theorem}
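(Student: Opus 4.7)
The plan is to reduce each verification problem to a maximum-weight matching computation on $G$, leveraging the classical fact that testing popularity of a fixed matching $M$ in a single preference profile can be phrased as a max-weight matching problem whose weights come from the vote contributions along the symmetric difference $M \triangle N$. The extra ingredient for our setting is that in all four models the agents' preference choices decompose over agents: in the multilayer and aggregation models we simply sum over layers, while in the uncertainty and robust models each agent's preference list can be picked independently of the others'. Consequently, for a fixed $M$ and any candidate adversary matching $N$, the worst preference profile for $M$ (or the summed vote) can be constructed pointwise, agent by agent.

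Concretely, for every agent $u$ and every possible partner $b \in V(u) \cup \{\emptyset\}$, I would precompute a scalar $w_u(b)$ measuring the worst-case or summed vote that $u$ casts for an adversary with $N(u) = b$ against $M$. For \popun\ and \domun, $w_u(b)$ is the maximum over $\succ_u \in P_u$ of $u$'s vote, obtained by scanning the at most $k$ lists in $P_u$. For \poprob\ and \domrob, $w_u(b)$ is determined by the rank distance between $b$ and $M(u)$ in the fixed list $\succ_u$: $+1$ if $b$ can be made preferred to $M(u)$ by at most $k$ swaps, $-1$ otherwise, with the obvious boundary cases for $b = \emptyset$ or $M(u) = \emptyset$. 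For the sum variants \popunsum, \domunsum, \popMLsum, \domMLsum, $w_u(b)$ is the integer sum of $u$'s individual votes across all $k$ layers or lists, lying in $[-k, +k]$. For \popML\ and \domML\ we simply invoke the classical single-profile verification algorithm once per layer.

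With these weights in hand, I would build a weighted matching instance on $G$ by giving each edge $(a,b)$ the weight $w_a(b) + w_b(a)$ and adjusting for vertices left unmatched via a baseline contribution $w_u(\emptyset)$ (handled by standard dummy-edge or loop tricks). Then $M$ is certainly/sum-popular iff the resulting maximum-weight matching has weight at most zero, which coincides with $\max_{N,L} \Delta^L(N, M) \le 0$ by the independence/decomposition described above; this is polynomial by Edmonds' weighted matching algorithm. For the dominant variants, one additionally needs that no matching $N$ with $|N| > |M|$ attains value $\ge 0$ against $M$; this is checked by computing, for each size $s$ with $|M| < s \le \min(|A|,|B|)$, a maximum-weight matching of size exactly $s$, a classical polynomial-time problem.

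The main obstacle I anticipate is formally verifying that the per-agent worst-case values aggregate correctly into the edge weights of a single matching instance. This decoupling works precisely because the adversary's preference profile is a \emph{product} over agents in every model considered, so the worst list for agent $u$ can be chosen without reference to the choices made for other agents or to the specific alternatives $N(v)$ of any other $v$; writing out this product-structure argument carefully, especially for the robust model where the $k$-swap budgets of different agents are genuinely independent, is where most of the care is needed. Once this independence is rigorously accounted for, verification for all eight problems reduces to a routine application of max-weight matching and its size-constrained variants, giving polynomial-time algorithms in each case.
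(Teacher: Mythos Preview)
Your proposal is correct and follows essentially the same route as the paper: both reduce verification to a weighted bipartite matching computation whose edge weights are obtained by independently choosing, for each agent $u$ and each candidate partner $b$, the worst-case (or summed) vote of $u$ against $M(u)$; the crucial observation in both is exactly the product structure you highlight. The only noteworthy implementation differences are that the paper handles unmatched vertices by padding $G$ to a complete bipartite graph on equal-size parts and working with minimum-weight \emph{perfect} matchings (rather than your dummy-edge treatment), and that for the dominant variants the paper uses a single perturbed instance---subtracting $\varepsilon=\tfrac{1}{|M|+1}$ from each original edge and testing for a perfect matching of weight at most $-1$---instead of iterating over all cardinalities $s>|M|$; both alternatives are polynomial and yield the same conclusion.
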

\begin{proof}

The case of \popML\ and \domML\ is trivial, because we only need to verify that $M$ is popular (resp. dominant) in polynomially many preference profiles in the input size, and verifying if a matching is popular can be done in polynomial-time~\cite{huang2013popular}. 

    We may always assume without loss of generality that $|A|=|B|$. Otherwise, we can add isolated agents to the instance, whose vote will always be 0 when comparing any two matchings and the other agents' votes are unaffected.

    Let $I$ be an instance of \popun\ or \poprob\ with $G=(A,B,E)$ and $\mathcal{L}$ denote the set of possible preference profiles. Let $M$ be a matching. First, we extend $G$ to be a complete bipartite graph by adding all missing edges. Call this new graph $H$. Then, we create a weight function $w:E(H)\to \mathbb{Z}$ as follows. For each $e=(a,b)\in E$, we let $w(e)=\vote^*_a(M(a),e)+\vote^*_b(M(b),e)$. Here, $M(a)$ is the edge incident to $a$, if there is any, otherwise it is $\emptyset$. Furthermore, $\vote_a^*(g,f)$ is defined to be $-1$, if there is a possible preference list, where $a$ prefers his partner in $f$, $0$ if $g=f$ and $+1$, if $a$ always prefers his partner in $g$ (so in the $k$-robust case, this corresponds to that partner being at least $(k+1)$-better). For an edge $e=(a,b)\notin E$, we let $w(e)=\vote_a^*(M(a),\emptyset )+\vote_b^*(M(b),\emptyset )$  (here $\vote_a^*(M(a),\emptyset )=+1$, except the case when $M(a)=\emptyset$, when it is $0$).

    We claim that $M$ is popular, if and only the minimum weight perfect matching in $H$ has weight at least $0$. Indeed, for any perfect matching $N'$ of $H$, if we let $N=N'\cap E$, then $w(N')$ is exactly $\min_{L\in \mathcal{L}}\Dvote^L(M,N)$ by the definition of the weights and the fact that since $N'$ is perfect, we count the smallest possible vote for every agent in $A\cup B$ exactly once. Hence, $M$ is popular, if and only if there is no perfect matching $N'\subset E$ such that $w(N')<0$. This can be verified in polynomial-time with various methods, e.g., with the Hungarian method~\cite{kuhn1955hungarian}. 
    
    For the case of \domun\ and \domrob, we can decide similarly if $M$ is certainly popular, and if not, then conclude that it is not certainly dominant. To decide whether there is a larger matching $N$, such that $\Dvote (M,N)=0$ in some possible preference profile, we subtract $\varepsilon=\frac{1}{|M|+1}$ from each original edge's weight and check whether there is a perfect matching with weight at most $-1$. If there is a perfect matching $N'$ of weight at most $-1$, then for $N=N'\cap E$, we have that $|N|\ge |M|+1$, as $|M|$ is popular. Also, $|N|\le 2|M|$, because any popular matching is (inclusionwise) maximal. Hence, the weight of $N$ cannot be $1$ or more without subtracting the $|N|$ many $\varepsilon$ values, so it is exactly $0$. Hence, $N$ is larger than $M$, but $\Dvote (M,N)=0$ in some possible preference profile, so $M$ is not certainly dominant. In the other direction, if $M$ is not certainly dominant, then there is a matching $N$ with $|N|\ge |M|+1$ and $\Dvote (M,N)=0$ in some possible preference profile, so $N$ can be extended to a perfect matching $N'$ with weight at most $-1$. 

    Finally, let $I$ be an instance of \popunsum. Let $\mathcal{L}=\{ L_1,\dots,L_k\}$ be the set of preference profiles. Again, we create $H$ by making the graph $G$ a complete bipartite graph. Then, we create a weight function $w:E(H)\to \mathbb{Z}$ as follows. If $e=(a,b)\in E$, then $w(e)=\sum_{L\in \mathcal{L}}(\vote_a^L(M(a),e)+\vote_b^L(M(b),e) )$. If $e=(a,b)\notin E$, then $w(e)=\sum_{L\in \mathcal{L}}(\vote_a^L(M(a),\emptyset )+\vote_b^L(M(b),\emptyset ) )$. 

    Again, the problem of deciding whether $M$ is sum-popular, becomes equivalent to deciding if there is a perfect matching $N'$ in $H$ such that $w(N')<0$. For \domunsum, we can check dominance too with the same trick of subtracting $\frac{1}{|M|+1}$ from the weight of the original edges.
    \end{proof}

\subsection{Dominant Matchings}\label{subsec:dom}

We start by discussing our results for the two-sided markets in the case of dominant matchings. Our first result is the hardness of \domML. 

\begin{theorem}
\label{thm:domML}
    \domML\ is NP-hard, even for a fixed constant $k=2$.
\end{theorem}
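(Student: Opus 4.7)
The plan is to reduce from \popf, which is NP-complete by Theorem~1. Given an instance $(I,(x,y),z)$ of \popf\ with the structural restrictions $V(x)=\{a,y\}$, $V(a)=\{x\}$, and $(x,y)$ being a mutually top-ranked edge, I would construct a two-layer instance $(G',L_1,L_2)$ of \domML\ such that a certainly dominant matching exists if and only if $I$ admits a popular matching containing $(x,y)$ and covering $z$.

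The construction strategy is to build $G'$ from $G$ by attaching two local gadgets: one attached to $a$ that is activated in $L_1$ to enforce $(x,y)\in M$, and one attached to $z$ that is activated in $L_2$ to enforce $z\in V(M)$. A gadget is activated in a given layer by setting its preferences so that it changes the size of every maximum-size popular matching; it is inert in the other layer by placing its auxiliary vertices at the bottom of the relevant preference lists so that the structure of popular matchings projected to the original graph is undisturbed. The original $I$-agents keep the preferences of $I$ in both layers, which guarantees that the restriction of any popular matching of $(G',L_i)$ to the original agents is a popular matching of $I$.

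The edge-forcing gadget is a single auxiliary agent $a'\in A$ adjacent only to $a$. In $L_1$, one sets $x\succ_a a'$. Then any popular matching of $(G',L_1)$ containing $(x,y)$ leaves $a$ free to be matched with $a'$, increasing the popular-matching size by one; a popular matching with $(x,a)$ instead must leave $a'$ unmatched. Since dominant matchings are maximum-size popular matchings, dominance in $L_1$ forces $(x,y)\in M$. The cover-$z$ gadget is designed analogously in $L_2$: auxiliary vertices are attached near $z$ and their preferences are tuned so that a popular matching of $(G',L_2)$ that covers $z$ through an original edge of $G$ admits a strictly larger popular extension in $G'$ than one that leaves $z$ uncovered. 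Combining the two layers, a certainly dominant $M$ must, after restricting to $G$, be popular in $I$, contain $(x,y)$ and cover $z$; conversely any such popular matching of $I$ extends uniquely to a certainly dominant matching in $(G',L_1,L_2)$.

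The main obstacle is designing the cover-$z$ gadget. A careful construction is required because forcing ``$z$ covered'' is subtler than forcing a specific edge: the naive ``add a high-priority new partner to $z$'' construction has the opposite effect, matching $z$ inside the gadget rather than inside $I$. I expect to need a small multi-vertex gadget, combined with an analysis invoking the Cseh--Kavitha characterization of dominant matchings as stable matchings in a derived instance $I'$, rather than a pure size argument, because dominant matchings form only a distinguished subset of maximum-size popular matchings. A secondary subtlety is that gadget agents cast additional votes in head-to-head comparisons on $G'$, and one must verify that these extra votes cannot accidentally rescue a matching whose restriction to $G$ fails popularity in $I$.
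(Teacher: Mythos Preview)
Your approach diverges fundamentally from the paper's, and it has a genuine gap that is not merely a matter of filling in details.

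The paper does \emph{not} reduce from \popf. Instead it reduces from \textsc{2-stable-multilayer} (the problem of finding a matching stable in two given preference profiles, known NP-hard by Miyazaki--Okamoto). Assuming the bipartite graph is complete with $|A|=n\le |B|$, the paper simply attaches to each $a_i\in A$ a degree-one pendant $a_i'$ placed last in $a_i$'s list in \emph{both} layers. A certainly stable matching in the original instance is perfect on $A$, hence maximum-size and therefore certainly dominant in the augmented instance; conversely, from a certainly dominant matching one argues that no pendant edge is used and that any blocking pair $(a_i,b_j)$ would let one reroute through $(a_l,a_l')$ to produce a dominating matching. The two layers play no asymmetric role in the gadget; all the hardness is inherited from the joint-stability problem.

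Your plan, by contrast, tries to use the two layers asymmetrically to enforce two separate constraints on a \emph{popular} matching of $I$. The critical gap is the ``if'' direction: you need that whenever $I$ has a popular matching $M$ containing $(x,y)$ and covering $z$, its extension $M'$ is \emph{dominant} in both layers. But $M$ need not be dominant, or even maximum-size popular, in $I$. Your edge-forcing gadget only ensures that popular matchings of $(G',L_1)$ containing $(x,y)$ are one larger than those containing $(x,a)$; it does nothing to prevent some \emph{other} popular matching of $I$ (say one omitting $(x,y)$ but of much larger size) from extending to a popular matching of $(G',L_1)$ that is at least as large as $M'$ and ties with it. In that case $M'$ is not dominant in $L_1$, and your equivalence fails. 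The size-bump argument works for forcing a single edge when all popular matchings of $I$ have equal size, but \popf\ carries no such guarantee. This is precisely why the paper reserves the \popf\ reduction for \popML\ (Theorem~\ref{thm:popML}), where only popularity is demanded, and switches to the stability-based source problem for \domML. Your acknowledged missing cover-$z$ gadget is a second gap, but even with it the forward direction would not go through as stated.
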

\begin{proof}
    We reduce from the NP-complete problem \textsc{2-stable-multilayer}, which consists of finding a certainly stable matching in a bipartite graph $G=(A,B,E)$ in two preference profiles $L_1,L_2$. This is shown to be NP-complete by \cite{miyazaki2017jointly}. We can also suppose that the graph is a complete bipartite graph, by adding the remaining edges and extending the preferences in $L_1$ and $L_2$ by putting every new neighbor of an agent to the end of his preference list in the same arbitrary order in both $L_1$ and $L_2$. 

    Let $I$ be an instance of \textsc{2-stable-multilayer}. We create an instance $I'$ of \domML. Suppose by symmetry that $n=|A|\le |B|=m$. Then, we add $n$ more agents $a_1',\dots,a_n'$, such that $a_i'$ only considers $a_i$ acceptable and $a_i'$ is the worst in $a_i$'s preference list in both $L_1$ and $L_2$. Otherwise, we just keep the same preferences for each agent to obtain $L_1'$ and $L_2'$. 

    We claim that there is a matching $M$ that is certainly stable in $I$ if and only if there is a certainly dominant matching in $I'$.

    Let $M$ be a certainly stable matching in $I$. Then, $M$ matches every agent in $A$, as the preference lists are complete. Therefore, $M$ is also certainly stable in $I'$, as no $(a_i,a_i')$ edge blocks it. Hence $M$ is certainly popular, as stable matchings are always popular. \cite{gardenfors1975match} Moreover, as one side of the bipartite graph in $I'$ has size $n$, it follows that $M$ is a maximum size matching, therefore $M$ is certainly dominant too. 

    For the other direction, suppose that there is a certainly dominant matching $M$. Observe that no edge $(a_i,a_i')$ can be included in $M$, as otherwise there would be an unmatched agent $b_j\in B$, so $M\setminus \{ (a_i,a_i')\} \cup \{ (a_i,b_j\}$ would dominate $M$ in both profiles, contradiction ($a_i$ and $b_j$ improve and only $a_i'$ geets worse). Hence, $M$ is a matching in $I$ too. Suppose that there is a blocking edge $(a_i,b_j)$ to $M$ in $L_i$. Let $M(a_i)=b_k$, $M(b_j)=a_l$. Then, $M\setminus \{ (a_i,b_k),(a_l,b_j)\} \cup \{ (a_i,b_j),(a_l,a_l')\}$ dominates $M$ in $L_i'$, as $a_i,b_j,a_l'$ all improve and only $a_l,b_k$ get worse,  contradiction. 
\end{proof}

We proceed by describing a well known reduction that connects dominant and stable matchings. Let $I=(G,L)$, $G=(A,B,E)$ be an instance of \pop. We create an instance $I'$, by adding two parallel edges $x(e),y(e)$ for each edge $e\in E$. Then, we create the preference profile in $I'$ (over the edges now, as we have parallel edges in the new instance) from $L$, by the following rule. For each agent $a\in A$, we first rank all the edges of type $x$ incident to it, in the same order as $a$'s preference over his neighbors, then we rank all incident edges of type $y$, again in the order of $a$'s preference list. So if $a$ had preference list $b_1\succ b_2$, then it becomes $x(b_1)\succ x(b_2)\succ y(b_1) \succ y(b_2)$. For the agent $b\in B$, we do it similarly, by interchanging the order of the $x$ and $y$ copies. The following theorem is known.

\begin{theorem}[\cite{kavitha2014size,kamiyama2020popular}]
    \label{thm:connection}
    Every stable matching $M'$ in $I'$ gives a dominant matching $M=f(M')$ in $I$, where $f(M')=\{ e\mid |M'\cap \{ x(e),y(e)\} |\ge 1\}$. Furthermore, for every dominant matching $M$ in $I$, there is a matching $M'$ in $I'$ such that $M=f(M')$.
\end{theorem}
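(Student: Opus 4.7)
The plan is to pass between $M'$ in $I'$ and $M = f(M')$ in $I$ via a $\pm$ labeling of the agents matched in $M'$ based on the type ($x$ or $y$) of their incident copy. Since $x(e)$ and $y(e)$ share both endpoints, $M'$ contains at most one copy per original edge, so $f(M')$ is a well-defined matching with $|f(M')| = |M'|$, which takes care of well-definedness.

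For the forward direction (stable $M'$ yields dominant $M$), I would label each $a \in A$ as $+$ if $M'(a) = x(M(a))$ and $-$ otherwise; on $B$ the convention is reversed, so that a $+$ label always corresponds to being matched to the preferred copy-type at that agent. For any matching $N$ in $I$, I lift to $N'$ in $I'$ by choosing, for each $e = (a,b) \in N$, the copy matching the label of $e$'s $A$-endpoint in $M'$ (consistent because $N$ is a matching, so each $B$-agent is touched at most once). A case analysis over the four endpoint-label combinations for each $e \in N$, together with accounting for endpoints unmatched in $M$ or $N$, would yield an inequality of the form
\[
\Delta^{L'}(N', M') \;\ge\; \Delta^L(N, M) + |N| - |M|.
\]
Combining with $\Delta^{L'}(N', M') \le 0$ (stable matchings in $I'$ are popular) gives $\Delta^L(M, N) \ge |N| - |M|$, which yields popularity for $|N|=|M|$ and strict beating for $|N|>|M|$. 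The case $|N|<|M|$ is handled by a separate but standard argument, e.g.\ by augmenting $N$ along an $M$-augmenting path in $M \triangle N$ and applying the same analysis to the enlarged matching.

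For the converse, given a dominant $M$ in $I$, I would invoke a witness labeling $\ell$ on $V(M)$ from the Huang--Kavitha-style combinatorial/LP characterization of dominant matchings: $\ell$ assigns $\pm$ to matched agents with specific compatibility conditions certifying both popularity and strict beating of larger matchings. Define $M'$ by taking $x(e)$ for each $e=(a,b)\in M$ with $\ell(a)=+$ and $y(e)$ otherwise; then $f(M')=M$ and $|M'|=|M|$ by construction. Stability of $M'$ in $I'$ is verified by a direct blocking-pair analysis on each potential copy $z(e)$, $z\in\{x,y\}$: because of the asymmetric $x$-vs-$y$ preferences, any putative blocker would translate into a popularity- or dominance-violating alternating structure for $M$ in $I$, contradicting the witness.

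The main obstacle is the four-case edge analysis in the forward direction, where one must track how the $x$-versus-$y$ choice at each $e \in N$ affects the votes of both endpoints in $I'$ relative to $I$ and use stability of $M'$ to bound the loss by exactly the size correction $|N|-|M|$. The converse is more conceptual but depends on producing the correct witness labeling for dominance, a standard but nontrivial structural step.
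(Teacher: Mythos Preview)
The paper does not prove this theorem at all: it is stated with a citation to \cite{kavitha2014size,kamiyama2020popular} and used as a black box in the proof of Theorem~\ref{thm:domun}. There is therefore nothing in the paper to compare your proposal against.

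For what it is worth, your outline is broadly in the spirit of the cited literature (a $\pm$ labeling of agents according to the copy-type they are matched to, and a lifting of a competitor $N$ to $N'$ in $I'$), but a few points are loose as written. First, your labeling is only defined for agents matched in $M'$; you need to say what happens for unmatched agents when you lift $N$ to $N'$. Second, the displayed inequality $\Delta^{L'}(N',M') \ge \Delta^L(N,M) + |N| - |M|$ is asserted rather than derived; in the actual proofs the correct inequality and its sign conventions come out of a careful per-edge case analysis together with the stability of $M'$, and getting the direction right is exactly where the work lies. Third, your treatment of $|N|<|M|$ by ``augmenting $N$ along an $M$-augmenting path'' does not obviously terminate in the desired bound; the standard route is instead to show the single inequality for all $N$ simultaneously (via the dual/witness certificate for dominance), which subsumes the small-$N$ case. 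None of this is fatal, but since the paper simply imports the result, you are effectively reproving a theorem from \cite{kavitha2014size,kamiyama2020popular} rather than filling a gap in the present paper.
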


The only problem with Theorem \ref{thm:connection} is that for a dominant matching $M$, the stable matching $M'$ with $f(M')=M$ also depends on the preference profile in $I$. Hence, finding a certainly dominant matching does not reduce trivially to finding a certainly stable matching, as for a certainly dominant matching, its preimage in $I'$ may be different for each preference profile, so there may not be a certainly stable matching in $I'$ that gives $M$.

We show that we can still utilize this connection to the stable matching problem to solve \domun\ and \domrob. 

\begin{theorem}
\label{thm:domun}
    \domun\ and \domrob\ can be solved in polynomial-time. 
\end{theorem}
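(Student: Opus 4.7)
The starting point is the reduction already sketched before the theorem: setting $P_u = \{\succ_u^v : v \in V(u)\}$, where $\succ_u^v$ is $\succ_u$ with $v$ moved up $k$ positions, reduces \domrob\ in polynomial time to \domun. So it suffices to give a polynomial algorithm for \domun.

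For an instance $I$ of \domun, my plan is to combine Theorem~\ref{thm:connection} with the polynomial-time algorithm of Aziz et al.~\cite{aziz2016stable} for certainly stable matchings in the lottery model. I would construct the doubled-edge auxiliary instance $I'$ exactly as described before Theorem~\ref{thm:connection}: each edge $e$ is replaced by parallel copies $x(e)$, $y(e)$, and each possible preference list $\succ_u^i \in P_u$ of $I$ is translated into the corresponding derived preference on the copies in $I'$, producing an uncertain-preferences instance with $|P_u'| = |P_u|$ for every agent. Running Aziz et al.'s algorithm on $I'$ and outputting $f(M')$ for any returned certainly stable $M'$ then reduces correctness to the following claim: $M$ is certainly dominant in $I$ if and only if some single matching $M'$ in $I'$ simultaneously satisfies $f(M') = M$ and is stable in every preference profile of $I'$.

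The backward direction of the claim is immediate from Theorem~\ref{thm:connection} applied profile by profile. The main technical obstacle is the forward direction, since Theorem~\ref{thm:connection} only provides a per-profile matching $M'_L$ a priori, whereas we need a single profile-independent $M'$. The plan for this is to pick $M'$ canonically from $M$ using the $\{+,-\}$-labeling characterization of dominant matchings: for each $e = (a,b) \in M$, the labels of $a$ and $b$ determine whether $x(e)$ or $y(e)$ lies in $M'$, and I would argue these labels are profile-invariant whenever $M$ is dominant, because dominance forces every unmatched agent to be labeled $-$, and the remainder of the labeling is then pinned down by the $M$-alternating structure in the bipartite graph rather than by the preference orderings themselves. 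Edges $e \notin M$ are handled by the profile-independent default copies of the construction. Verifying this invariance, which leans on the dual certificate characterization of popularity/dominance, is the step I expect to need the most care; the rest is algorithmic bookkeeping.
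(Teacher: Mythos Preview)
Your high-level plan coincides with the paper's: build the doubled-edge instance $I'$ from $I$, translate each list in $P_u$ to its derived list on $\{x(e),y(e)\}$, and hand $I'$ to the certainly-stable algorithm of Aziz et al. The backward direction (certainly stable $M'$ in $I'$ $\Rightarrow$ $f(M')$ certainly dominant in $I$) is indeed immediate from Theorem~\ref{thm:connection} applied per profile. The divergence is in how you handle the forward direction.

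Your route is to argue that the $\{+,-\}$-labels attached to a dominant $M$ are profile-invariant, hence the canonical preimage $M'$ is the same in every profile. The concern is the justification you sketch: the part of the labeling beyond the unmatched-vertex convention is \emph{not} determined by the $M$-alternating structure of the graph alone. The valid witnesses for dominance are characterized by edge constraints that compare each endpoint's current partner to the other endpoint, so they are preference-dependent; a fixed $M$ can admit different valid partitions under different profiles. You might still be able to show that some common witness exists, but that is no longer ``pinned down by alternating structure'' and needs a real argument, not the one you outlined.

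The paper bypasses this with a single-profile trick. It manufactures one (not necessarily possible) list $\succ_u'$ per agent by placing above $M(u)$ every $v$ that beats $M(u)$ in \emph{some} $\succ_u^i\in P_u$, and below $M(u)$ every $v$ that loses in \emph{all} of them. Then $\vote_{\succ_u'}(M,N)=\min_i \vote_{\succ_u^i}(M,N)$, so certain dominance of $M$ in $I$ forces dominance of $M$ in the artificial profile $\succ'$. One application of Theorem~\ref{thm:connection} to this single profile yields a single $M'$, and a short direct blocking-edge check---using only the above/below-$M(u)$ dichotomy, which by construction of $\succ'$ subsumes every possible list---shows $M'$ is stable in every derived profile of $I'$. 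This sidesteps any claim about label invariance and makes the forward direction a short computation rather than a structural theorem.
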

\begin{proof}
We have already seen that it is enough to give an algorithm for \domun.

Let $I$ be an instance of \domun. We create an instance $I'$ of \textsc{stable-uncertain}, where for each edge $e$, we create two copies $x(e)$ and $y(e)$ as in Theorem \ref{thm:connection}, and for each agent $u$, his possible preference lists are created from his original preference lists $\succ_u^i$, such that if $u\in A$, then we first rank the $x$ copies of the edges according to $\succ_u^i$ and then the $y$ copies in the same order, and if $u\in B$, then we rank the $y$ copies first and the $x$ copies last. 

By Theorem \ref{thm:connection}, we get that if there is a certainly stable matching in $I'$, then that gives us a certainly dominant matching $M$ in $I$. Therefore, our goal is to show that if there is a certainly dominant matching $M$ in $I$, then there is a certainly stable matching $M'$ in $I'$. For this, we first create a strict preference list $\succ_u'$ for each $u\in A\cup B$, such that for any $v\in V(u)$, if $v\succ_u^iM(u)$ for some possible preference list $\succ_u^i$ of $u$, then $v\succ_u'M(u)$, and if $M(u)\succ_u^iv$ for all possible preference lists, then $M(u)\succ_u'v$. Otherwise, the preferences between the agents that are both better or both worse than $M(u)$ are created in an arbitrary way. As $M$ is certainly dominant, it is also dominant with the preferences $\succ_u'$, $u\in A\cup B$, because $\min_{i\in k}\{\vote_{\succ_u^i}(M,N)\}=\vote_{\succ_u'}(M,N)$ for any matching $N$. 

By Theorem \ref{thm:connection}, this matching $M$ gives a stable matching $M'$ with respect to the extensions of the $\succ_u'$ preferences. We claim that it is certainly stable in $I’$. Suppose that an edge $x(a,b)$ blocks $M'$ with the extension of some preference lists of $a$ and $b$. Then, if $a$ is not matched by an $x$ copy, $a$ prefers this edge in the extension of $\succ_a'$ too. Also, $b$ must be unmatched or matched with a $x$-type copy and in the latter case, there must be a preference list $\succ_b^i$, such that $a\succ_b^iM(b)$. However, then $x(a,b)\succ_b'M'(b)$, contradicting that $M'$ is stable. If $a$ is matched with an $x$-type copy, then the fact that $x(a,b)$ blocks imply that $b\succ_a^iM(a)$ for some $i$ and so $x(a,b)\succ_a'M'(a)$. Similarly as before, we get that $x(a,b)\succ_b'M'(b)$ should also hold and therefore $M'$ is not stable, contradiction. The case for edges of type $y$ is analogous. 

Hence, we conclude that $M'$ is certainly stable, and therefore the theorem follows. 
\end{proof}

\subsection{Popular matchings}\label{subsec:pop}

We continue the section with our results for popular matchings.
We state a simple observation that we are going to use many times in our proofs without further reference. 

\begin{observation}
If $a,b$ are best in each other preference lists, then in any popular matching $M$, it holds that both $a$ and $b$ are covered by $M$.    
\end{observation}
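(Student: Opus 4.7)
The plan is to argue by contradiction. Assume $M$ is popular, and without loss of generality that $a$ is uncovered by $M$; the goal is to exhibit a matching $M'$ with $\Delta^L(M',M) \geq 1$, which contradicts popularity of $M$.

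The natural candidate is a matching containing the edge $(a,b)$, formed by removing the (at most one) edge of $M$ incident to $b$ and then inserting $(a,b)$. The argument splits into two cases: if $b$ is also uncovered in $M$, set $M' = M \cup \{(a,b)\}$; if $b$ is matched in $M$ to some $b' \neq a$, set $M' = (M \setminus \{(b,b')\}) \cup \{(a,b)\}$. Both are valid matchings, since $a$ is free in $M$ and the only $M$-edge at $b$, if any, is deleted before $(a,b)$ is inserted.

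The vote tally is then immediate from the assumption that $a$ and $b$ are first in each other's lists: $a$ votes $+1$ (gaining its best neighbor from being unmatched), $b$ votes $+1$ (switching to its best neighbor $a$), at most one displaced agent $b'$ votes $-1$, and every other agent has an unchanged partner and votes $0$. Therefore $\Delta^L(M',M) \in \{1,2\}$, yielding the contradiction. The symmetric situation in which $b$ (rather than $a$) is uncovered is handled identically by swapping the roles of $a$ and $b$.

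Since the argument boils down to a single swap and a count of at most four non-zero votes, no genuine obstacle is expected; the only step requiring a line of justification is that the swap produces a valid matching, which holds immediately from $a$ being unmatched in $M$.
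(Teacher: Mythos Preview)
Your proposal is correct and follows essentially the same argument as the paper: assume one of $a,b$ is unmatched, add the edge $(a,b)$ (removing $b$'s current edge if any), and observe that both $a$ and $b$ strictly improve while at most one agent gets worse. The paper compresses this into a single sentence, while you spell out the two cases and the vote count explicitly, but the underlying idea is identical.
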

Indeed, if one of them is unmatched, then by coming together, both $a$ and $b$ would strictly improve, and at most one agent gets worse off. 

\begin{theorem}
\label{thm:popML}
    \popML\ and \popun\ are NP-complete, even if uncertainty occurs only on one side and a fixed constant $k=2$.
\end{theorem}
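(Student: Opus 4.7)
The plan is to reduce from \popf, which is NP-complete by the cited theorem of Faenza, Powers and Zhang. Given an instance $(I, e, z)$ with $e = (x,y)$, $V(x) = \{a, y\}$, $V(a) = \{x\}$, $z \in B \setminus \{a, y\}$, and $x,y$ first in each other's lists, I would construct an instance $I'$ of \popML\ on a bipartite graph $G'$ obtained by augmenting $G$ with small gadgets attached around $x, a$ and around $z$, together with two preference profiles $L_1, L_2$ that differ only on one side (say $A$). The correspondence to be established is: $I'$ admits a certainly popular matching iff $I$ admits a popular matching that contains $e$ and covers $z$. The \popun\ version will follow from exactly the same construction by setting $P_u = \{\succ_u^1, \succ_u^2\}$ for each agent $u$ on the uncertain side (with duplicates when the preference does not change across layers), since a matching that is popular in both extreme profiles is popular in every convex combination, and conversely the two extreme profiles are always in the cartesian product $\times_u P_u$.

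The design idea is to use the two layers to enforce the two \popf-constraints separately. Since $x,y$ are first in each other's lists, the observation just before the theorem guarantees that both are covered in every popular matching, so if $(x,y) \notin M$ then necessarily $M(x) = a$. Layer $L_1$ will be essentially $L$ augmented by a gadget attached to $x, a$ whose effect is that $M(x) = a$ enables a small local reassignment that dominates $M$ in $L_1$, forcing $(x,y) \in M$ in every matching popular in $L_1$. Layer $L_2$ will be essentially $L$ augmented by a gadget attached to $z$ that, whenever $z$ is uncovered, provides a local dominating swap in $L_2$, thereby forcing $z$ covered in every matching popular in $L_2$. Crucially, both gadgets will be designed so that for any popular matching of the original $I$ which already contains $e$ and already covers $z$, the gadget vertices can be matched canonically without creating any new dominating matching in either layer. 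Only $A$-side preferences on the gadget vertices will differ between $L_1$ and $L_2$.

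Given this setup, the forward direction is direct: starting from a popular $M$ in $I$ with $e \in M$ and $z$ covered, extend $M$ canonically on the gadgets and verify popularity in each layer via the weighted perfect matching criterion from Theorem \ref{thm:2sideverif}. For the reverse direction, given a certainly popular $M'$ in $I'$, first argue that the restriction $M := M' \cap E(I)$ is popular in $I$ (any dominating matching in $I$ lifts to a dominating matching in both layers), then argue that failure of $e \in M$ produces a dominating matching in $L_1$ via the first gadget, and failure of $z$ being covered produces a dominating matching in $L_2$ via the second gadget, contradicting certain popularity. Membership in NP follows from Theorem \ref{thm:2sideverif}.

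The main obstacle will be designing the two gadgets so that they enforce their respective constraints while (i) not introducing spurious dominating matchings that would block otherwise valid popular matchings of $I$, and (ii) keeping preference differences restricted to one side. The $e$-forcing gadget is the delicate one: the alternative $M(x) = a$ is a priori a perfectly valid popular configuration in $I$ (since $a$ has degree one), so the gadget must create a tight local imbalance that is activated precisely when $(x,a) \in M$, without disturbing the popularity comparisons on the rest of the graph. The $z$-covering gadget is analogous to standard forced-vertex gadgets and should be easier. Once both gadgets are in place, the equivalence can be checked by a case analysis of the local structure around the gadget vertices in any candidate dominating matching.
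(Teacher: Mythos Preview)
Your high-level plan matches the paper's: reduce from \popf, attach gadgets near $a,x$ and near $z$, and argue equivalence. However, what you have written is a plan, not a proof. The entire technical content of the theorem lies in the gadget construction and in the vote-counting that shows the restriction of a certainly popular $M'$ to $I$ is popular, and you have supplied neither. You even flag the $e$-forcing gadget as ``the main obstacle'' and leave it undone. In the paper this is indeed where all the work is: the gadget vertices $a_1',d_a,d_b,z',d_z$ interact in a nontrivial way, and the argument that the restriction $M$ is popular in $I$ requires a careful case analysis (including, in one case, modifying the candidate dominating matching $N'$ to a different $N''$). Your one-line claim that ``any dominating matching in $I$ lifts to a dominating matching in both layers'' is exactly the step that is not automatic, because the gadget vertices can absorb or create votes; the paper spends a full claim tracking how the votes of $a_1,b_i,a_1',d_a,d_b$ shift. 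Also note that the paper's separation is not ``$L_1$ forces $e$, $L_2$ forces $z$'': both constraints are enforced by the \emph{pair} of layers (for $z$, one layer handles the case $(z,d_z)\in M'$ and the other handles $(z,d_z)\notin M'$), so the one-gadget-per-layer design you sketch may not close as cleanly as you expect.

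There is also a genuine error in your passage from \popML\ to \popun. You write that ``a matching that is popular in both extreme profiles is popular in every convex combination''. Popularity is not a convexity statement: if several agents each have two possible lists, the cartesian product contains mixed profiles that are neither $L_1$ nor $L_2$, and popularity in $L_1$ and $L_2$ does \emph{not} imply popularity in those mixed profiles. The paper handles this correctly by proving the forward direction (from a constrained popular matching in $I$) directly for \emph{all} profiles in $\times_u P_u$, and using the trivial implication only in the direction $I''\Rightarrow I'$. Your proposed chain $I\Rightarrow I'\Rightarrow I''$ breaks at the second arrow.
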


\begin{proof}
The problems are in NP by Theorem \ref{thm:2sideverif}.

We reduce from the NP-complete problem \popf.
Let $I$ be an instance of \popf. We create an instance $I'$ of \popML\ and an instance $I''$ of \popun\ with $k=2$ as follows. 

Let $G=(A,B;E)$ denote the underlying bipartite graph of the instance $I$, where $A=\{ a_1,\dots,a_n \}$ and $B=\{ b_1,\dots,b_m\}$. Let $(x,y)\in E$ be the forced edge with $x\in A$ and $y\in B$. By using the restrictions and by reindexing, we can assume that $x=a_n,y=b_m$ are each other's best choice, $V(x)=\{ a_1,y\}$ and that the forced vertex is some $z\in B\setminus \{ a_1,y\}$. We can also suppose by reindexing that the neighbors of $y$ are $x,b_1,\dots,b_l$. 

In $I'$ and $I''$, we keep all vertices of $G$. Furthermore, we add vertices $a_1',z',d_z$ and vertices $d_{a},d_{b}$. 

For $I'$, we create the preferences profiles in the two layers as follows. In profile $L_1$:
\begin{center}

\begin{tabular}{ll}
  $a_1:$ & $d_{a}\succ x$   \\
  $d_b:$ & $d_a\succ b_1\succ \cdots \succ b_l$ \\
  $a_1':$  & $d_a$ \\
  $z:$ & $[V(z)] \succ d_z$   \\
  $z':$ & $d_z$ \\
  \hline
   $b_i:$  & $[V(b_i)] \succ d_b$ \\
   $d_a:$ & $d_b\succ a_1' \succ a_1$ \\
   $d_z:$ & $z\succ z'$\\

\end{tabular}

\end{center}

Here, $[V(z)]$ and $[V(b_i)]$ denote the neighbors of $z$ and $b_i$ ranked according to their original orders, respectively.
For every other agent, their preferences are unchanged from $I$ in $L_1$.

In profile $L_2$, we have: 
\begin{center}
\begin{tabular}{ll}
  $a_1:$ & $d_{a}\succ x $ \\
  $d_b:$ & $d_a\succ  b_1\succ \cdots \succ b_l$ \\
  $a_1':$  & $d_a$ \\
  $z:$ & $[V(z)] \succ d_z$   \\
  $z':$ & $d_z$ \\
  \hline
   $b_i:$  & $[V(b_i)\setminus y] \succ d_b \succ y$  \\
   $d_a:$ & $d_b\succ a_1 \succ a_1'$ \\
   $d_z:$ & $z'\succ z$

\end{tabular}
\end{center}
For every other agent, their preferences are unchanged from $I$ in $L_2$.

For $I''$, we define the sets $P_u$ for each agent $u$ as the union of their two preference lists in $L_1$ and $L_2$.

It is straightforward to verify that each agent on the side $A$ has the same preferences in both $L_1$ and $L_2$, so uncertainty occurs on one side of the instance only. 

The reduction is illustrated in Figures \ref{fig:orig}, \ref{fig:L1} and \ref{fig:L2} for a given instance $I$. 

\begin{figure*}
    \centering
    \includegraphics{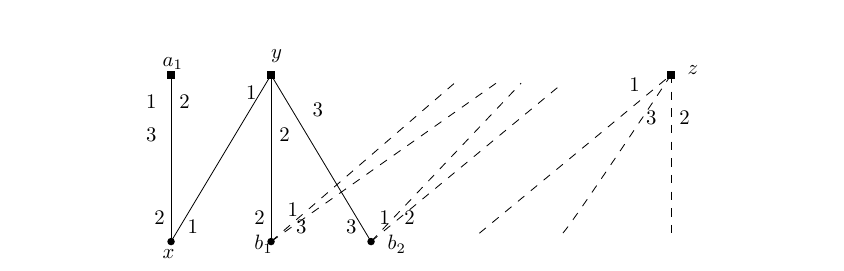}
    \caption{An instance $I$ of \popf\ with forced edge $(x,y)$ and forced vertex $z$. The numbers over the edges indicate the preferences.}
    \label{fig:orig}
\end{figure*}

\begin{figure*}
    \centering
    \includegraphics{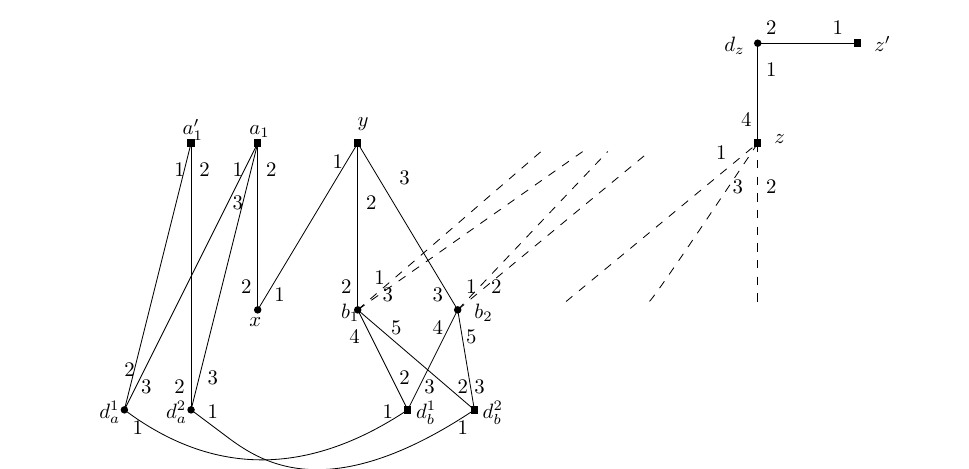}
    \caption{The instance $I'$. The numbers illustrate the preferences in the preference profile $L_1$ created from $I$. }
    \label{fig:L1}
\end{figure*}

\begin{figure*}
    \centering
    \includegraphics{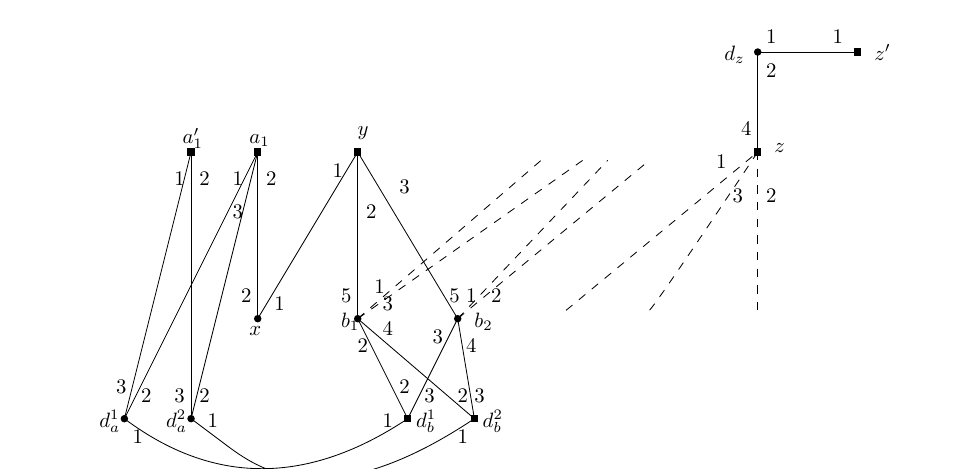}
    \caption{The instance $I'$. The numbers illustrate the preferences in the preference profile $L_2$ created from $I$.}
    \label{fig:L2}
\end{figure*}

We claim that there is a certainly popular matching $M'$ in $I'$ if and only if there is a certainly popular matching $M''$ in $I''$ if and only if there is a popular matching $M$ in $I$ that covers $z$ and includes the edge $(x,y)$.

\begin{lemma}
    If there is a matching $M'$ that is popular in both $L_1$ and $L_2$ in $I'$, then there is a matching $M$ that is popular in $I$, covers $z$ and contains $(x,y)$.
\end{lemma}
\begin{proof}
    Let $M'$ be a matching that is popular in both $L_1$ and $L_2$. Set $M$ to be the matching we obtain by restricting $M'$ to $G$, that is $M=\{ (u,v)\in M' \mid u,v\in V(G)\}$.

    \begin{claim}
        $z$ is covered in $M$.
    \end{claim}
\begin{claimproof}
    Suppose for the contrary that $z$ is not covered in $M$. 
Then, if $(z,d_z)\in M'$, then $z'$ is unmatched in $M'$, and therefore $M'\setminus \{ (z,d_z) \} \cup \{ (z',d_z)\} $ dominates $M'$ in $L_2$, because $d_z$ and $z'$ votes with $-1$, $z$ votes with $+1$ and everyone else votes with 0. 

    Otherwise, if $(z,d_z)\notin M'$, then $z$ in unmatched in $M'$, so $(z',d_z)\in M'$ and $M'\setminus \{ (z',d_z) \} \cup \{ (z,d_z)\} $ dominates $M'$ in $L_1$, as $z$ and $d_z$ vote with $-1$ and only $z'$ votes with $+1$.
\end{claimproof}
\medskip 

Note that this also shows that $(z',d_z)\in M'$.
    
Next we show that $(x,y)$ must be included in $M$. 

\begin{claim}
    It holds that $(x,y)\in M$. 
\end{claim}
\begin{claimproof}
    Suppose for the contrary that $(x,y)\notin M$. Then, $(x,y)\notin M'$. As $x,y$ consider each other best by the assumptions of \popf, we get that both of them must be matched in $M'$, so $(x,a_1)\in M'$ and $(y,b_i)\in M'$ for some $i\in [l]$. 
    
    If $(d_a,d_b)\in M'$, then $M'\setminus \{ (x,a_1),(y,b_i),(d_a,d_b)\} \cup \{ (x,y),(a_1,d_a),(b_i,d_b)\}$ dominates $M'$ in $L_2$, as $x,y,a_1,b_i$ all vote with $-1$, and only $d_a,d_b$ vote with $+1$, contradiction.

    Otherwise, as $d_a,d_b$ consider each other best in both $L_1$ and $L_2$, each must be matched in $M'$, so $(a_1',d_a),(b_j,d_b)\in M'$. But then,  $M'\setminus \{ (x,a_1),(y,b_i),$ $(a_1',$ $d_a),(b_j,d_b)\} \cup \{ (x,y),(a_1,d_a),(b_i,d_b)\}$ dominates $M'$ in $L_2$, because $x,y,a_1,b_i,d_a$ vote with $-1$ and only $a_1',d_b,b_j$ can vote with $+1$, contradiction. 
\end{claimproof}

\begin{claim}
    $M$ is popular in $I$.
\end{claim}
\begin{claimproof}
    Suppose for the contrary that there is a matching $N$ in $G$, such that $N$ dominates $M$ in $I$. Then, create a matching $N'$ from $N$, by adding the edges $\{ (d_a,d_b),(z',d_z)\}$.

    It is clear that when comparing $N'$ to $M'$, agents $z',d_z$ vote with 0, and $d_a,d_b$ vote with $\le 0$, as they get their best partner in $N'$ in both $L_1$ and $L_2$. Agent $a_1'$ may vote with $+1$, but if and only if he was matched to $d_a$ in $M'$ (and he is unmatched in $N'$), in which case the $-1$ vote of $d_a$ cancels it to sum up to $0$. Hence, the sum of votes of the agents $a_1',z,d_z,d_a,d_b$ are at most $0$ in $L_1$.

    Suppose that for each agent $v\in V(G)$, $\vote_v^{L_1} (M',N')\le \vote_v^{I}(M,N)$. Then, by our above observation we get that $N'$ dominates $M'$ too in $L_1$, contradiction. 

    Therefore, suppose that there is an agent $v\in V(G)$, such that $\vote_v^{L_1} (M',N')> \vote_v^{I}(M,N)$. Such an agent must have a better situation in $M'$ than in $M$, since each $v\in V(G)$ has the same partner in $N$ and $N'$. Therefore, $v$ can only be $a_1,b_i$ ($i\in [l])$ or $z$. We show that if $\sum_{v\in V(G)}\vote_v^{L_1} (M',N')-\sum_{v\in V(G)}\vote_v^{I}(M,N)=\eta >0$, then $\vote_{a_1'}^{L_1}(M,N)+\vote_{d_a}^{L_1}(M,N)+\vote_{d_b}^{L_1}(M,N)+\vote_{z'}^{L_1}(M,N)+\vote_{d_z}^{L_1}(M,N)\le -\eta$.

    If $v=z$, we know that $z$ had a partner from $V(G)$ in $M'$. Hence, he has the same situation in $M$ and $M'$, contradiction. 

    Suppose that $v=b_i$, for some $i\in [l]$. As $b_i$ has a better situation in $M'$ than in $M$ in $L_1$, we get that $b_i$ must be unmatched in $M$, but matched to $d_b$ in $M'$. However, as $d_b$ is worse than any $v\in V(G)$ neighbor in $L_1$, we obtain that if $b$ voted with $-1$ in the comparison $M$ versus $N$ in $I$, then he got a partner $v\in V(G)$ in $N$, so also votes with $-1$ in $M'$ versus $N'$ in $L_1$, contradiction. If $b_i$ was unmatched in $N$, then $b_i$ voted with $0$ in $M$ versus $N$, but he votes with $+1$ in $M'$ versus $N'$. However, then we have an additional $-1$ vote of $d_b$ that cancels this increase of $1$ to $\vote_{b_i}^I(M,N)$, so the sum of votes remain less than $0$.
    
    Finally, suppose that $v=a_1$. Then, $a_1$ was unmatched in $M$ (as $a_1$ had one neighbor $x$ and $(x,y)\in M$). If $(a_1,d_a)\notin M'$, then $a_1$'s vote does not differ in the two votings. Hence, suppose that $(a_1,d_a)\in M'$. Then, $(b_i,d_b)\in M'$ for some $i\in [l]$.

    Suppose first that $\vote_{a_1}^I(M,N)=-1$. This is only possible, if $a_1$ got matched to $x$ in $N$. Also, now $a_1$ votes with $+1$ in $M'$ versus $N'$ in $L_1$. First assume that $b_i$ was unmatched in $N$. Then, $\vote_{b_i}^I(M,N)=0$.
    In this case, we modify $N'$ to $N''$ as follows: delete $(d_a,d_b)$ and add $(a_1',d_a)$ and $(b_i,d_b)$. The restriction of this matching to $G$ is still $N$ and only the votes of $a_1,b_i,d_a,d_b$ may be different. Then, $a_1$ still votes with $+1$ in $M'$ versus $N'$ in $L_1$, but $d_a$ and $a_1'$ vote with $-1$, which together cancel out the increase of at most $2$ in $a_1$-s vote from $\vote_{a_1}^I(M,N)$. Also, $b_i$ and $d_b$ vote with 0, so $b_i$'s vote does not change. Hence, as $z'$ and $d_z$ also vote with 0, we obtain that this matching $N''$ dominates $M'$ in the profile $L_1$, as $\sum_{v\in \{ a_1',z,d_z,d_a,d_b\} } \vote_v^{L_1}(M',N'')\le -2$ and $\sum_{v\in V(G) } \vote_v^{L_1}(M',N'')<0 + 2=2$ contradiction.
    
     Next suppose that $b_i$ was matched in $N$. Then, $b_i$ is also matched in $N'$ to the same partner and has the same vote $-1$, as any $v\in V(b_i)$
   is better for $b_i$ than $d_b$ in $L_1$. Also, $a_1'$ was not matched to $d_a$ in $M'$, so we do not have to use $d_a$'s $-1$ vote to cancel $a_1'$'s vote, as it cannot be $+1$. Hence, we can use the $-1$ votes of $d_a$ and $d_b$ in $M'$ versus $N'$ to cancel the possible increase of $2$ in the vote of $a_1$, as the $-1$ vote of $d_b$ also does not need to be used to cancel the vote of $b_i$ now. 

   Finally, suppose that $vote_{a_1}^I(M,N)=0$, so $a_1$ was unmatched in both $M$ and $N$, but $\vote_{a_1}^{L_1}=+1$. Then, $(a_1,d_a)\in M'$, $(a_1',d_a)\notin M'$ and $(d_a,d_b)\notin M'$, so we can use the $-1$ vote of $d_a$ to cancel the increase of $1$ in the vote of $a_1$, as it is not used to cancel any other $+1$ votes (since in this case we have $\vote_{a_1'}^{L_1}(M,N)=0$).

   Therefore, we conclude that either $N'$ or $N''$ dominates $M'$ in the profile $L_1$, contradicting that $M'$ is certainly popular.
\end{claimproof}    
\end{proof}

Now we show that a popular matching of $I$ that satisfies the constraints gives a popular matching for any preference profile in $I''$.

\begin{lemma}
    If there is a popular matching $M$ in $I$ that covers $z$ and contains $(x,y)$, then there is a matching $M'$ that is popular with respect to any preference profile in $I''$.
\end{lemma}
\begin{proof}

Let $M$ be a matching that is popular in $I$, covers $z$ and contains $(x,y)$. Extend it to a matching $M'$, by adding the edges $\{ (d_a,d_b), (z',d_z) \}$. 

We claim that $M'$ is popular with respect to any possible preference profile $L$ in the instance $I''$. 

 Suppose for the contrary that there is a possible preference profile $L$, such that there exists a matching $N'$ that dominates $M'$ and let $N$ be the restriction of $N'$ to $G$. %We aim to show that $N$, the restriction of $N'$ to $G$ dominates $M$. 

Let $\eta = \sum_{v\in V(G)}\vote_v^I(M,N)\ge 0$, as $M$ is popular.
Note that $d_a$ and $d_b$ have their only possible best partner matched to them in $M'$, so they vote with $+1$, whenever they obtain a different partner, otherwise, they vote with $0$. For agent $d_z$, he can only vote with $-1$ if $(d_z,z)\in N'$, but in this case, $z'$ casts a $+1$ vote that cancels it. For agent $a_1'$, if he casts a vote of $-1$, then he must get a $d_a$ as a partner in $N'$, who then votes with $+1$ that cancels it. 

Therefore, if $\vote_v^L(M',N')\ge \vote_v^I(M,N)$ for each agent $v\in V(G)$, then $\Delta^L (M',N')\ge 0$, contradicting that $N'$ dominates $M'$. 

Suppose there is an agent $v\in V(G)$ that casts a smaller vote, so $\vote_v^L(M',N')< \vote_v^I(M,N)$. As each such agent has the same situation in both $M$ and $M'$, this is only possible if $v$ obtains a better partner in $N'$ than in $N$. 

Suppose this agent is $z$. Then, $z$ must be unmatched in $N$ and be matched to $d_z$ in $N'$, otherwise either he would have the same partners in both matchings or would not prefer $N'$ to $N$. However, we know that $z$ was matched in $M$, so he was matched to a better partner in $M'$ than $d_z$, so he casts the same $+1$ vote in both scenarios. 

Suppose this agent is $a_1$. This means that $a_1$ was unmatched in $N$, but is matched to some $d_a$ in $N'$. Then, $vote_{a_1}^I(M,N)=0$, as $a_1$ must be unmatched in $M$ too, since $(x,y)\in M$. Furthermore, if $\vote_{a_1}^{L_1}(M',N')=-1$ in the vote $M'$ versus $N'$, then this decrease of $1$ in $a_1$'s vote can be canceled with the $+1$ vote of his partner $d_a$'s in $N'$ (using that in this case $\vote_{a_1'}^I(M,N)=\vote_{a_1'}^{L_1}(M',N')=0$).

Suppose lastly that this agent is $b_i$, for some $i\in [l]$. Again, this means that $b_i$ is unmatched in $N$ but is matched in $N'$, otherwise $b_i$ has the same vote in both situations. However, this implies that he must be matched to $d_b$ in $N'$, and therefore his possible $-1$ vote can be canceled with the $+1$ vote of his partner $d_b$.

In summary, we concluded that the sum of votes must remain nonnegative in the vote $M'$ versus $N'$, contradicting the fact that $N'$ dominated $M'$.
\end{proof}
Finally, we note that it is trivial that if there is a matching that is certainly popular in $I''$, then it is also popular in both preference profiles $L_1$ and $L_2$ in $I'$. Therefore, the theorem follows from the previous two Lemmata. 
\end{proof}

%By the result of Cseh and Kavitha \cite{cseh2018popular}, we know that all dominant matchings of an instance $I$ are images of a stable matching in an extended instance $I'$, and also any stable matching of $I'$ gives a dominant matching in $I$, and furthermore the underlying graph in $I'$ only depends on the underlying graph of $I$. Also both instances have strict preferences. Furthermore, by the result of Aziz et al. \cite{aziz2016stable} we know that we can find a matching that is certainly stable in the uncertainty model in  polynomial time. Combining these two results, we get the following theorem.

%\begin{theorem}
%\label{thm:domun}
%    In an instance of \popun, we can decide in  polynomial time whether there is a matching that is dominant in all possible preference profiles. 
%\end{theorem}

Next we proceed to the hardness of \poprob.

\begin{theorem}
\label{thm:robustpop}
    \poprob\ is NP-complete, even for a fixed constant $k=1$. 
\end{theorem}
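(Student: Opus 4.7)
My plan is to reduce from \popf\ (Theorem~1), adapting the construction from the proof of Theorem~\ref{thm:popML}. The key observation driving the adaptation is that, between the two layers $L_1$ and $L_2$ of that reduction, the gadget vertex $d_z$ differs only by swapping the adjacent entries $z$ and $z'$, and $d_a$ differs only by swapping $a_1'$ and $a_1$, which are adjacent. If those were the only preferences that changed between layers, a single adversarial swap per agent would already mimic the effect of choosing between $L_1$ and $L_2$, and hence $k=1$ would suffice.

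To also absorb the remaining layer-2 difference---moving $y$ past the newly added $d_b$ in the list of each $b_i$ with $i\in[l]$---I modify the construction. Given an instance $I$ of \popf\ with forced edge $(x,y)$ and forced vertex $z$, I build an instance $I'$ of \poprob\ with $k=1$ on the same graph extended by $a_1',z',d_a,d_b,d_z$, giving each agent its $L_1$ preference list, except that for each $b_i$ I insert $d_b$ immediately after $y$ (rather than at the very end of $b_i$'s list). A single adjacent swap by $b_i$ of $y$ and $d_b$ then moves $d_b$ above $y$, which is exactly the feature of $L_2$ used in the layered dominating-matching argument.

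The correctness proof mirrors the two Lemmata of Theorem~\ref{thm:popML}. For completeness, a popular matching $M$ in $I$ covering $z$ and containing $(x,y)$ yields $M'=M\cup\{(d_a,d_b),(z',d_z)\}$, which I show is $1$-robust popular by replaying the layered case analysis, now taking the worst single-swap vote for each agent in place of the worse of the two layers. For soundness, a $1$-robust popular $M'$ is restricted to $M=M'\cap E(G)$, and each ``bad'' case ($z$ not covered, $(x,y)\notin M$, or $M$ not popular in $I$) is ruled out by specializing to the swap choice that triggers the corresponding dominating modification from the layered proof (e.g.\ the swap $z\leftrightarrow z'$ by $d_z$ for the $z$-not-covered case).

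The main obstacle is verifying that none of the remaining ``undesigned'' single swaps can produce a new dominating matching beyond those handled in the layered analysis---for instance, $b_i$ swapping $d_b$ with the entry immediately below it (moving $d_b$ further from $y$), or $d_a$ swapping $d_b$ with $a_1'$. This should reduce to a local case check at the gadget: the vertices $d_a,d_b,d_z$ receive their top partner in $M'$, so any swap that demotes the top choice leaves them content; and swaps that merely reshuffle entries in $b_i$'s list below $M'(b_i)$ cannot convert existing $0$-votes or $-1$-votes of $b_i$ into $-1$-votes favouring the adversary. The delicate part is to confirm exhaustiveness of this local check so that the adversary cannot combine innocuous-looking per-agent swaps into a non-trivial dominating deviation.
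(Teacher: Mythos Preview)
Your approach has a genuine gap in the completeness direction. You want to argue that if $M$ is popular in $I$ (with $(x,y)\in M$ and $z$ covered) then $M'=M\cup\{(d_a,d_b),(z',d_z)\}$ is $1$-robust popular in $I'$. But in the $1$-robust model the adversary may apply one adjacent swap in the preference list of \emph{every} agent, including all the original agents of $I$ whose lists you copied over unchanged. A single swap inside such a list can flip that agent's vote from $+1$ to $-1$ in a comparison $M$ versus some $N$ (whenever $M(v)$ and $N(v)$ happen to be adjacent in $v$'s list), and nothing in your gadget absorbs this loss: the gadget vertices all vote $0$ against $N'=N\cup\{(d_a,d_b),(z',d_z)\}$. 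So popularity of $M$ in $I$ simply does not transfer to $1$-robust popularity of $M'$. Your ``main obstacle'' paragraph lists only swaps touching gadget vertices, but the fatal swaps are the ones entirely inside $I$; you would need $M$ to already be $1$-robust popular in $I$, which is not guaranteed by the \popf\ hypothesis.

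The paper's proof takes a different route designed precisely to neutralise this issue. It reduces from \textsc{super-pm} (popularity with one-sided ties where each tied agent has a single tie of length two), and pads every strict preference list by inserting a fresh dummy agent between every two consecutive original entries. Consequently one swap can never change the relative order of two \emph{original} alternatives for a strict-list agent; the only agents whose comparison between their two real options can flip under one swap are exactly those who had a tie in the \textsc{super-pm} instance, which is the intended effect. If you want to salvage your \popf\ route you would need an analogous padding of all original lists, but then the forced-edge/forced-vertex gadget would also have to be rebuilt to interact correctly with the padding.
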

\begin{proof}
It is in NP by Theorem \ref{thm:2sideverif}.
    Let $k=1$.
    We reduce from the NP-complete problem \textsc{super-pm}. Let $I$ be an instance of \textsc{super-pm}, where each agent either has a strict preference list, or has only two neighbors that he puts in a single tie. By the reduction of \cite{csaji2023weaklypopular}, this is also NP-complete. 

    Let $G=(A,B,E)$ be the graph of \textsc{super-pm}, with $A=\{ a_1,\dots, a_n\}$ and $B=\{ b_1,\dots,b_m\}$. Suppose without loss of generality that $n\ge m$. Let $A'\subset A$ and $B'\subset B$ be the sets of agents of $A$ and $B$ with strict preference lists.
    
    We add four sets of agents $U=\{ u_1,\dots,u_n\} $, $U'=\{ u_1',\dots,u_n'\}$, $W=\{ w_1,\dots,w_m\}$ and $W'=\{ w_1',\dots, w_m'\}$. 

    Now, we extend the original preferences of each agent $a_i\in A$ with a strict preference list by inserting $u_1$ between his first and second choice, $u_2$ between his second and third choice, etc, and put the remaining $u_i$ agents to the bottom of the preference list in the order of their indices. For example, if the preference list of $a_i$ was $b_3\succ b_2\succ b_5$, then now it becomes $b_3\succ u_1\succ b_2\succ u_2\succ b_5 \succ u_3\succ \dots \succ u_n$. We extend the preferences of the agents in $B$ similarly, by inserting the ones from $W$. For the agents in $I$ with a tie, we just break the tie and not add any more agents to their preference lists.

    Finally, we create the preferences of the added new agents. For $u_i$, we let his preference list be $u_i'\succ u_{i+1}'\succ u_{i+2}'\succ \dots \succ u_{i-1}' \succ [A']$, where $i+1$ is taken as $i \; (mod \; n) \; +1$, and $[A']$ denotes an arbitrary strict ordering of the agents in $A'$. For $w_j$, we create it as $w_j'\succ w_{j+1}'\succ \dots \succ w_{j-1}'\succ [B']$, where $j+1$ is taken as $j \; (mod \; m)\; +1$. Finally, for $u_i'$, we have $u_i\succ u_{i+1}\succ u_{i+2}\succ \dots \succ u_{i-1}$ and for $w_j'$ we have $w_j\succ w_{j+1}\succ \dots \succ w_{j-1}$.

    \begin{claim}
        Suppose that there is a $1$-robust popular matching in $I'$. Then, there is a super popular matching in $I$.
    \end{claim}
    \begin{claimproof}
        Let $M'$ be a 1-robust matching. We claim that $\{ (u_i,u_i'), (w_j,w_j')\mid i\in [n],j\in [m]\} \subset M'$. Suppose that $(u_i,u_i')\notin M'$. Then, as they are mutual best, $x=M'(u_i)$ and $y=M'(u_i')$ exist, and hence none of them is with their best partner in $M'$. Then, $u_i,u_i',x$ and $y$ can all get matched to their favorite, so strictly better partners, and only the original partners of $x$ and $y$ get left alone in doing so, so $M'$ is not even popular, contradiction. Similar reasoning works for $(w_j,w_j')\notin M'$.

        Hence, each edge of $M=M'\setminus \{ (u_i,u_i'), (w_j,w_j')\mid i\in [n],j\in [m]\}$ must be an edge of $G$, so $M$ is a matching in $G$ and every $u\in A\cup B$ has the same partner in $M$ and $M'$. We claim that $M$ is super-popular. Suppose that some matching $N$ dominates it. Let $N'=N\cup \{ (u_i,u_i'), (w_j,w_j')\mid i\in [n],j\in [m]\}$. 

        We claim that each agent has the same vote when comparing $N$ to $M$; and $N'$ to $M'$.
For the agents in $U\cup U'\cup W \cup W'$, this is trivial, as their vote is always 0. For the agents in $A'\cup B'$, it is also clear, because between any two of their neighbors in $G$, there is an agent $u_i\in U$ or $w_i\in W$, so if they get a better or worse partner in $N$ than in $M$ in $I$, then in $I'$ this partner is at least $2$-better or $2$-worse, so definitely better or definitely worse respectively. For the agents who had a single tie, they had only two neighbors in $I$ and also only two neighbors in $I'$. If such an agent is matched in $M$ but is not matched in $N$ or is unmatched in $M$ but is matched in $N$ or he has the same partner (possibly no one) in both, then his votes in $I$ and $I'$ are the same. Otherwise, if he switches between his two partners, then he also votes with $-1$ in both $I$ by the definition of super-popularity and in $I'$ by the definition of 1-robust popularity. 

Hence, this implies that $N'$ dominates $M'$ in $I'$, contradiction.
    \end{claimproof}

    \begin{claim}
        If there is a super-popular matching in $I$, then there is a 1-robust matching in $I'$. 
    \end{claim}
    \begin{claimproof}
Let $M$ be a super-popular matching in $I'$. Create $M'=M\cup \{ (u_i,u_i'), (w_j,w_j')\mid i\in [n],j\in [m]\}$.

Suppose that $M'$ is not 1-robust popular and let $N'$ be a matching that dominates it. We claim that we may assume that $\{ (u_i,u_i'), (w_j,w_j')\mid i\in [n],j\in [m]\} \subset N'$. Suppose that each edge of $N'$ is either an edge of $G$, or an edge of type $(u_i,u_j')$ or $(w_i,w_j')$. In this case, the sum of votes on each edge $(u_i,u_j')$ and $(w_i,w_j')$ is at most 0, because if $i\ne j$, then at least one of them has a partner that is more than $1$-worse. So in this case, if we take the restriction of $N'$ to $G$ and add $\{ (u_i,u_i'), (w_j,w_j')\mid i\in [n],j\in [m]\}$, then it still dominates $M'$, but now this matching satisfies our assumption. 

So let us suppose that there is an edge $(u_i,a_j)\in N'$. Let the number of such edges be $\eta$. This means that there must be at least $\eta$ unmatched agents from $U'$ in $N'$. Also, in an $(u_i,a_j)$ edge, $u_i$ votes with $+1$ in $I'$. Hence, if we drop each edge adjacent to any $u_i\in U$ or $u_i'\in U'$ from $N'$ and add $\{ (u_i,u_i')\mid i\in [n]\}$, then the sum of the votes of the $\eta$ $a_i$ agents may increase by at most $2\eta$, but the sum of the votes in $U\cup U'$ decreases by at least $2\eta$ too. Hence, this matching also dominates $M'$. Similar arguments with assuming $(w_i,b_j)\in N'$ show that indeed we may assume that $\{ (u_i,u_i'), (w_j,w_j')\mid i\in [n],j\in [m]\} \subset N'$.

Therefore, let us take $N=N'\setminus \{ (u_i,u_i'), (w_j,w_j')\mid i\in [n],j\in [m]\}$. Then, each agent in $A\cup B$ has the same partners in $M$ and $M'$ and also in $N$ and $N'$. However, with the same reasoning as in the previous claim, we get that now each agent in $A\cup B$ has the same vote in $M$ versus $N$ in $I$ and in $M'$ versus $N'$ in $I'$. As each agent in $U\cup U'\cup W\cup W'$ votes with $0$, this implies that $N$ dominates $M$ in $I$, contradiction.

    \end{claimproof}

The theorem follows from the two claims. For $k>1$, a similar reduction works with more dummy agents, where we insert at least $k$ of them between any two entries in a preference list of an agent from $A'\cup B'$. 
\end{proof}

We conclude this section with the hardness of \popMLsum\ and \popunsum.

\begin{theorem}
\label{thm:popMLsum}
    \popMLsum\ and \popunsum\ are NP-complete even if only one side has uncertain preferences and a fixed constant $k=2$. \domMLsum\ and \domunsum\ are also NP-complete, even in the same setting. 
\end{theorem}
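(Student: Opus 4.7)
The plan is to reduce from \popties, the problem of finding a popular matching in a bipartite instance $J$ with one-sided ties, which is NP-hard by Cseh \cite{cseh2017onesidedties}. Given such an instance $J=(G,\succsim)$ with weak preferences only on side $A$, we build an instance of \popMLsum\ on the same graph $G$ with $k=2$ layers. In both layers, the strict preferences of every $b\in B$ are kept unchanged. For each $a\in A$, we break the ties in $\succsim_a$ according to some arbitrary fixed strict total order to form $\succ_a^1$ in layer $L_1$, and according to the exact reverse of that order to form $\succ_a^2$ in $L_2$. Thus only side $A$ has preferences varying across layers, satisfying the one-sided uncertainty restriction.

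The core observation is the per-agent vote identity
\[
\vote_u^{L_1}(v_1,v_2) + \vote_u^{L_2}(v_1,v_2) \;=\; 2\cdot \vote_u^{J}(v_1,v_2)
\]
for every agent $u$ and every pair of distinct neighbors $v_1,v_2$, where $\vote_u^{J}$ is the vote under the weak order $\succsim_u$ (returning $0$ for tied alternatives and $\pm 1$ otherwise). Pairs strictly ordered by $\succsim_u$ keep that strict order in both layers and contribute $\pm 2$ to the sum; pairs tied by $\succsim_u$ receive opposite strict orders in the two layers and cancel to $0$. Extending this to include the $\emptyset$ option (absent partners behave consistently across both layers) and summing over all agents yields $\Delta^{ML}(M,N) = 2\,\Delta^{J}(M,N)$ for any two matchings $M,N$. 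Hence $M$ is sum-popular in the constructed instance if and only if $M$ is popular in $J$, which gives the NP-hardness of \popMLsum; membership in NP follows from Theorem \ref{thm:2sideverif}.

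For \domMLsum, the very same scaling $\Delta^{ML}(M,N) = 2\,\Delta^{J}(M,N)$ preserves strict positivity, so sum-dominant matchings in the constructed instance correspond exactly to dominant matchings in $J$. It therefore suffices to observe that finding a dominant matching in the one-sided-ties setting is also NP-hard; this can be obtained through essentially the same reduction Cseh uses for \popties, since in that construction all candidate popular matchings saturate one side of the bipartition and hence have the same maximum size, forcing popularity and dominance to coincide. Finally, since \popMLsum\ and \popunsum\ (respectively \domMLsum\ and \domunsum) are equivalent via the identity $\Delta^{ML}(M,N)\ge 0 \Leftrightarrow \Delta^{un}(M,N)\ge 0$ noted in the preliminaries, all four problems are NP-complete in the stated one-sided, $k=2$ regime.

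The main obstacle I expect is the dominance half: one must either invoke the structure of Cseh's construction to identify popular and dominant matchings, or carefully verify that the tie-breaking construction, together with the $2\cdot$-scaling argument, transfers the hardness of dominant matchings in the ties model. Both the popular and dominant reductions are otherwise mechanical given the vote identity above.
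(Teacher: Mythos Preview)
Your proposal is correct and essentially identical to the paper's own proof: both reduce from \popties\ by breaking ties in two opposite ways to obtain the two layers, establish the per-agent identity $\vote_u^{L_1}+\vote_u^{L_2}=2\,\vote_u^{J}$, and handle the dominance case by appealing to the fact that in Cseh's reduction every popular matching is already maximum, so popularity and dominance coincide there. The only cosmetic difference is that the paper explicitly invokes the restricted form of \popties\ (strict lists or a single all-neighbors tie) from \cite{cseh2017popular}, which slightly simplifies the vote-identity verification, whereas your tie-breaking argument works for arbitrary ties as well.
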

\begin{proof}
It is in NP by Theorem \ref{thm:2sideverif}.

    We reduce from the NP-hard \popties\ \cite{cseh2017popular}, where we are given an instance $I$ of the popular matching problem with ties in the preference lists. We may assume that ties are only occurring on side $A$ and each agent has either a strict preference list or puts all his neighbors in a single tie by \cite{cseh2017popular}. 
    
    The reduction itself is quite simple. Let $I$ be an instance of \popties. 
    Create an instance $I'$ of \popMLsum\ as follows. 
    For each agent $u\in A\cup B$ with strict preference list $\succ_u$ we keep his preference list in both $L_1$ in $L_2$. For each $u\in A$ with a tied preference list, we break the ties in an arbitrary way to obtain his preference list in $L_1$, then break the ties in the opposite way and obtain his preference list in $L_2$. 

    We claim that for any matchings $M$, $N$, and any agent $u$ it holds that $2\cdot \vote_u^I(M,N)=\vote_u^{L_1}(M,N)+\vote_u^{L_1}(M,N)$. 

    For an agent $u\in A\cup B$ with a strict preference list, we have that $\vote_u^I(M,N)=\vote_u^{L_1}(M,N)=\vote_u^{L_2}(M,N)$. For an agent $u\in A$ with a single tie, if $u$ is not matched in $M$, but matched in $N$, we get that $\vote_u^I(M,N)=\vote_u^{L_1}(M,N)=\vote_u^{L_2}(M,N)=-1$. Similarly, if $u$ is matched in $M$, but not matched in $N$, we get that $\vote_u^I(M,N)=\vote_u^{L_1}(M,N)=\vote_u^{L_2}(M,N)=+1$. Otherwise, we have that $\vote_u^I(M,N)=0$ and that $\vote_u^{L_1}(M,N)=-\vote_u^{L_2}(M,N)$, because in this case, whenever $\vote_u^{L_1}(M,N)\ne 0$, we get that $u$ has different partners in $N$ and $M$, and by the construction of $u$'s preferences in $L_1$ and $L_2$, he votes with $+1$ in one case and $-1$ in the other. 

    Hence, we obtained that $2\cdot vote_u^I(M,N)=vote_u^{L_1}(M,N)+vote_u^{L_2}(M,N)$ for each $u\in A\cup B$. From this we obtain that $M$ is popular in $I$, if and only if it is sum-popular in $I'$.

    To show the statement of the theorem for \domMLsum\ and \domunsum\ we observe that the reduction of \cite{cseh2017popular} satisfies that if there is a popular matching, then it must be a maximum size matching. Hence, for the instances created in the reduction a matching is dominant if and only if it is popular, as there is no larger matching. Hence, the statement follows.
\end{proof}

\section{Conclusions}\label{sec:concl}

We studied popular and dominant matchings under four representations of preference variation: independent uncertainty, multiple layers, bounded swap perturbations, and aggregation. The results expose a sharp market-structure boundary. Popular-matching existence is NP-hard throughout the two-sided models, whereas all four one-sided variants are polynomial-time solvable; in the robust case this follows from a polynomial reduction to one-sided uncertainty. For aggregation, we extended the popular-common-base level algorithm of Kavitha et al.~\cite{yu2023arborescence} to bounded integral skew-symmetric comparison margins. The aggregate house-allocation model reduces to this extension, which handles any explicitly listed number of profiles despite the Condorcet cycles that arise from three profiles onward. For dominant matchings, uncertainty and robustness remain tractable through their connection with stable matchings, while multilayer and aggregated preferences lead to NP-hardness.

The weak-preference extension shows that ties do not destroy the one-sided tractability boundary. A layer-counting weight function enforces maximum first-choice matchings simultaneously, and a reachability construction identifies assignments that can serve as pseudo-second choices in every realization. This yields polynomial algorithms for uncertain and multilayer one-sided markets with ties without enumerating the exponentially many profiles encoded by independent uncertainty sets; the robust result follows through the same reduction to uncertainty. The integral-margin common-base algorithm also permits ties, since it requires only bounded integral skew-symmetric comparison margins.

For future work, one might consider optimization versions, such as finding matchings that are popular with highest probability, or consider other models of uncertainty.

\bibliographystyle{amsplain}
\bibliography{references}

\end{document}